\documentclass[11pt,letterpaper]{article}

\usepackage[margin=1.15in]{geometry}
\usepackage[T1]{fontenc}
\usepackage{lmodern}

\usepackage{microtype}
\usepackage{graphicx}
\usepackage{subfigure}
\usepackage{booktabs} 

\usepackage{amsfonts}
\usepackage{MnSymbol}
\usepackage{nicefrac}

\usepackage{tikz}
\usepackage{pgfplots}
\pgfplotsset{compat=1.18}
\usetikzlibrary{fpu} 

\let\PolishL\L
\usepackage{physics}
\usepackage{bm}
\usepackage{bbm}
\usepackage{xspace}
\usepackage{complexity}
\usepackage{mleftright}
\usepackage{xcolor}

\newcommand{\textred}[1]{{\color{red}#1}}

\mleftright

\def\vn{{\bm{n}}}

\def\vp{{\bm{p}}}
\def\vq{{\bm{q}}}

\def\vu{{\bm{u}}}
\def\vv{{\bm{v}}}
\def\vw{{\bm{w}}}
\def\vx{{\bm{x}}}
\def\vy{{\bm{y}}}

\renewcommand\vec\bm

\def\cK{{\mathcal{K}}}
\def\cL{{\mathcal{L}}}

\def\cX{{\mathcal{X}}}

\newcommand{\ftrl}{\texttt{FTRL}}
\newcommand{\md}{\texttt{MD}}
\newcommand{\mwu}{\texttt{MWU}}

\newcommand{\KKTgap}{\mathsf{KKTGap}}

\newcommand{\delimit}[3]{\newcommand{#1}[1]{\left#2##1\right#3}}
\delimit \ceil \lceil \rceil
\delimit \floor \lfloor \rfloor

\DeclareMathOperator*{\argmin}{argmin}
\DeclareMathOperator*{\argmax}{argmax}

\renewcommand{\R}{\mathbb R}
\newcommand{\N}{\mathbb N}

\newcommand{\defeq}{\coloneqq}

\let\L\PolishL

\usepackage{amsthm}
\usepackage{amsmath}
\usepackage{mathtools}

\usepackage{thm-restate}

\newcommand{\CLS}{\class{CLS}}

\usepackage{natbib}

\usepackage{hyperref}
\definecolor{linkblue}{HTML}{4F8FBA}
\hypersetup{colorlinks, citecolor=linkblue, linkcolor=linkblue, urlcolor=linkblue}
\usepackage[capitalize,noabbrev,nameinlink]{cleveref}
\theoremstyle{plain}
\newtheorem{theorem}{Theorem}[section]

\newtheorem{lemma}[theorem]{Lemma}
\newtheorem{claim}[theorem]{Claim}
\newtheorem{corollary}[theorem]{Corollary}
\theoremstyle{definition}
\newtheorem{definition}[theorem]{Definition}

\theoremstyle{remark}

\crefname{claim}{claim}{claims}
\Crefname{claim}{Claim}{Claims}

\title{Follow the Regularized Leader Does Not Converge\\ in Constrained Optimization}

\author{%
  Ioannis Anagnostides\thanks{Ioannis Anagnostides and Nikolas Patris contributed equally.}\textsuperscript{\phantom{*},}\thanks{Carnegie Mellon University. \texttt{ianagnos@cs.cmu.edu}, \texttt{sandholm@cs.cmu.edu}}
  \and Ioannis Panageas\thanks{University of California, Irvine. \texttt{ipanagea@uci.edu}, \texttt{npatris@uci.edu}}
  \and Nikolas Patris\footnotemark[1]\textsuperscript{\phantom{*},}\footnotemark[3]
  \and Tuomas Sandholm\footnotemark[2]\textsuperscript{\phantom{\textdagger},}\thanks{Additional affiliations of Tuomas Sandholm: Strategy Robot, Inc.; Strategic Machine, Inc.; Optimized Markets, Inc.}
}

\date{}

\begin{document}

\pagenumbering{gobble}
\maketitle

\begin{abstract}
Follow the regularized leader (FTRL) is a foundational algorithm in online learning whose regret properties have been extensively studied for decades. However, in stark contrast to mirror descent---and gradient descent in particular---the convergence of FTRL to first-order stationary points in constrained nonconvex optimization was not well understood. In this paper, we show that continuous-time FTRL can fail to converge \emph{even asymptotically} in infinitely smooth constrained optimization problems: its Karush-Kuhn-Tucker (KKT) gap can remain bounded away from zero indefinitely. This is especially surprising in light of the fact that continuous-time FTRL guarantees a monotonic improvement of the objective value. As a result, we establish that non-convergent behavior of no-regret dynamics---which besets general variational inequality problems---can occur even in potential systems. From a technical standpoint, our construction relies on an infinitely differentiable function where the gradient flow dynamics exhibit decelerating oscillations without ever converging pointwise, which we in turn embed into higher-dimensional FTRL dynamics that sustain a perpetually large KKT gap. We complement our negative result by showing that FTRL with Euclidean regularization \emph{does} converge when the objective function is \emph{analytic}. This shows that the analytic property plays a more fundamental role for FTRL than for gradient flow: it governs not only pointwise convergence, but convergence to first-order stationarity itself.
\end{abstract}

\clearpage
\pagenumbering{arabic}
\section{Introduction}

\emph{Follow the regularized leader} ($\ftrl$) is a premier online optimization algorithm that has shaped modern machine learning~\citep{Kalai05:Efficient,Shalev-Shwartz12:Online,Hazan16:Introduction}. The regret properties of $\ftrl$---and variations thereof such as follow the perturbed leader~\citep{Kalai05:Efficient}---have engendered a vibrant area of research in the last two decades, and are by now well understood. On top of establishing minimax optimal regret bounds, $\ftrl$ provides a broad, unifying framework, encompassing celebrated algorithms such as multiplicative weights update~\citep{Littlestone94:Weighted,Arora12:Multiplicative}. As such, $\ftrl$ has driven empirical breakthroughs in complex domains, serving as the engine for large-scale click-through rate (CTR) prediction~\citep{McMahan13:Ad}, and being at the heart of AI milestones in solving zero-sum games~\citep{Moravvcik17:DeepStack,Brown17:Superhuman,Brown19:Superhuman,Farina21:Faster,Zhang26:General,Sokota26:Scalable}.

However, despite the tremendous amount of interest $\ftrl$ has received, its convergence properties as a first-order optimization algorithm have remained poorly understood. This is in sharp contrast to (online) mirror descent ($\md$)~\citep{Memirovskii83:Problem}, another seminal class of algorithms in online learning. Indeed, different incarnations of $\md$ algorithms are known to converge to first-order stationary points in time polynomial in $1/\epsilon$, where $\epsilon$ is the stationarity gap. For example, the convergence of gradient descent---which can be viewed as $\md$ endowed with Euclidean regularization---even in nonconvex problems is a cornerstone in optimization theory~\citep{Nesterov04:Introductory}. This raises the following fundamental question:
\begin{quote}
    \centering
    \emph{Does $\ftrl$ converge to first-order stationary points in constrained optimization?}
\end{quote}
For unconstrained problems, there is a well-known equivalence between $\ftrl$ and $\md$, which holds more generally even in the constrained setting under suitable assumptions on the regularizer (\emph{e.g.},~\citealp{McMahan11:Follow}). This implies that $\ftrl$ inherits the convergence of $\md$ in such settings. However, the general question beyond such special cases has remained wide open despite decades of research on the behavior of $\ftrl$. Recent work by~\citet{Anagnostides26:Doubly} established an exponential lower bound for $\ftrl$ in \emph{potential games}, which can be viewed as a special case of constrained (nonconvex) optimization problems. Potential games have played a central role in the development of game theory since the pathbreaking work of~\citet{Rosenthal73:Class}, and the convergence of $\ftrl$ to Nash equilibria---the natural counterpart of first-order stationary points in the case of games---has received considerable attention (we elaborate on related work in~\Cref{sec:related}). While that result exposes severe inefficiency of $\ftrl$, it leaves open the possibility of eventual convergence. 

In this paper, we take a significant step further: we show that $\ftrl$ can fail to converge \emph{even asymptotically} in constrained optimization problems, revealing a more fundamental separation between $\ftrl$ and $\md$.

\subsection{Our results}

In more detail, we examine the convergence of continuous-time $\ftrl$ in constrained optimization problems. We focus on the continuous-time version of $\ftrl$ because it always has the potential property (\Cref{lem:pot-prop}), in that it can never decrease the value of the potential function (which is to be maximized). As a result, one would expect that the stationarity gap of the dynamics would asymptotically be driven to zero; this would indeed be in line with prior results showing that $\mwu$---an incarnation of $\ftrl$---converges to Nash equilibria in (generic) potential games~\citep{Kleinberg09:Multiplicative,Palaiopanos17:Multiplicative}. On the other hand, the discrete-time version of $\ftrl$ only has the potential property when the learning rate is sufficiently small~\citep{Anagnostides26:Doubly}; indeed, $\ftrl$ with a large learning rate is known to exhibit a wide spectrum of non-convergent behavior, such as Li-Yorke chaos~\citep{Palaiopanos17:Multiplicative,Bielawski21:Follow}; those results do not apply in our setting because we examine the limiting case where the learning rate tends to zero.

As alluded to, the notion of convergence we consider is measured in terms of the \emph{Karush-Kuhn-Tucker (KKT)} gap (\Cref{def:KKTgap}), which is the standard notion of first-order suboptimality in optimization. This is weaker than pointwise convergence; as we shall see, dynamics such as gradient flow always converge in terms of the KKT gap but can still admit non-trivial limit sets.

In this context, our main result is summarized below.

\begin{theorem}
    \label{thm:main-informal}
    There exists an infinitely differentiable function $\Phi \in C^{\infty}$ on a convex and compact domain such that for any time $t$, the KKT gap of $\ftrl$ with Euclidean regularization on $\Phi$ is at least some constant bounded away from $0$. This holds even in constant dimensions.
\end{theorem}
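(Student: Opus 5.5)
The plan is to build $\Phi$ on a product domain $\cX = B \times [0,1]^2$. Here $B \subset \R^2$ is a large disk carrying a "base" variable $\vx$, and $(z_1,z_2)$ are two auxiliary box coordinates. Recall that continuous-time $\ftrl$ with Euclidean regularization reads $\dot\vy = \nabla \Phi(\vx)$, $\vx = \Pi_{\cX}(\vy)$. As long as the base variable stays in the interior of $B$, its component evolves exactly by gradient flow of $\Phi$ in $\vx$.

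\textbf{Step 1 (base function).} First, I construct a $C^\infty$ function $f:\R^2\to\R$ whose gradient flow spirals onto the unit circle. The angular speed decays, and the angle $\theta(t)$ increases to $+\infty$, but the time to complete the $k$-th revolution grows without bound, say like $T_k \to \infty$. I would use the classical Mexican-hat construction: take $f(r,\theta) = -e^{-1/(1-r^2)}\bigl(1-\sin(\theta - 1/(1-r^2))\bigr)$-type profiles, flattened so that all derivatives vanish on the circle. Both the deceleration and the non-convergence have to be checked explicitly along the flow.

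\textbf{Step 2 (embedding).} Next, I set $\Phi(\vx,\vec z) = f(\vx) + \epsilon\,\bigl(g_1(\vx) z_1 + g_2(\vx) z_2\bigr)$ with $g_1 = \cos\theta\cdot\chi(r)$ and $g_2 = \sin\theta\cdot\chi(r)$. The cutoff $\chi$ is smooth and equal to $1$ near the unit circle, and $\epsilon>0$ is small. The dual coordinate then satisfies $y_{z_i}(t) = y_{z_i}(0) + \epsilon\int_0^t g_i(\vx(s))\,ds$. Since the angle lingers for time of order $T_k$ in each half-plane, the increments of $y_{z_i}$ over each half-revolution grow unboundedly. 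So $y_{z_i}$ swings from far above $1$ to far below $0$, and $z_i = \Pi_{[0,1]}(y_{z_i})$ spends most of each half-cycle pinned at the wrong face. Concretely, $z_1 = 1$ persists well after $\cos\theta$ has turned negative, because $y_{z_1}$ must first travel back down through the large excess it accumulated.

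\textbf{Step 3 (KKT gap lower bound).} With the two phase-shifted coordinates, I aim to show that at every large time $t$ some $i$ has $|g_i(\vx(t))| \ge 1/\sqrt{2}\cdot$ const and either $z_i$ is interior or $z_i$ sits at the face opposite to the sign of $g_i$. In both cases the KKT gap of \Cref{def:KKTgap} in coordinate $z_i$ is at least $c\epsilon$. The quadrant argument is as follows. When $\theta$ is near a multiple of $\pi/2$, one of $|\cos\theta|, |\sin\theta|$ is near $1$. The corresponding $y_{z_i}$ changed sign of drift a quarter revolution earlier, and because of the overshoot accumulated over the preceding half revolution it has not yet returned. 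A transient initial phase is handled by taking the infimum over $t$ of a positive continuous function on a compact initial interval.

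\textbf{Main obstacle.} The hard step is controlling the feedback of $\vec z$ on $\vx$. The $\vx$-dynamics is really the gradient flow of $f + \epsilon\sum_i g_i z_i(t)$, a time-varying perturbation, so I must show it still spirals with unboundedly growing revolution times and does not get trapped. My first attempt is to make $f$ so that the decelerating spiral is robust: the angular component of $\nabla f$ dominates $\epsilon|\nabla g_i|$ away from a shrinking neighborhood of the circle. I would also exploit the monotone potential property (\Cref{lem:pot-prop}), together with boundedness of $\Phi$, to rule out the base variable escaping or converging. If this proves too delicate, a fallback is to make $g_i$ depend on $\theta$ only through a slowly varying phase, so that the perturbation is tangential and merely reparametrizes time along the spiral.
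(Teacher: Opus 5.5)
The gap is exactly the step you label the main obstacle, and neither of your remedies can close it. Your base trajectory is supposed to accumulate on the unit circle, where $f$ is flat, so $\nabla f(\vx(t))\to 0$. The coupling, by contrast, contributes $\epsilon\sum_i z_i\nabla g_i$, which does not vanish unless $\theta$ points along $\vec z$. Your mechanism deliberately keeps $\vec z$ away from $0$ (pinned at the face $1$) for long stretches. Near the circle the base variable then essentially follows the gradient flow of $\epsilon(z_1\cos\theta+z_2\sin\theta)$. The angular equation $\dot\theta\approx\epsilon(z_2\cos\theta-z_1\sin\theta)$ pulls $\theta$ toward the direction of $\vec z$ rather than letting it keep rotating. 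At the same time, $(\dot y_{z_1},\dot y_{z_2})=\epsilon(\cos\theta,\sin\theta)$ pushes $\vec z$ further in that same direction. Nothing in your argument rules out locking into a configuration where the tangential gradient vanishes and $(\cos\theta,\sin\theta)$ lies in the normal cone of $[0,1]^2$ at $\vec z$, i.e., a KKT point. So the hoped-for domination goes the wrong way: a fixed $\epsilon$ eventually dominates a flat $f$. An $O(\epsilon)$ tangential term also cannot ``merely reparametrize time'' when the unperturbed angular speed tends to zero. It decides whether rotation continues at all, and with it the dwell-time structure you rely on.

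The overshoot picture also conflicts with the potential property. Because $\cX$ is a product, firm nonexpansiveness applied blockwise (as in the proof of \Cref{lem:pot-prop}) shows each block contributes nonnegatively to $\frac{d}{dt}\Phi$. The $z_1$-block contributes $\epsilon^2g_1^2$ while $y_{z_1}\in(0,1)$. Over any traversal of $z_1$ between faces, $\int\epsilon g_1\,dt=\pm1$, so $\Phi$ rises by at least $\epsilon\min|g_1|$ over that stretch. Since $\Phi$ is bounded, only finitely many traversals with $|g_1|\ge c$ can occur. Yet in your argument the excess of $y_{z_1}$ is comparable to a half-revolution's worth of $\epsilon\int\cos\theta\,dt$, so it crosses $[0,1]$ at a phase bounded away from the zeros of $\cos\theta$.

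The missing idea is the paper's: never let an auxiliary coordinate leave its face. If every $x_{3+i}$ stays at $0$, the coupling $x_{3+i}q_i^{(m)}(x_2)$ has zero gradient in $(x_1,x_2)$. Then $\ftrl$ on the base is \emph{exactly} gradient flow, and there is no feedback to control. The KKT gap comes instead from a coordinate sitting at $0$ whose instantaneous gradient $q_i^{(m)}(x_2(t))$ is positive while its cumulative gradient is still nonpositive. Symmetric gates such as $\cos\theta,\sin\theta$ cannot achieve this: under deceleration the alternating cumulative integral is dominated by its latest half-revolution and turns positive. What is needed is:
\begin{itemize}
\item many overlapping gates, each positive only on a short phase window and uniformly negative elsewhere;
\item the dwell-time bound $T^{(m)}_{k,i}\le 16\cdot2^{-m}Q^{(m)}_{k,i}$;
\item the growth bound $T^{(m)}_{k+1,i}\le e^2T^{(m)}_{k,i}$;
\item the term $-C\sum_i x_{3+i}\psi(x_1)$ to cover the first oscillations.
\end{itemize}
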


This result contributes to the fundamental understanding of the convergence of learning algorithms---and $\ftrl$ in particular---in nonconvex constrained optimization problems, establishing that $\ftrl$ is inherently incapable of serving as a first-order optimizer.

Before we provide a detailed overview of our construction, we first highlight the intuition behind it. First-order stationarity is determined by the \emph{instantaneous} gradient, whereas $\ftrl$ chooses its strategy using the \emph{cumulative} gradient. As a result, for $\ftrl$ to act upon a positive gradient, it must first offset the negative gradients accumulated earlier. Our construction sustains this mismatch between instantaneous and cumulative gradient indefinitely, which explains how $\ftrl$ can keep improving the objective while failing to approach first-order stationarity. In contrast, projected gradient flow moves immediately along a feasible direction with positive instantaneous gradient.

\paragraph{Proof overview} The starting point of our construction concerns (continuous-time) gradient flow. Specifically, we rely on an instance in which gradient flow does not converge pointwise---but does converge in terms of the KKT gap. The fact that this phenomenon can occur is well known in the theory of dynamical systems~\citep{Palis12:Geometric}; to foreshadow our main positive result, we recall that---by virtue of the Łojasiewicz inequality~\citep{Lojasiewicz84:Trajectoires,Absil05:Convergence}---this can only happen under a non-analytic (but infinitely differentiable) function. In more detail, we first show that gradient flow follows the trajectory shown in~\Cref{fig:sinewave}, governed by $x_2 = \sin \left( \frac{\pi x_1}{1 - x_1} \right)$ (\Cref{lem:trapping}), thereby approaching the limit set $\{1\} \times [-1, 1]$. As time goes by, the dynamics in this instance move ever so slowly as the KKT gap approaches zero. The stalling of the dynamics notwithstanding, a critical property for the purpose of our construction is that the dynamics do not decelerate too quickly (\Cref{lem:x2_level_dwell_time}).

\begin{figure}
    \centering
    \scalebox{0.8}{\begin{tikzpicture}
    \begin{axis}[
        axis lines = middle,
        xlabel = {$x_1$},
        ylabel = {$x_2$},
        domain = 0:0.92,
        samples = 2000,
        width = 12cm,
        height = 7cm,
        grid = major,
        ymin = -1.2, ymax = 1.2,
        xmin = 0, xmax = 1,
        xtick = {1}, 
        xticklabel style={xshift=6pt, yshift=10pt}, 
        colorbar,
        colorbar style={
            title={$\frac{\partial \Phi}{\partial x_3} $}, 
            ytick={0},
            yticklabels={0}
        },
        colormap/viridis,
    ]
        \addplot [
            thick,
            mesh,
            point meta={sin(deg(pi*x/(1-x))) - 0.7},
        ] {sin(deg(pi*x/(1-x)))};
        \draw [red, dashed] (axis cs:1,-1.2) -- (axis cs:1,1.2);
    \end{axis}
\end{tikzpicture}}
    \caption{Schematic illustration of the trajectory of $\ftrl$ on the curve $x_2 = \sin\left( \frac{\pi x_1}{1 - x_1} \right)$. The color indicates the gradient observed on the $x_3$ axis, which is perpendicular to the page. Our argument shows that $\ftrl$ never acts upon the positive gradient on the $x_3$ axis.}
    \label{fig:sinewave}
\end{figure}

Our next step is to embed this instance into $\ftrl$ dynamics executed on a higher-dimensional space. The upshot is that we will force $\ftrl$ to always move on the $(x_1, x_2)$ plane, where the potential function is essentially flat when $x_1$ approaches $1$. At the same time, after a short transient phase, there will always be a direction perpendicular to the $(x_1, x_2)$ plane in which $\ftrl$ will observe a large KKT gap, but will still never act upon it. To realize this counterintuitive behavior, we maintain the invariance that the $x_1$ and $x_2$ coordinates of $\ftrl$ remain in the interior throughout, whereas the rest of the coordinates always remain inactive, in that they are stuck at zero. For the active coordinates, namely $(x_1, x_2)$, it is easy to show that $\ftrl$ is equivalent to gradient flow (\Cref{lemma:GF-FTRL}), matching the trajectory shown in~\Cref{fig:sinewave}. The more interesting claim is that at least one of the rest of the coordinates will observe a large KKT gap. The reason why this happens is because the \emph{cumulative} gradient observed by each inactive coordinate will be negative throughout, even though for a constant fraction of the time the instantaneous gradient is positive. This is where the oscillating pattern in the $(x_1, x_2)$ plane becomes critical: the key idea is to create a large negative gradient in some region of the space that neutralizes the positive gradient in aggregate (\Cref{fig:sinewave}).

To arrive at~\Cref{thm:main-informal}, we need to tie two loose ends. First, we need to guarantee that \emph{for any time} the KKT gap of $\ftrl$ is at least some constant bounded away from zero. To do so, we lift not just to three dimensions, as shown in~\Cref{fig:sinewave}, but to higher dimensions so that there is always a direction of improvement. Second, to argue about the basis of the induction---that movement only occurs on the $(x_1, x_2)$ plane---we incorporate an additional term in the objective value that induces large negative gradients in the beginning of the trajectory, but vanishes as $x_1 \to 1$. 

These steps are formalized and described in more detail in~\Cref{sec:gradient_flow,sec:ftrl}.

\paragraph{Convergence for analytic objectives} Finally, we show that the distinction between smooth and analytic objectives is fundamental for the convergence of $\ftrl$, establishing a positive counterpart to~\Cref{thm:main-informal}.

\begin{theorem}
    \label{thm:analytic-informal}
    For any analytic objective on a compact polytope, $\ftrl$ converges to a KKT point. In particular, these dynamics converge to a Nash equilibrium in finite potential games.
\end{theorem}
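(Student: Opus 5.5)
We would prove \Cref{thm:analytic-informal} by a Łojasiewicz-type argument adapted to the cumulative-gradient structure of $\ftrl$. Continuous-time $\ftrl$ with Euclidean regularization on a polytope $\cX$ reads $\vy(t) = \int_0^t \nabla \Phi(\vx(s))\, ds$ and $\vx(t) = \Pi_{\cX}(\eta \vy(t))$, with $\Pi_{\cX}$ the Euclidean projection.

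\textbf{Step 1: finite length and pointwise convergence.} By the potential property (\Cref{lem:pot-prop}), $\Phi(\vx(t))$ is nondecreasing and bounded, so it converges to some $\Phi^\star$. The projection onto a polytope is piecewise affine. On each piece, the trajectory lives in the relative interior of a face $F$, and there $\dot{\vx}$ is the projection of $\eta\nabla\Phi$ onto the linear span of $F$. Hence on each face $\ftrl$ is projected gradient flow restricted to the affine hull of $F$, and $\frac{d}{dt}\Phi(\vx) = \eta^{-1}\|\dot{\vx}\|^2$.

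Since $\Phi$ restricted to an affine subspace is still analytic, the Łojasiewicz gradient inequality holds on each face near any limit point: $|\Phi^\star - \Phi(\vx)|^{1-\theta} \le C \|\nabla_F \Phi(\vx)\|$. Because there are finitely many faces, the constants can be taken uniform in a neighborhood of a limit point. The standard argument of \citet{Absil05:Convergence}, applied to $\frac{d}{dt}(\Phi^\star-\Phi)^\theta$, then gives $\int \|\dot{\vx}\|\,dt < \infty$, and so $\vx(t) \to \vx^\star$.

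One subtlety is that face switches are allowed. The integrand estimate holds piece by piece, and the potential $(\Phi^\star-\Phi)^\theta$ is continuous across switches, so the pieces telescope. We would handle possible Zeno behavior by noting that the estimate is almost-everywhere in $t$.

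\textbf{Step 2: from pointwise convergence to KKT.} This is the genuinely $\ftrl$-specific step, and we expect it to be the main obstacle, since \Cref{thm:main-informal} shows it fails for merely smooth objectives.

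Let $\vg = \nabla\Phi(\vx^\star)$. Since $\vx(t)\to\vx^\star$, we have $\vy(t) = t\vg + o(t)$. Let $F^\star$ be the minimal face containing $\vx^\star$, with normal cone $N(\vx^\star)$. We have $\eta\vy(t) - \vx(t) \in N(\vx(t))$, and for $t$ large $N(\vx(t)) \subseteq N(\vx^\star)$, because the trajectory eventually lies in faces containing $\vx^\star$. Dividing by $t$ and letting $t\to\infty$, closedness of the cone gives $\vg \in N(\vx^\star)$, which is exactly the KKT condition at $\vx^\star$.

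The point is that pointwise convergence forces the \emph{cumulative} and \emph{instantaneous} gradients to align asymptotically. This is precisely what the oscillating non-analytic construction of \Cref{thm:main-informal} avoids. Convergence of the KKT gap then follows by continuity of $\nabla\Phi$ together with the fact that the normal cones near $\vx^\star$ are contained in $N(\vx^\star)$.

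\textbf{Step 3: potential games.} A finite potential game has a multilinear, hence analytic, potential on a product of simplices, which is a polytope. KKT points of this potential coincide with Nash equilibria. The main remaining care is making the Łojasiewicz constants uniform across faces in Step 1.
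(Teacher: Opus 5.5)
Your proposal is correct and follows essentially the paper's route. You use the face-wise speed bound $\|\dot{\vx}\|_2\geq\|\nabla_{F(\vx)}\Phi(\vx)\|_2$, a \L{}ojasiewicz inequality made uniform over the finitely many faces, and the resulting finite trajectory length. You then obtain stationarity of the limit by passing $\vy(t)/t\to\nabla\Phi(\vx^\star)$ through the normal-cone characterization of the projection.

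The differences are minor. You state the speed relation as an almost-everywhere equality via the piecewise-affine structure of $\Pi_{\cX}$, where the paper proves the inequality through the critical cone. You localize \L{}ojasiewicz near an accumulation point with the usual trapping argument, whereas the paper proves it on the whole level set $\{\Phi=\ell\}$. Finally, you invoke $N_{\cX}(\vx(t))\subseteq N_{\cX}(\vx^\star)$, while the paper simply passes to the limit in $\langle \vy(t)-\vx(t),\vx'-\vx(t)\rangle\leq 0$.
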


In a nutshell, the proof---discussed at greater length in~\Cref{sec:analytic}---combines the geometry of Euclidean projection onto a polytope with a tailored \L{}ojasiewicz gradient inequality. In fact, our argument establishes pointwise convergence, not just convergence to the set of KKT points.

\subsection{Further related work}
\label{sec:related}

We conclude this section by highlighting some key lines of research closely related to our paper.

\paragraph{Learning in potential games} The convergence of learning dynamics in potential games has been a fruitful area of research. Potential games can be viewed as the special case of constrained optimization where the potential function is multilinear (\Cref{sec:constrainedopt}). In that class of games, as discussed, first-order stationary points amount to Nash equilibria~\citep{Nash50:Equilibrium}. Many natural learning algorithms have been analyzed in potential games~\citep{Monderer96:Potential,Candogan13:Dynamics,Heliou17:Learning,Hart03:Regret,Panageas23:Exponential,Anagnostides26:Convergence,Palaiopanos17:Multiplicative,Kleinberg09:Multiplicative,Maheshwari25:Independent,Marden07:Regret,Zilberstein26:Distributed}. It is worth clarifying that these results concern specifically (last-)iterate convergence to Nash equilibria; a well-known fact is that the \emph{average} correlated distribution induced by any no-regret dynamics converges to the set of coarse correlated equilibria~\citep{Moulin78:Strategically}, which is a weaker solution concept---not least in terms of worst-case welfare~\citep{Blum08:Regret}. Our main result does not apply to finite potential games because there the potential function is a polynomial, and thus analytic---so gradient flow always converges pointwise~\citep{Lojasiewicz84:Trajectoires,Absil05:Convergence}. For Euclidean regularization, we establish convergence to a single KKT point for analytic objectives on compact polytopes (\Cref{thm:analytic-convergence}), and hence convergence to Nash equilibrium in every finite potential game (\Cref{cor:analytic-games}). A key reference point is the result of~\citet{Kleinberg09:Multiplicative}, which showed that continuous-time multiplicative weights update---the special case of $\ftrl$ under entropic regularization---converges to the set of \emph{weakly stable} Nash equilibria in congestion games from almost every initialization; for almost every choice of congestion costs, all such equilibria are pure. Also, \citet{Anagnostides26:Doubly} showed that a lazy version of $\ftrl$---in which a step is only taken if it is bound to improve the potential by a sufficient amount---always converges when the potential function is smooth. Lazy updates preclude our counterexample as the dynamics would eventually stop moving in the direction where the potential function is essentially flat.

\paragraph{FTRL and sparsity} Beyond regret guarantees, $\ftrl$ offers an advantage relative to $\md$ in promoting sparsity~\citep{Xiao10:Dual,McMahan11:Follow}. For example, this practical benefit has been demonstrated in large-scale click-through rate prediction~\citep{McMahan13:Ad}. Intuitively, $\ftrl$ can result in sparser solutions because it allows coordinates to remain inactive despite occasional gradients favoring activation. Our result exposes a limitation of this persistence in constrained nonconvex optimization.

\paragraph{Complexity of constrained optimization} From a complexity standpoint, computing a first-order stationary point of a smooth function is complete for the class $\CLS$, which stands for continuous local search~\citep{Daskalakis11:Continuous,Fearnley22:Complexity}. This result holds even in two dimensions and provides strong evidence that no polynomial-time algorithm exists. Related hardness results have been established for computing Nash equilibria in potential games~\citep{Babichenko20:Communication,Babichenko21:Settling}. Indeed, while projected gradient descent converges in such problems, the number of iterations need not be polynomially bounded in the input encoding length. In contrast, our result concerning $\ftrl$ applies even when both the number of dimensions and the stationarity gap are absolute constants.

\paragraph{Nonconvergence in games} From a broader standpoint, our main result shows that, surprisingly, natural no-regret dynamics can exhibit nonconvergent oscillations even when there is an underlying potential. It is well known that learning dynamics can fail to converge in general-sum games; for example, \citet{Daskalakis10:Learning} gave an example where multiplicative weights update can fail to converge to Nash equilibria, and there is a rich literature on this subject (\emph{e.g.},~\citealp{Mertikopoulos18:Cycles,Vlatakis20:No}). Our work significantly refines those prior results by reconciling such counterexamples with the potential property, albeit applying beyond finite games. On the other hand, it is worth pointing out that FTRL has been shown to converge in more structured classes of problems, such as bargaining games~\citep{Kamp25:Last}.

\paragraph{Forgetfulness and speed of convergence} Finally, \citet{Cai24:Fast} established a separation between \emph{optimistic} variants of $\ftrl$ and $\md$ in zero-sum games. Their main finding, that fast convergence requires forgetfulness, mirrors our main result. We take a step further by showing that $\ftrl$ does not merely suffer from slow convergence, but can fail to converge altogether. Also, our result concerns a fundamentally different class of problems and algorithms compared to~\citet{Cai24:Fast}. In fact, \citet{Orabona26:Last} recently showed that optimistic multiplicative weights \emph{does} converge in convex-concave problems.
\section{Preliminaries}

Before we dive into our analysis, in this section, we provide the necessary background on constrained optimization problems (\Cref{sec:constrainedopt}) and follow the regularized leader (\Cref{sec:prels-ftrl}).

\subsection{Constrained optimization and KKT gap}
\label{sec:constrainedopt}

We begin by introducing the underlying class of problems in which we are interested. We let $\cX$ be a convex and compact subset of a Euclidean space, and let $\Phi : \cX \to \R$ be a function to be maximized. $\Phi$ will generally be nonconcave, making global optimization intractable. Instead, the main goal will be to compute a \emph{first-order stationary point} of $\Phi$, in a sense formally specified below (\Cref{def:KKTgap}). $\Phi$ is assumed to be \emph{smooth}, meaning that there exists a parameter $L > 0$ such that $\| \nabla \Phi(\vx) - \nabla \Phi(\vx') \|_2 \leq L \|\vx - \vx' \|_2$ for any $\vx, \vx' \in \cX$. In fact, as we shall see, our lower bound applies even when $\Phi \in C^\infty$, which is to say that the $k$th derivatives of $\Phi$ are continuous for any $k \in \N$.

We measure local suboptimality in terms of the \emph{Karush-Kuhn-Tucker (KKT)} gap~\citep{Boyd04:Convex}. There are multiple equivalent ways (up to polynomial factors) of defining the KKT gap; in this paper, we make use of the following definition.

\begin{definition}[KKT gap]
    \label{def:KKTgap}
    For a function $\Phi \in C^1$ defined on a convex and compact domain $\cX$, the \emph{KKT gap} of a point $\vx \in \cX$ is defined as
    \begin{equation}
        \label{eq:KKTgap}
        \KKTgap(\vx) \defeq \max_{\vx' \in \cX} \langle \vx' - \vx, \nabla \Phi(\vx) \rangle.
    \end{equation}
\end{definition}

An equivalent definition is in terms of the fixed-point gap of the gradient ascent operator $\vx \mapsto \Pi_{\cX}( \vx + \eta \nabla \Phi(\vx))$, where $\Pi_{\cX}(\cdot)$ denotes the Euclidean projection~\citep{Facchinei03:Finite}.

We will be interested in the case where the constraint set $\cX$ can be decomposed as the Cartesian product $\cX \defeq \cX_1 \times \dots \times \cX_n$. In that case, the KKT gap per~\eqref{eq:KKTgap} of a point $\vx = (\vx_1, \dots, \vx_n)$ can be equivalently expressed as $\sum_{i=1}^n \max_{\vx'_i \in \cX_i} \langle \vx_i' - \vx_i, \nabla_{\vx_i} \Phi(\vx) \rangle$. In particular, the KKT gap will be large if there is a considerable first-order deviation benefit \emph{for at least} a single component $i \in [n]$; our lower bound relies on this simple fact. 

As discussed, a special case of computing KKT points in constrained optimization is identifying Nash equilibria in potential games~\citep{Rosenthal73:Class,Monderer96:Potential}. Specifically, in $n$-player potential games, $\cX$ can be decomposed as a Cartesian product of $n$ probability simplices, and $\Phi$ is a multilinear function with respect to the strategies of each player. 

\subsection{Follow the regularized leader and the potential property}
\label{sec:prels-ftrl}

We proceed by introducing follow the regularized leader ($\ftrl$)~\citep{Hazan16:Introduction}. We focus on the continuous-time version of $\ftrl$, which can be thought of as the limit case of discrete-time $\ftrl$ when the learning rate approaches zero. To put this into context, it is worth reiterating that $\ftrl$ with a large learning rate is known to exhibit non-convergent behavior~\citep{Palaiopanos17:Multiplicative,Bielawski21:Follow}; our result is fundamentally different as it is not an artifact of a poorly tuned learning rate. $\ftrl$ was historically developed as a regularized version of \emph{fictitious play}~\citep{Robinson51:iterative}, the foundational algorithm in learning in games. Fictitious play can be viewed as the extreme version of $\ftrl$ when the learning rate is arbitrarily large; unlike $\ftrl$, fictitious play does not guarantee the no-regret property.

In this context, continuous-time $\ftrl$ with Euclidean regularization can be defined as
\begin{equation} 
\label{eq:ftrl}
    \vx(t) = \argmax_{\vx' \in \cX} \left\{ \left\langle \vx', \int_0^t \nabla_{\vx} \Phi(\vx(\tau)) d\tau \right\rangle - \frac{1}{2} \|\vx' \|_2^2 \right\},
\end{equation}
where $\Phi$ is the differentiable objective function; in particular, if $\vec{0} \in \cX$, \eqref{eq:ftrl} prescribes initializing at $\vec{0}$. Equivalently, \eqref{eq:ftrl} can be expressed as
\begin{equation}
    \label{eq:ftrl-proj}
    \vx(t) = \argmin_{\vx' \in \cX}  \left\| \vx' - \int_{0}^t \nabla_{\vx} \Phi(\vx(\tau)) d \tau \right \|_2 = \Pi_{\cX} \left( \int_{0}^t \nabla_{\vx} \Phi(\vx(\tau)) d \tau \right).
\end{equation}

With an arbitrary initialization $\vy_0$, the update is $\vx(t)=\Pi_{\cX}(\vy_0+\int_0^t\nabla\Phi(\vx(\tau))\,d\tau)$. Unless stated otherwise, we assume zero initialization.

A key property of continuous-time $\ftrl$ is that it can never decrease the value of $\Phi$. In other words, as one would expect, $\Phi$ serves as a potential under the $\ftrl$ dynamics; for the discrete-time version of this property when the learning rate is small enough, we refer to~\citet{Anagnostides26:Doubly}. 

\begin{restatable}[Potential property for $\ftrl$]{lemma}{potpropftrl}
    \label{lem:pot-prop}
    For $\ftrl$ on $\Phi \in C^1$, the objective value $\Phi(\vx(t))$ is nondecreasing, and for almost every $t\geq0$,
    \[
        \frac{d}{dt} \Phi(\vx(t)) \ge 0.
    \]
\end{restatable}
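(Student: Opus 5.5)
Write $\vy(t) \defeq \vy_0 + \int_0^t \nabla \Phi(\vx(\tau))\, d\tau$, so that $\vx(t) = \Pi_{\cX}(\vy(t))$. The plan is to transfer the problem to the dual variable $\vy$, where regularity is easy, and then use the variational characterization of the projection to sign the derivative of $\Phi$.

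\textbf{Regularity.} Since $\Phi \in C^1$ on the compact set $\cX$, the gradient $\nabla \Phi$ is bounded there, say by $G$. Hence $\vy$ is $G$-Lipschitz in $t$. The projection $\Pi_{\cX}$ onto a convex set is $1$-Lipschitz, so $\vx(t)$ is also Lipschitz. (Here I take existence of a solution of the fixed-point relation as given, as the paper does.) Consequently $\vx$ is absolutely continuous and differentiable for almost every $t$. Because $\nabla\Phi(\vx(\cdot))$ is continuous, $\vy$ is in fact $C^1$ with $\dot{\vy}(t) = \nabla\Phi(\vx(t))$ for every $t$. By the chain rule, $t \mapsto \Phi(\vx(t))$ is absolutely continuous and satisfies
\[
\tfrac{d}{dt}\Phi(\vx(t)) = \langle \nabla\Phi(\vx(t)), \dot{\vx}(t)\rangle = \langle \dot{\vy}(t), \dot{\vx}(t)\rangle \quad \text{for a.e. } t.
\]
The monotonicity claim will follow once this integrand is shown to be nonnegative a.e., since an absolutely continuous function is the integral of its derivative.

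\textbf{Key step: $\langle \dot{\vy}, \dot{\vx}\rangle \ge 0$.} I would use firm nonexpansiveness of the Euclidean projection:
\[
\langle \Pi(\vu) - \Pi(\vv), \vu - \vv\rangle \ge \|\Pi(\vu)-\Pi(\vv)\|_2^2 \ge 0.
\]
This follows by adding the two variational inequalities $\langle \vu - \Pi(\vu), \vz - \Pi(\vu)\rangle \le 0$ for all $\vz\in\cX$, one at $\vu$ and one at $\vv$. Apply it with $\vu = \vy(t+h)$ and $\vv = \vy(t)$, where $t$ is a point at which $\vx$ is differentiable. Dividing by $h^2$ and letting $h \to 0^+$ gives
\[
\langle \dot{\vx}(t), \dot{\vy}(t)\rangle \ge \|\dot{\vx}(t)\|_2^2 \ge 0.
\]
This shows $\frac{d}{dt}\Phi(\vx(t)) \ge 0$ for a.e. $t$. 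Integrating, $\Phi(\vx(t_2)) - \Phi(\vx(t_1)) = \int_{t_1}^{t_2} \langle \dot{\vy}, \dot{\vx}\rangle \ge 0$ for all $t_1 \le t_2$, which is the claimed monotonicity.

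\textbf{Main obstacle.} The only delicate point is justifying the a.e. differentiability and the chain rule, since $\Pi_{\cX}$ is not differentiable everywhere. This is handled entirely by Lipschitz continuity: Rademacher's theorem applied to the one-dimensional Lipschitz map $t\mapsto \vx(t)$, composed with the $C^1$ function $\Phi$. No differentiability of $\Pi_{\cX}$ itself is needed, because the inequality is derived at the level of difference quotients.
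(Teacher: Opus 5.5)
Your proof is correct and follows essentially the same route as the paper: Lipschitz continuity of $\vy$ and $\vx$, firm nonexpansiveness of $\Pi_{\cX}$ applied to $\vy(t+h)$ and $\vy(t)$, and division by $h^2$ to obtain $\langle \dot{\vy}(t), \dot{\vx}(t)\rangle \geq \|\dot{\vx}(t)\|_2^2 \geq 0$ for almost every $t$. Your additional step, using absolute continuity of $t \mapsto \Phi(\vx(t))$ to pass from the almost-everywhere derivative bound to monotonicity, spells out a detail that the paper leaves implicit.
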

We provide the simple proof of this property in~\Cref{appendix:proofs}. As a result, \Cref{lem:pot-prop} reassures us that the potential value is monotonically nondecreasing over time.

\paragraph{Gradient flow} Before we proceed, it is worth comparing $\ftrl$ with the (projected) gradient flow dynamics, whereby 
\[
\dot{\vx}(t) = \Pi_{T_\cX(\vx(t))} (\nabla_{\vx} \Phi (\vx(t)) ),
\]
where $T_\cX(\vx)$ denotes the tangent cone of $\cX$ at $\vx$. When the maximum in~\eqref{eq:ftrl} is consistently attained in the interior of $\cX$, the first-order optimality conditions imply that $\ftrl$ coincides with (unconstrained) gradient flow; we will use this simple observation in our counterexample (\Cref{lemma:GF-FTRL}). Gradient flow also enjoys the potential property, as it guarantees $\frac{d}{dt} \Phi(\vx(t)) = \langle \dot{\vx}, \nabla_{\vx} \Phi(\vx(t)) \rangle \geq \| \dot{\vx}(t) \|_2^2$ (by the chain rule and standard properties of the tangent cone projection). Integrating yields $\int_{0}^\infty \| \dot{\vx}(t)  \|_2^2 dt \leq O(1)$ since $\Phi$ is bounded. Importantly, the norm of the velocity $\| \dot{\vx}(t) \|_2$ under (constrained) gradient flow satisfies $\KKTgap(\vx(t))\leq\operatorname{diam}(\cX)\|\dot{\vx}(t)\|_2$ almost everywhere, implying first-order stationarity as the dynamics go along. As we shall see, this is not so for $\ftrl$.

\section{Gradient flow without pointwise convergence}
\label{sec:gradient_flow}

The starting point of our construction is a two-dimensional instance in which gradient flow does not converge pointwise. The fact that this phenomenon can occur is well known in dynamical systems~\citep{Palis12:Geometric}, but the point of this section is to establish some key properties that we rely on when we turn to $\ftrl$ on a higher-dimensional space (\Cref{sec:ftrl}).

We begin by defining the potential. It is designed so that the
trajectory approaches (from the left) the boundary \(x_1=1\) while perpetually oscillating in the
\(x_2\)-coordinate. For \(x_1<1\), we set
\[
    u=\frac{1}{1-x_1},
    \quad
    v=\pi(u-1),
    \quad
    \delta=x_2-\sin v.
\]
Here, \(u\) measures the proximity to the boundary \(x_1=1\), \(v\) is the total phase
of the oscillation, and \(\delta\) is the signed deviation from the oscillatory
curve \(x_2=\sin v\). The goal is to construct a potential whose gradient flow maintains the invariance $\delta(t) = 0$. To this end, we let
\[
    a(u)
    \defeq
    \frac{u^4e^{-u}}
    {1+\pi^2u^4\cos^2 v}
    \text{ and }
    L(u)
    \defeq
    -\pi a(u)\cos v,
\]
and we define
\(\Phi:[0,1]\times[-1,1]\to\mathbb R\) by
\begin{equation}
    \label{eq:pot-flow}
    \Phi(x_1,x_2)
    \defeq
    -
    e^{-u}
    -
    L(u)\delta
    -
    \frac12 e^{-u}\delta^2,
    \quad x_1<1.
\end{equation}
At the boundary \(x_1=1\), we define \(\Phi(1,x_2)\) by the limit
\(x_1\to1^{-}\) of the right-hand side of \eqref{eq:pot-flow}, which is
\(0\). The resulting potential is smooth up to the boundary
(\Cref{lem:phi_smooth}). For the enlarged constraint set used in \Cref{sec:ftrl}, we extend the same formula to all $x_1<1$ and $x_2\in\R$, and set $\Phi(x_1,x_2)=0$ for $x_1 \geq 1$; this extension is $C^\infty$ by the same argument.
The role of the linear term \(L(u)\delta\) in~\eqref{eq:pot-flow} is to make the oscillatory curve
\(
    \mathcal M=\{\delta=0\}
\)
forward invariant under the gradient flow. 
As we shall show, along this curve, the dynamics reduce to the scalar differential equation
\(
    \dot u=a(u)>0,
\)
so \(u(t)\) increases monotonically and the trajectory moves through successive
oscillations while approaching the boundary \(x_1=1\).

In the remainder of this section, we first formalize that the oscillatory curve $\{ \delta = 0 \}$ is forward invariant (\Cref{lem:trapping}). The second and more involved step involves obtaining dwell-time estimates for the motion through each oscillation (\Cref{lem:x2_level_dwell_time}). All proofs are deferred to~\Cref{appendix:proofs}.

\subsection{Invariant oscillatory curve}

We first show that the potential function constructed in~\eqref{eq:pot-flow} guarantees the following key invariance.

\begin{restatable}[Forward invariance]{lemma}{forinv}
\label{lem:trapping}
Let $\mathcal M
\defeq
\{(x_1,x_2):\delta=0\}
=
\left\{
(x_1,x_2):
x_2=\sin\left(\frac{\pi x_1}{1-x_1}\right)
\right\}$. Then $\mathcal{M}$ is forward invariant under the gradient flow on $\Phi$.
\end{restatable}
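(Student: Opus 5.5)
The plan is to compute $\dot\delta$ along the gradient flow $\dot x = \nabla\Phi(x)$ and show that it vanishes identically on $\mathcal M$. Then the scalar quantity $\delta(t)$ satisfies a linear ODE $\dot\delta = -c(t)\,\delta$, or at least an ODE of the form $\dot\delta = \delta\, g(x)$ with $g$ locally bounded. Uniqueness of solutions then forces $\delta(t)\equiv 0$ whenever $\delta(0)=0$. Since $\Phi$ is $C^\infty$ for $x_1<1$ (by \Cref{lem:phi_smooth}, or directly on this open region), the flow is locally Lipschitz and uniqueness is available. Throughout we stay in the region $x_1<1$, where $u$, $v$, $\delta$ are smooth. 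On $\mathcal M$ we have $|x_2|\le 1$, so the constraint in $x_2$ is inactive there; the constraint $x_1\le 1$ is not reached in finite time, as will follow since $\dot u = a(u)$ is bounded. Hence the projected flow coincides with the unconstrained flow.

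First, compute the partial derivatives in the coordinates $(u,\delta)$. Write $\Phi = -e^{-u} - L(u,v)\delta - \tfrac12 e^{-u}\delta^2$, treating $L$ as a function of $x_1$ only; note that $L$ depends on $x_1$ through both $u$ and $v$. Since $\partial u/\partial x_1 = u^2$ and $\partial v/\partial x_1 = \pi u^2$, the partial derivatives of $\delta$ are
\[
\partial_{x_1}\delta = -\pi u^2\cos v, \qquad \partial_{x_2}\delta = 1 .
\]
Evaluating the gradient on $\mathcal M$, where $\delta=0$, gives
\[
\partial_{x_1}\Phi = u^2 e^{-u} + \pi u^2\cos v\, L, \qquad \partial_{x_2}\Phi = -L .
\]
The terms containing $\partial_{x_1}L$ are multiplied by $\delta$ and therefore drop out.

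Second, check the invariance condition $\dot\delta = \partial_{x_1}\delta\,\dot x_1 + \dot x_2 = 0$ on $\mathcal M$. With $c\defeq \pi u^2\cos v$, this reads $-c\,(u^2e^{-u} + cL) - L = 0$, so $L(1+c^2) = -c\,u^2 e^{-u}$. This gives
\[
L = -\frac{\pi u^4 e^{-u}\cos v}{1+\pi^2u^4\cos^2 v} = -\pi a(u)\cos v,
\]
which is exactly the definition. I will also record that $\dot u = u^2\dot x_1 = u^2(u^2e^{-u}+cL) = u^4e^{-u}/(1+c^2) = a(u)$ on $\mathcal M$, since this is needed later and confirms that $x_1<1$ persists for all time.

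Third, off $\mathcal M$, $\dot\delta$ is a smooth function $F(x)$ that vanishes on $\{\delta=0\}$. By Hadamard's lemma in the coordinates $(x_1,\delta)$, we can write $F = \delta\, G$ with $G$ continuous and locally bounded. Then $\tfrac{d}{dt}\delta^2 = 2G\delta^2 \le 2K\delta^2$ on compact time intervals, and Grönwall's inequality gives $\delta\equiv 0$. Equivalently, the flow restricted to $\mathcal M$ (via $\dot u = a(u)$, $x_2=\sin v$) is a solution, and uniqueness implies it is the solution. The main obstacle is the bookkeeping in the first step, namely the fact that $L$ depends on $x_1$ through both $u$ and $v$. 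This is harmless precisely because those terms carry a factor $\delta$. A secondary point is verifying that no boundary projection activates, which follows from $|\sin v|\le1$ and from $u$ staying finite in finite time.
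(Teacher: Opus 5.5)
Your proposal is correct and takes the same route as the paper. You compute $\partial_{x_1}\Phi$ and $\partial_{x_2}\Phi$ on $\mathcal M$, where every term carrying a factor $\delta$ (including those with $\partial_{x_1}L$) drops out, and check that $\dot\delta=-L-\pi a(u)\cos v=0$ there, obtaining $\dot u=a(u)$ along the way; the only difference is that you solve for $L$ rather than substitute it. Your Hadamard/Gr\"onwall (or uniqueness) step makes rigorous the passage from ``$\dot\delta=0$ on $\mathcal M$'' to forward invariance, which the paper takes for granted. One small quibble: on a box with $x_2\in[-1,1]$ the $x_2$-constraint is touched at $x_2=\pm1$, so it is not inactive there. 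This is harmless, because $\partial_{x_2}\Phi=\pi a(u)\cos v$ vanishes at those points, and on $\cX_2=[-2,2]$ the constraint is genuinely inactive.
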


We state two immediate corollaries stemming from~\Cref{lem:trapping}.

\begin{restatable}{corollary}{initcurve}
\label{cor:initialization_on_invariant_curve}
If gradient flow is initialized at $(x_1(0), x_2(0)) = (0, 0)$, then $\delta(t) = 0$ for all $t \geq 0$. Moreover, $\dot u(t) = a(u) > 0$, so \(u(t)\) is strictly increasing.
\end{restatable}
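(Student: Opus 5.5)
The plan is to check that the initial point lies on $\mathcal M$, apply \Cref{lem:trapping}, and then compute $\dot u$ directly from the gradient of $\Phi$ restricted to $\mathcal M$.

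\textbf{Step 1: the initial point lies on $\mathcal M$.} At $(x_1,x_2)=(0,0)$ we have $u=1$ and $v=\pi(u-1)=0$. Hence $\delta = x_2-\sin v = 0-0 = 0$, so $(0,0)\in\mathcal M$.

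\textbf{Step 2: invariance.} By \Cref{lem:trapping}, $\mathcal M$ is forward invariant under gradient flow, so $\delta(t)=0$ for all $t\ge 0$. Some care is needed at the boundary of the domain. The point $(0,0)$ sits on the edge $x_1=0$, where projected gradient flow could in principle differ from unconstrained flow. On $\mathcal M$ we have $|x_2|=|\sin v|\le 1$. We will show below that $\dot x_1>0$, so the trajectory immediately moves inward and the edge constraint $x_1\ge 0$ never binds. Similarly, $\dot x_1>0$ and $u\to\infty$ only as $x_1\to 1$, so the edge $x_1=1$ is never reached in finite time.

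\textbf{Step 3: computing $\dot u$.} On $\mathcal M$, $\delta=0$, so the quadratic term $-\tfrac12 e^{-u}\delta^2$ contributes nothing to $\nabla\Phi$. The linear term $-L\delta$ contributes $-L\,\nabla\delta$. The term $-e^{-u}$ contributes $e^{-u}\,\nabla u$.

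From $u=1/(1-x_1)$ we get $\partial_{x_1}u=u^2$. From $\delta = x_2-\sin v$ with $v=\pi(u-1)$ we get
\[
\partial_{x_1}\delta=-\pi u^2\cos v,\qquad \partial_{x_2}\delta=1.
\]
Therefore
\[
\dot x_1 = e^{-u}u^2+\pi L u^2\cos v,\qquad \dot x_2=-L .
\]
Then $\dot u=u^2\dot x_1$. Substituting $L=-\pi a\cos v$ gives
\[
\dot u = u^4e^{-u} - \pi^2 u^4 a\cos^2 v .
\]
Setting $\dot u = a$, we need $a(1+\pi^2u^4\cos^2v)=u^4e^{-u}$, which is exactly the definition of $a$.

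As a consistency check, $\tfrac{d}{dt}\sin v=\pi\cos v\,\dot u=\pi a\cos v=-L=\dot x_2$. So $\dot\delta=0$ holds, in agreement with \Cref{lem:trapping}.

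\textbf{Step 4: monotonicity.} Since $a(u)>0$ for every finite $u$, we have $\dot u>0$, so $u(t)$ is strictly increasing. This also gives $\dot x_1=\dot u/u^2>0$, which closes the interiority argument deferred in Step 2.

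\textbf{Main obstacle.} The only delicate point is justifying that the projected dynamics agree with unconstrained gradient flow at the start, where the initial point lies on the boundary $x_1=0$. The positivity $\dot x_1>0$ established in Step 4 settles this.
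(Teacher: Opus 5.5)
Your proposal is correct and follows the paper's route: $(0,0)$ has $u=1$, $v=0$, hence $\delta=0$, so \Cref{lem:trapping} gives invariance, and $\dot u=a(u)>0$ comes from the same on-curve computation of $\partial_{x_1}\Phi$ done in that lemma's proof. One small fix concerns your remark that $x_1=1$ is never reached in finite time: it does not follow from $\dot x_1>0$ and ``$u\to\infty$ only as $x_1\to1$''; the right reason, which the paper uses in \Cref{lem:omega_limit_invariant_curve}, is that $0<\dot u=a(u)\le u^4e^{-u}$ is bounded, so $u(t)$ grows at most linearly and cannot blow up in finite time.
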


\begin{restatable}[Limit sets]{corollary}{omegalimit}
\label{lem:omega_limit_invariant_curve}
If gradient flow is initialized on \(\mathcal M\), then $u(t)\to+\infty$ and $x_1(t) \to 1$. Moreover, the $\omega$-limit set reads $\omega(\vx(0))=\{1\}\times[-1,1]$.
\end{restatable}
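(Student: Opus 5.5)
The plan is to work entirely on the invariant curve $\mathcal M$, which by \Cref{lem:trapping} contains the whole trajectory. On $\mathcal M$ the dynamics reduce to the scalar equation $\dot u = a(u)$, and the position is $x_1 = 1 - 1/u$, $x_2 = \sin(\pi(u-1))$. So the whole statement comes down to the behavior of $u(t)$. There are two steps: first show $u(t)\to+\infty$, then read off the $\omega$-limit set from the parametrization.

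\emph{Step 1: $u(t)\to\infty$.} Since $a(u)>0$ on $[1,\infty)$, $u$ is strictly increasing, and the trajectory exists for all $t\ge0$ because $\Phi$ is smooth on the compact domain. Suppose instead that $u(t)\uparrow u^*<\infty$. The function $a$ is continuous and strictly positive on the compact interval $[u(0),u^*]$, so it is bounded below by some $c>0$ there. Then $u(t)\ge u(0)+ct$, which is unbounded, a contradiction. Hence $u(t)\to\infty$, and so $x_1(t)=1-1/u(t)\to 1$. Equivalently, the time needed to reach level $U$ is $T(U)=\int_{u(0)}^U du/a(u)$, which is finite for every finite $U$ and tends to $\infty$ as $U\to\infty$. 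Either way, $u$ sweeps through every value in $[u(0),\infty)$.

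\emph{Step 2: the $\omega$-limit set.} Take any limit point $(y_1,y_2)$ along a sequence $t_k\to\infty$. Step 1 gives $y_1=1$, and $|x_2|\le1$ gives $y_2\in[-1,1]$, so $\omega(\vx(0))\subseteq\{1\}\times[-1,1]$.

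For the reverse inclusion, fix $y_2\in[-1,1]$ and pick $\theta$ with $\sin\theta=y_2$. Because $u$ is continuous, increasing and unbounded, for each large $k$ there is a time $t_k$ with $\pi(u(t_k)-1)=\theta+2\pi k$. Along these times $x_2(t_k)=y_2$ exactly and $x_1(t_k)\to1$, so $(1,y_2)\in\omega(\vx(0))$.

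The set $\{1\}\times[-1,1]$ is therefore both a superset and a subset of the $\omega$-limit set, which proves the equality. The argument is essentially routine. The only point needing care is Step 1: it relies on the positivity of $a$ being uniform on compact $u$-intervals, which holds because the denominator $1+\pi^2u^4\cos^2v$ is at least $1$ and $u^4e^{-u}>0$. That positivity is what rules out convergence of $u$ to a finite limit.
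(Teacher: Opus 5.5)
Your proposal is correct and follows essentially the same route as the paper. Both arguments reduce to the scalar equation $\dot u = a(u) > 0$ on $\mathcal M$, rule out a finite limit $u^\ast$ using strict positivity of $a$ (you use a uniform lower bound on $[u(0),u^\ast]$, the paper uses continuity of $\dot u$ at the limit), and then choose times with $\pi(u(t_k)-1)=\theta+2\pi k$ so that every $(1,y_2)$ is an accumulation point; your observation that $\int_{u(0)}^{U} du/a(u)\to\infty$ plays the role of the paper's bound $a(u)\le u^4e^{-u}$ in excluding finite-time blow-up of $u$.
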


\subsection{Dwell-time bounds}
\label{subsec:dwell_time}

Next, we estimate how much time the gradient flow spends in different parts of each oscillation. These dwell-time bounds will be crucial when we embed the
two-dimensional gradient flow instance into the higher-dimensional \(\ftrl\)
dynamics, as they allow us to control the cumulative gradient observed by each
auxiliary coordinate. Referring back to \Cref{fig:sinewave}, the key requirement
is that, within each oscillation, the positive cumulative gradient is subsumed
by the negative cumulative gradient. This requires the flow to spend sufficiently
little time in the positive-gradient region; this subsection sets the groundwork for establishing that property.

\paragraph{Phase windows} We first introduce the phase windows. We assume that the gradient flow is
initialized on the invariant curve \(\mathcal M\). 
During the \(k\)th oscillation, with \(k\ge 2\), we write
\[
v=2\pi(k-1)+\phi,
\quad
\phi\in[0,2\pi],
\quad
U_k\defeq 2k-1.
\]
In words, \(\phi\) represents the local phase, while \(v\) is the absolute phase. For \(m\ge 1\), we let 
\[
\ell^{(m)} \defeq \frac{\pi}{2^m},
\quad
h^{(m)} \defeq \frac{\ell^{(m)}}{2}=\frac{\pi}{2^{m+1}},
\quad
N^{(m)} \defeq \frac{2\pi}{h^{(m)}}=2^{m+2}.
\]
Here, \(\ell^{(m)}\) is the window length and \(h^{(m)}\) is the shift between consecutive windows, so adjacent windows overlap by \(h^{(m)}\). Furthermore, for \(i \in \{0,\ldots,N^{(m)}-2\}\), we define the non-wrapping windows $I^{(m)}_i
\defeq [ih^{(m)},ih^{(m)}+\ell^{(m)}]
\subset[0,2\pi]$. The final window crosses \(2\pi\), and is written as
\(
    I^{(m)}_{N^{(m)}-1}
    \defeq
    [2\pi-h^{(m)},2\pi]\cup[0,h^{(m)}],
\)
where the wrapping is understood modulo \(2\pi\), so both pieces lie within the same oscillation. \Cref{fig:vis} contains a visualization of the phase windows.

\begin{figure}[t]
    \centering

    \begin{minipage}{0.32\textwidth}
        \centering
        \resizebox{\linewidth}{!}{\begin{tikzpicture}[scale=2.4]
    \def\gap{3}

    \def\rblue{0.78}
    \def\rorange{0.66}

    \draw[thin, gray!55] (0,0) circle (1);

    \foreach \ang/\lab in {
        0/{0},
        90/{\frac{\pi}{2}},
        180/{\pi},
        270/{\frac{3\pi}{2}}
    } {
        \draw[thin, gray!25]
            (0,0) -- ({cos(\ang)}, {sin(\ang)});

        \node[font=\scriptsize]
            at ({1.20*cos(\ang)}, {1.20*sin(\ang)})
            {$\lab$};
    }

    \draw[very thick, blue]
        ({\rblue*cos(0+\gap)}, {\rblue*sin(0+\gap)})
        arc[start angle={0+\gap}, end angle={180-\gap}, radius=\rblue];

    \draw[very thick, blue]
        ({(\rblue-0.04)*cos(0+\gap)}, {(\rblue-0.04)*sin(0+\gap)})
        -- ({(\rblue+0.04)*cos(0+\gap)}, {(\rblue+0.04)*sin(0+\gap)});
    \draw[very thick, blue]
        ({(\rblue-0.04)*cos(180-\gap)}, {(\rblue-0.04)*sin(180-\gap)})
        -- ({(\rblue+0.04)*cos(180-\gap)}, {(\rblue+0.04)*sin(180-\gap)});

    \node[font=\scriptsize, blue]
        at ({0.88*cos(90)}, {0.88*sin(90)})
        {$I^{(0)}_0$};

    \draw[very thick, orange]
        ({\rorange*cos(90+\gap)}, {\rorange*sin(90+\gap)})
        arc[start angle={90+\gap}, end angle={270-\gap}, radius=\rorange];

    \draw[very thick, orange]
        ({(\rorange-0.04)*cos(90+\gap)}, {(\rorange-0.04)*sin(90+\gap)})
        -- ({(\rorange+0.04)*cos(90+\gap)}, {(\rorange+0.04)*sin(90+\gap)});
    \draw[very thick, orange]
        ({(\rorange-0.04)*cos(270-\gap)}, {(\rorange-0.04)*sin(270-\gap)})
        -- ({(\rorange+0.04)*cos(270-\gap)}, {(\rorange+0.04)*sin(270-\gap)});

    \node[font=\scriptsize, orange]
        at ({0.52*cos(180)}, {0.52*sin(180)})
        {$I^{(0)}_1$};

    \draw[very thick, blue]
        ({\rblue*cos(180+\gap)}, {\rblue*sin(180+\gap)})
        arc[start angle={180+\gap}, end angle={360-\gap}, radius=\rblue];

    \draw[very thick, blue]
        ({(\rblue-0.04)*cos(180+\gap)}, {(\rblue-0.04)*sin(180+\gap)})
        -- ({(\rblue+0.04)*cos(180+\gap)}, {(\rblue+0.04)*sin(180+\gap)});
    \draw[very thick, blue]
        ({(\rblue-0.04)*cos(360-\gap)}, {(\rblue-0.04)*sin(360-\gap)})
        -- ({(\rblue+0.04)*cos(360-\gap)}, {(\rblue+0.04)*sin(360-\gap)});

    \node[font=\scriptsize, blue]
        at ({0.88*cos(270)}, {0.88*sin(270)})
        {$I^{(0)}_2$};


    \draw[very thick, orange]
        ({\rorange*cos(270+\gap)}, {\rorange*sin(270+\gap)})
        arc[start angle={270+\gap}, end angle={360}, radius=\rorange];

    \draw[very thick, teal]
        ({\rorange*cos(0)}, {\rorange*sin(0)})
        arc[start angle={0}, end angle={90-\gap}, radius=\rorange];

    \draw[very thick, orange]
        ({(\rorange-0.04)*cos(270+\gap)}, {(\rorange-0.04)*sin(270+\gap)})
        -- ({(\rorange+0.04)*cos(270+\gap)}, {(\rorange+0.04)*sin(270+\gap)});

    \draw[very thick, teal]
        ({(\rorange-0.04)*cos(90-\gap)}, {(\rorange-0.04)*sin(90-\gap)})
        -- ({(\rorange+0.04)*cos(90-\gap)}, {(\rorange+0.04)*sin(90-\gap)});

    \node[font=\scriptsize, orange]
        at ({0.74*cos(315)}, {0.56*sin(0)})
        {$I^{(0)}_3$};
\end{tikzpicture}}
        \vspace{0.5em}
        {Intervals of size $\pi$}
    \end{minipage}
    \hfill
    \begin{minipage}{0.32\textwidth}
        \centering
        \resizebox{\linewidth}{!}{\begin{tikzpicture}[scale=2.6]
    \def\gap{2}

    \def\rblue{0.82}
    \def\rorange{0.72}

    \def\step{45}
    \def\len{90}

    \draw[thin, gray!55] (0,0) circle (1);

    \foreach \i/\lab in {
        0/{0},
        1/{\frac{\pi}{4}},
        2/{\frac{\pi}{2}},
        3/{\frac{3\pi}{4}},
        4/{\pi},
        5/{\frac{5\pi}{4}},
        6/{\frac{3\pi}{2}},
        7/{\frac{7\pi}{4}}
    } {
        \pgfmathsetmacro{\ang}{\i*\step}

        \draw[thin, gray!18]
            (0,0) -- ({cos(\ang)}, {sin(\ang)});

        \node[font=\scriptsize]
            at ({1.12*cos(\ang)}, {1.12*sin(\ang)})
            {$\lab$};
    }

    \foreach \i in {0,...,6} {
        \pgfmathsetmacro{\start}{\i*\step}
        \pgfmathsetmacro{\finish}{\start+\len}
        \pgfmathsetmacro{\mid}{\start+\len/2}
        \pgfmathtruncatemacro{\parity}{mod(\i,2)}

        \ifnum\parity=0
            \def\col{blue}
            \def\r{\rblue}
            \def\rlab{0.91}
        \else
            \def\col{orange}
            \def\r{\rorange}
            \def\rlab{0.62}
        \fi

        \draw[very thick, \col]
            ({\r*cos(\start+\gap)}, {\r*sin(\start+\gap)})
            arc[start angle={\start+\gap}, end angle={\finish-\gap}, radius=\r];

        \draw[very thick, \col]
            ({(\r-0.03)*cos(\start+\gap)}, {(\r-0.03)*sin(\start+\gap)})
            -- ({(\r+0.03)*cos(\start+\gap)}, {(\r+0.03)*sin(\start+\gap)});

        \draw[very thick, \col]
            ({(\r-0.03)*cos(\finish-\gap)}, {(\r-0.03)*sin(\finish-\gap)})
            -- ({(\r+0.03)*cos(\finish-\gap)}, {(\r+0.03)*sin(\finish-\gap)});

        \node[font=\scriptsize, \col]
            at ({\rlab*cos(\mid)}, {\rlab*sin(\mid)})
            {$I^{(1)}_{\i}$};
    }


    \draw[very thick, orange]
        ({\rorange*cos(7*\step+\gap)}, {\rorange*sin(7*\step+\gap)})
        arc[start angle={7*\step+\gap}, end angle={360}, radius=\rorange];

    \draw[very thick, teal]
        ({\rorange*cos(0)}, {\rorange*sin(0)})
        arc[start angle={0}, end angle={\step-\gap}, radius=\rorange];

    \draw[very thick, orange]
        ({(\rorange-0.03)*cos(7*\step+\gap)}, {(\rorange-0.03)*sin(7*\step+\gap)})
        -- ({(\rorange+0.03)*cos(7*\step+\gap)}, {(\rorange+0.03)*sin(7*\step+\gap)});

    \draw[very thick, teal]
        ({(\rorange-0.03)*cos(\step-\gap)}, {(\rorange-0.03)*sin(\step-\gap)})
        -- ({(\rorange+0.03)*cos(\step-\gap)}, {(\rorange+0.03)*sin(\step-\gap)});

    \node[font=\scriptsize, orange]
        at ({0.62*cos(7*\step+\len/2)}, {0.62*sin(7*\step+\len/2)})
        {$I^{(1)}_{7}$};
\end{tikzpicture}}
        \vspace{0.5em}
        {Intervals of size $\pi/2$}
    \end{minipage}
    \hfill
    \begin{minipage}{0.32\textwidth}
        \centering
        \resizebox{\linewidth}{!}{\begin{tikzpicture}[scale=2.9]
    \def\gap{1.5}

    \def\rblue{0.82}
    \def\rorange{0.72}

    \def\step{22.5}
    \def\len{45}

    \draw[thin, gray!55] (0,0) circle (1);

    \foreach \i/\lab in {
        0/{0},
        1/{\frac{\pi}{8}},
        2/{\frac{\pi}{4}},
        3/{\frac{3\pi}{8}},
        4/{\frac{\pi}{2}},
        5/{\frac{5\pi}{8}},
        6/{\frac{3\pi}{4}},
        7/{\frac{7\pi}{8}},
        8/{\pi},
        9/{\frac{9\pi}{8}},
        10/{\frac{5\pi}{4}},
        11/{\frac{11\pi}{8}},
        12/{\frac{3\pi}{2}},
        13/{\frac{13\pi}{8}},
        14/{\frac{7\pi}{4}},
        15/{\frac{15\pi}{8}}
    } {
        \pgfmathsetmacro{\ang}{\i*\step}

        \draw[thin, gray!18]
            (0,0) -- ({cos(\ang)}, {sin(\ang)});

        \node[font=\tiny]
            at ({1.12*cos(\ang)}, {1.12*sin(\ang)})
            {$\lab$};
    }

    \foreach \i in {0,...,14} {
        \pgfmathsetmacro{\start}{\i*\step}
        \pgfmathsetmacro{\finish}{\start+\len}
        \pgfmathsetmacro{\mid}{\start+\len/2}
        \pgfmathtruncatemacro{\parity}{mod(\i,2)}

        \ifnum\parity=0
            \def\col{blue}
            \def\r{\rblue}
            \def\rlab{0.90}
        \else
            \def\col{orange}
            \def\r{\rorange}
            \def\rlab{0.62}
        \fi

        \draw[very thick, \col]
            ({\r*cos(\start+\gap)}, {\r*sin(\start+\gap)})
            arc[start angle={\start+\gap}, end angle={\finish-\gap}, radius=\r];

        \draw[very thick, \col]
            ({(\r-0.025)*cos(\start+\gap)}, {(\r-0.025)*sin(\start+\gap)})
            -- ({(\r+0.025)*cos(\start+\gap)}, {(\r+0.025)*sin(\start+\gap)});

        \draw[very thick, \col]
            ({(\r-0.025)*cos(\finish-\gap)}, {(\r-0.025)*sin(\finish-\gap)})
            -- ({(\r+0.025)*cos(\finish-\gap)}, {(\r+0.025)*sin(\finish-\gap)});

        \node[font=\tiny, \col]
            at ({\rlab*cos(\mid)}, {\rlab*sin(\mid)})
            {$I^{(2)}_{\i}$};
    }


    \draw[very thick, orange]
        ({\rorange*cos(15*\step+\gap)}, {\rorange*sin(15*\step+\gap)})
        arc[start angle={15*\step+\gap}, end angle={360}, radius=\rorange];

    \draw[very thick, teal]
        ({\rorange*cos(0)}, {\rorange*sin(0)})
        arc[start angle={0}, end angle={\step-\gap}, radius=\rorange];

    \draw[very thick, orange]
        ({(\rorange-0.025)*cos(15*\step+\gap)}, {(\rorange-0.025)*sin(15*\step+\gap)})
        -- ({(\rorange+0.025)*cos(15*\step+\gap)}, {(\rorange+0.025)*sin(15*\step+\gap)});

    \draw[very thick, teal]
        ({(\rorange-0.025)*cos(\step-\gap)}, {(\rorange-0.025)*sin(\step-\gap)})
        -- ({(\rorange+0.025)*cos(\step-\gap)}, {(\rorange+0.025)*sin(\step-\gap)});

    \node[font=\tiny, orange]
        at ({0.62*cos(15*\step+\len/2)}, {0.62*sin(15*\step+\len/2)})
        {$I^{(2)}_{15}$};
\end{tikzpicture}}
        \vspace{0.5em}
        {Intervals of size $\pi/4$}
    \end{minipage}

    \caption{
    Overlapping phase windows of decreasing size. The part shown in \textcolor{teal}{teal} is the final wrapped part of the last phase window, represented modulo \(2\pi\) within the same oscillation.}
    \label{fig:vis}
\end{figure}

\paragraph{Dwell times} For the construction of the potential in \Cref{sec:ftrl}, we express the dwell time in terms of level sets of the \(x_2\)-coordinate rather than phase windows. To this end, for each phase window \(I_i^{(m)}\), \(i \in \{0,\ldots,N^{(m)}-1\}\), we define the associated \(x_2\)-level set by
\[
A_i^{(m)}
\defeq
\sin\bigl(I_i^{(m)}\bigr)
\defeq
\left\{
x_2 \in[-1,1]:
x_2 = \sin\phi
\text{ for some }\phi\in I_i^{(m)}
\right\},
\]
which is the corresponding interval in the \(x_2\)-coordinate. Then, during the \(k\)th oscillation, we let \(T^{(m)}_{k,i}\) denote the amount of time for which
\(
x_2(t)\in A_i^{(m)},
\)
and let \(Q^{(m)}_{k,i}\) denote the amount of time for which
\(
x_2(t)\notin A_i^{(m)}.
\)

\begin{restatable}[Dwell-time bounds in terms of \(x_2\)-level sets]{lemma}{dwellxtwo}
\label{lem:x2_level_dwell_time}
Suppose that the gradient flow is initialized on the invariant curve
\(\mathcal M\). 
Then, for every \(k\ge2\), \(m\ge5\), and
\(i \in \{ 0,\ldots,N^{(m)}-1 \} \),
\begin{equation}
    \label{eq:dwellbound}
    T^{(m)}_{k,i}
    \leq
    \frac{16}{2^{m}}
    Q^{(m)}_{k,i}.
\end{equation}
\end{restatable}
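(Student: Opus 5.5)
The plan is to reduce both dwell times to explicit integrals over the local phase $\phi$ using the scalar dynamics on $\mathcal M$. By \Cref{cor:initialization_on_invariant_curve} (more precisely \Cref{lem:trapping}, since we start on $\mathcal M$), along the trajectory we have $\delta\equiv 0$, $x_2=\sin v$ and $\dot u=a(u)>0$. Since $v=\pi(u-1)$, we get $\dot v=\pi a(u)$. Hence $t$ can be reparametrized by $v$, with
\[
dt=\frac{dv}{\pi a(u)}=\frac{e^{u}}{\pi}\Bigl(u^{-4}+\pi^2\cos^2 v\Bigr)\,dv .
\]
During the $k$th oscillation, $v\in[2\pi(k-1),2\pi k]$, which corresponds to $u\in[2k-1,2k+1]=[U_k,U_k+2]$. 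Consequently $e^{U_k}\le e^{u}\le e^{U_k+2}$ and $u^{-4}\le U_k^{-4}\le 3^{-4}$ for $k\ge 2$. Moreover, because $u(t)$ is strictly increasing, each phase $\phi\in[0,2\pi]$ is visited exactly once per oscillation. This turns $T^{(m)}_{k,i}$ and $Q^{(m)}_{k,i}$ into integrals of the density above over complementary subsets of $[0,2\pi]$.

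Next I describe the set of phases in which $x_2\in A^{(m)}_i$. Let $S_i\defeq\{\phi\in[0,2\pi]:\sin\phi\in\sin(I^{(m)}_i)\}$. Since $\sin\phi=\sin(\pi-\phi)$ and $\sin$ takes each value at most twice per period, $S_i$ is contained in the union of $I^{(m)}_i$ and its reflection $\phi\mapsto\pi-\phi$ (mod $2\pi$). Therefore $|S_i|\le 2\ell^{(m)}=2\pi/2^m$. This holds for the wrapped window $I^{(m)}_{N^{(m)}-1}$ too, since it also has total length $\ell^{(m)}$. The complement $S_i^c$ then has measure at least $2\pi-2\ell^{(m)}$.

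Then I bound the two integrals. For the numerator, I bound the density crudely by $e^{U_k+2}(3^{-4}+\pi^2)/\pi$, which gives
\[
T^{(m)}_{k,i}\le \frac{e^{U_k+2}(3^{-4}+\pi^2)}{\pi}\cdot\frac{2\pi}{2^m}.
\]
For the denominator, I drop the $u^{-4}$ term and use $e^u\ge e^{U_k}$. Because $\int_0^{2\pi}\cos^2=\pi$ and $\cos^2\le1$ on $S_i$,
\[
\int_{S_i^c}\cos^2\phi\,d\phi\ \ge\ \pi-\frac{2\pi}{2^m},
\]
and hence $Q^{(m)}_{k,i}\ge e^{U_k}\pi\bigl(\pi-2\pi/2^m\bigr)$. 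Taking the ratio, the common factor $e^{U_k}$ cancels and
\[
\frac{T^{(m)}_{k,i}}{Q^{(m)}_{k,i}}\le \frac{2e^2(3^{-4}+\pi^2)}{\pi^2(1-2^{1-m})}\cdot\frac{1}{2^m}.
\]
For $m\ge5$ we have $1-2^{1-m}\ge 15/16$, and the constant evaluates to about $15.8<16$, which gives \eqref{eq:dwellbound}.

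The main obstacle is that the constant is tight. The factor $e^2$, coming from the variation of $e^u$ across one oscillation, nearly exhausts the budget of $16$. So I must keep the sharp $u^{-4}\le 1/81$ (using $k\ge2$), the exact $\int\cos^2=\pi$, and the precise bound $|S_i|\le2\ell^{(m)}$. The latter needs a little care when the window contains a turning point $\pi/2$ or $3\pi/2$, where the window and its reflection overlap; this only helps. If the numerics turn out too tight, I would refine the argument by noting that $e^u$ is monotone in $\phi$, so that the numerator and denominator can be compared on matched sub-arcs rather than through the global $e^2$ factor.
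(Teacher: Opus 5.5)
Your proof is correct and is essentially the paper's argument, down to the constants. You reparametrize time by phase using the density $e^u/(\pi u^4)+\pi e^u\cos^2\phi$ and bound $|S_i|\le 2\ell^{(m)}$ by reflection; comparing the upper bound $e^{U_k}e^2(1/(81\pi)+\pi)$ with the lower bound $e^{U_k}\pi\cos^2\phi$ yields the same ratio $2e^2(1+1/(81\pi^2))\,2^{-m}/(1-2^{1-m})<16\cdot 2^{-m}$, so the refinement you mention as a fallback is not needed.
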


For the construction in \Cref{sec:ftrl}, the time \(T^{(m)}_{k,i}\) spent in \(A_i^{(m)}\) bounds the time during which the corresponding auxiliary gradient is positive; throughout the complementary time \(Q^{(m)}_{k,i}\), this gradient is strictly negative. When \(m\) is large, \eqref{eq:dwellbound} shows that \(T^{(m)}_{k,i}\) is a small fraction of the oscillation's duration.

\section{Main result: from gradient flow to FTRL}
\label{sec:ftrl}

To obtain our main result, we now embed the gradient flow dynamics of~\Cref{sec:gradient_flow} into a higher-dimensional space. Specifically, the first two
coordinates \((x_1,x_2)\) are the original variables of gradient flow, while the remaining
coordinates $(x_3, \dots, x_{N^{(m)}+2})$ serve an auxiliary role. The key invariance we will establish is that $x_i(t) = 0$ for any $i \in \{3,\ldots,N^{(m)}+2\}$ and $t \geq 0$, so $\ftrl$ in this higher-dimensional space will remain trapped in the $(x_1, x_2)$ plane, mimicking the trajectory of gradient flow. In accordance with the dwell-time bounds of~\Cref{lem:x2_level_dwell_time}, we associate $x_{3+i}$ with the $i$th $x_2$-level interval. The constraint set is defined as the product of intervals $\cX_1 \times \cX_2 \times [0, 1] \times \dots \times [0, 1]$, where $\cX_1=[-1,2]$ and $\cX_2=[-2,2]$; this guarantees that constrained gradient flow with respect to $(x_1, x_2)$ coincides with the unconstrained gradient flow by virtue of~\Cref{lem:trapping}.

\newcommand{\boxB}{\widetilde{A}}
\newcommand{\boxalpha}{\widetilde{\alpha}^{(m)}}
\newcommand{\boxbeta}{\widetilde{\beta}^{(m)}}

\paragraph{Higher-dimensional potential} We now introduce the augmented potential function. We recall from~\cref{subsec:dwell_time} that each phase window \(I^{(m)}_i\) determines the interval
\[
A^{(m)}_i
=
[\alpha_i^{(m)},\beta_i^{(m)}],
\quad
\text{where }
\alpha_i^{(m)} \defeq \min_{\phi\in I_i^{(m)}}\sin\phi
, \ 
\beta_i^{(m)} \defeq \max_{\phi\in I_i^{(m)}}\sin\phi
.
\]
In particular, for the wrapping window, we have \(A_{N^{(m)}-1}^{(m)} = \sin\left([2\pi-h^{(m)},2\pi]\cup[0,h^{(m)}]\right) = [-\sin h^{(m)},\sin h^{(m)}]\). 
%
For each \(A_i^{(m)}\), write
\(\Delta_i^{(m)}\defeq\beta_i^{(m)}-\alpha_i^{(m)}>0\)
for its length. We introduce a corresponding trimmed interval
\(\boxB_i^{(m)}\defeq[\boxalpha_i,\boxbeta_i]\),
whose role will become clear shortly.
If \(A_i^{(m)}\) lies strictly inside \([-1,1]\), we remove
one quarter of its length from each endpoint, so that
\[
\boxB_i^{(m)}
\defeq 
\left[ \alpha_i^{(m)}+\frac{\Delta_i^{(m)}}{4}, \beta_i^{(m)}-\frac{\Delta_i^{(m)}}{4}
\right],
\]
and hence $|\boxB_i^{(m)}| = |A_i^{(m)}|/ 2$. The only intervals that need separate treatment are the endpoint intervals, which come in two types: those reaching $1$ and those reaching $-1$. For $m \geq 1$, no interval reaches both endpoints because each phase window has length at most \(\pi/2\). For the endpoint intervals, we modify the construction slightly. At the endpoint \(-1\) or \(1\), we extend the interval by one quarter of its original length, while trimming the opposite side by one quarter. This extension ensures that the quadratic \(q_i^{(m)}\) defined below is strictly positive at the corresponding endpoint \(x_2=\pm1\), which is needed for a uniformly positive KKT gap.
That is, if \(\beta_i^{(m)}=1\), we set $\boxB_i^{(m)}
    \defeq
    \Big[
    \alpha_i^{(m)}+\Delta_i^{(m)}/4,
    \,
    \beta_i^{(m)}+\Delta_i^{(m)}/4
    \Big]$, while if \(\alpha_i^{(m)}=-1\), we set $\boxB_i^{(m)}
    \defeq
    \Big[
    \alpha_i^{(m)}-\Delta_i^{(m)}/4,
    \,
    \beta_i^{(m)}-\Delta_i^{(m)}/4
    \Big]$; in either case, $ |\boxB_i^{(m)}| = |A_i^{(m)}|$.

We associate with the interval \(\boxB_i^{(m)}\) the quadratic term $q_i^{(m)}(x_2)
\defeq
(
    x_2-\boxalpha_i
)
(
    \boxbeta_i-x_2
)$, so that the initial potential function $\Phi$ per~\eqref{eq:pot-flow} in~\Cref{sec:gradient_flow} is now recast as
\begin{equation}
    \label{eq:interm-pot}
    \widetilde{\Phi} (x_1,x_2,x_3,\ldots,x_{N^{(m)}+2}) \defeq \Phi(x_1,x_2)
    +
    \sum_{i=0}^{N^{(m)}-1}
    x_{3+i}\,q_i^{(m)}(x_2).
\end{equation}
%
As a result, the gradient of~\eqref{eq:interm-pot} with respect to \(x_{3+i}\)
is precisely \(q_i^{(m)}(x_2)\).
We can now explain the role of the trimmed intervals:
they allow us to explicitly bound the positive values of
\(q_i^{(m)}\) inside \(A_i^{(m)}\) and obtain a uniform strictly
negative upper bound on \([-1,1]\setminus A_i^{(m)}\).
Combined with the dwell-time estimate, these bounds will show
that its integral over a complete oscillation is negative. At the same time, as long as all auxiliary coordinates remain zero, the added terms do not change the \((x_1,x_2)\)-trajectory since the added contribution to $\partial \widetilde{\Phi} /\partial x_2$ is always multiplied by one of the auxiliary variables. Furthermore, the intervals constructed above are such that at any point in time at least one coordinate will observe a positive gradient (\Cref{lem:uniform_kkt_gap}), so $\ftrl$ will experience a large KKT gap.

Now, under the constraint set $\cX_1 \times \cX_2 \times [0, 1] \times \dots \times [0, 1]$, $\ftrl$ can be decomposed as $x_i(t) = \Pi_{\cX_i}( \int_{0}^t \nabla_{x_i} \widetilde{\Phi}(\vx(\tau)) d \tau )$ for each coordinate $i$ (\Cref{cor:decomposition}). The first direct consequence of this decomposition is that maintaining the invariance $x_{3+i} = 0$ reduces to showing that the observed \emph{cumulative} gradient is nonpositive---the projection of any nonpositive number into $[0, 1]$ is $0$.

\begin{claim}
    \label{claim:negative-cum}
    For $i\in \{ 3, \ldots, N^{(m)}+2 \}$, if $\int_{0}^t \nabla_{x_i} \widetilde{\Phi}(\vx(\tau)) d \tau \leq 0$ for any $t \geq 0$, then $x_{i}(t) = 0$.
\end{claim}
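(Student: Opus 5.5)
The claim follows directly from the coordinatewise form of $\ftrl$ on a product domain (\Cref{cor:decomposition}), combined with an explicit formula for projection onto $[0,1]$. The plan is to start from the projection form of $\ftrl$~\eqref{eq:ftrl-proj} with zero initialization and note that the constraint set $\cX_1 \times \cX_2 \times [0,1]^{N^{(m)}}$ is a Cartesian product. The Euclidean projection onto a product of sets factors across coordinates: the squared distance $\|\vx' - \vy\|_2^2$ is a sum of per-coordinate terms, each involving only its own factor, so minimizing the sum amounts to minimizing each term separately. This gives the decomposition
\[
x_i(t) = \Pi_{[0,1]}\left( \int_0^t \nabla_{x_i} \widetilde{\Phi}(\vx(\tau))\, d\tau \right)
\]
for each auxiliary index $i \in \{3,\ldots,N^{(m)}+2\}$.

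The second step is to compute the projection onto $[0,1]$ explicitly. For a scalar $y$, we have $\Pi_{[0,1]}(y) = \min\{1, \max\{0, y\}\}$, since $(x'-y)^2$ is a convex parabola in $x'$ whose unconstrained minimizer is $y$. If $y \le 0$, this parabola is nondecreasing on $[0,1]$, so the constrained minimum over $[0,1]$ is attained at $x' = 0$. Applying this with $y = \int_0^t \nabla_{x_i}\widetilde{\Phi}(\vx(\tau))\,d\tau$, which is nonpositive by hypothesis, gives $x_i(t) = 0$ for that $t$. The hypothesis holds for every $t \ge 0$, so the conclusion does too.

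I expect no step to be a real obstacle. The only point needing care is that the cumulative gradient is well-defined: the integrand $q_i^{(m)}(x_2(\tau))$ must be measurable along the trajectory. This holds because $\widetilde{\Phi}$ is $C^1$ and the $\ftrl$ trajectory is continuous, so the integrand is continuous in $\tau$. I would therefore treat the existence of the trajectory as given, as the paper does in \eqref{eq:ftrl}, and not argue it here.
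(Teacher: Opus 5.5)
Your proposal is correct and takes essentially the same approach as the paper. The paper also applies the coordinatewise decomposition of $\ftrl$ over the product domain (\Cref{cor:decomposition}) to write $x_i(t)$ as the projection onto $[0,1]$ of its cumulative gradient, and then uses that a nonpositive scalar projects to $0$. One small indexing point: coordinate $x_i$ carries the quadratic $q_{i-3}^{(m)}$, not $q_i^{(m)}$, which has no effect on your argument.
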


\paragraph{Bounding the cumulative gradients} We now employ the dwell-time estimate from \cref{lem:x2_level_dwell_time} to control the
cumulative gradient observed by each auxiliary coordinate. We fix an oscillation
\(k\) and an index \(i\in\{0,\ldots,N^{(m)}-1\}\). We recall that \(\Delta_i^{(m)}=\beta_i^{(m)}-\alpha_i^{(m)}>0\)
is the length of \(A_i^{(m)}\). The quadratic function
\(q_i^{(m)}(x_2)\) satisfies the following bounds. First, if \(A_i^{(m)}\) contains neither endpoint \(1\) nor \(-1\), then
\[
    \max_{x_2 \in \boxB_i^{(m)}}q_i^{(m)}(x_2)
    =
    \frac{(\Delta_i^{(m)})^2}{16}
    \text{ and }
    q_i^{(m)}(\alpha_i^{(m)})
    =
    q_i^{(m)}(\beta_i^{(m)})
    =
    -\frac{3(\Delta_i^{(m)})^2}{16}.
\]
Hence, by concavity,
\(
    q_i^{(m)}(x_2)
    \le
    \nicefrac{-3(\Delta_i^{(m)})^2}{16}
\)
whenever \(x_2\notin A_i^{(m)}\). For the endpoint intervals, 
\[
    \max_{x_2 \in \boxB_i^{(m)}}q_i^{(m)}(x_2)
    =
    \frac{(\Delta_i^{(m)})^2}{4},
\]
and concavity gives
\(
    q_i^{(m)}(x_2)
    \le
    \nicefrac{- 5(\Delta_i^{(m)})^2}{16}
\)
whenever \(x_2\in[-1,1]\setminus A_i^{(m)}\). 
Let $t_k$ denote the time at which $u(t_k)=2k-1$, so that $t_1=0$ and the $k$th oscillation corresponds to $[t_k,t_{k+1}]$. Since \(q_i^{(m)}\) attains its global maximum
at the midpoint of \(\boxB_i^{(m)}\), the positive upper
bounds above also hold throughout \(A_i^{(m)}\).
We therefore use the corresponding positive bound over
the time \(T_{k,i}^{(m)}\) spent inside \(A_i^{(m)}\)
and the negative bound over the complementary time
\(Q_{k,i}^{(m)}\), obtaining
\[
\begin{aligned}
    \int_{t_k}^{t_{k+1}} q_i^{(m)}(x_2(t))\,dt
    &\leq
    \max_{x_2\in\boxB_i^{(m)}}q_i^{(m)}(x_2)\,T_{k,i}^{(m)}
    +\sup_{x_2\in[-1,1]\setminus A_i^{(m)}}q_i^{(m)}(x_2)\,Q_{k,i}^{(m)} \\
    &\leq
    \frac{(\Delta_i^{(m)})^2}{16}
    \begin{cases}
        T_{k,i}^{(m)}-3Q_{k,i}^{(m)}
        &\text{if } A_i^{(m)}\subset(-1,1),\\
        4T_{k,i}^{(m)}-5Q_{k,i}^{(m)}
        &\text{otherwise}.
    \end{cases}
\end{aligned}
\]
By \cref{lem:x2_level_dwell_time}, both bounds are negative for $k \geq 2$ and $m\geq5$, since $T_{k,i}^{(m)} \leq 16 \, 2^{-m} Q_{k,i}^{(m)}$ and $16 \, 2^{-m} < 1$. Thus, over each complete oscillation, the positive contribution is outweighed by the negative contribution. We summarize this property in the lemma below.

\begin{lemma}
    \label{lemma:cum-grad}
    Suppose that $(x_1(t), x_2(t))$ evolves as the (unconstrained) gradient flow executed on~\eqref{eq:pot-flow}. If $m \geq 5$ in~\eqref{eq:interm-pot}, then $\int_{t_k}^{t_{k+1}} q_i^{(m)}(x_2(t))\,dt<0$ for every $k\geq2$ and $i\in\{0,\ldots,N^{(m)}-1\}$.
\end{lemma}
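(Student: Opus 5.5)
The plan is to split the integral over the $k$th oscillation $[t_k,t_{k+1}]$ according to whether $x_2(t)$ lies in the level set $A_i^{(m)}$ or not. By \Cref{cor:initialization_on_invariant_curve}, the gradient flow started on $\mathcal M$ stays on $\mathcal M$, and $u(t)$ is strictly increasing. Hence on $[t_k,t_{k+1}]$ the absolute phase $v=\pi(u-1)$ sweeps exactly $[2\pi(k-1),2\pi k]$, which is one full oscillation $\phi\in[0,2\pi]$, and $x_2(t)=\sin\phi(t)\in[-1,1]$ throughout. The interval $[t_k,t_{k+1}]$ is therefore the disjoint union of the (measurable) set of times where $x_2(t)\in A_i^{(m)}$, of total length $T_{k,i}^{(m)}$, and its complement, of length $Q_{k,i}^{(m)}$. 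These are exactly the quantities controlled by \Cref{lem:x2_level_dwell_time}.

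Next I bound $q_i^{(m)}$ pointwise on each piece. Since $q_i^{(m)}$ is a concave quadratic with roots $\widetilde\alpha_i^{(m)},\widetilde\beta_i^{(m)}$, its global maximum is $(|\boxB_i^{(m)}|/2)^2$. This equals $(\Delta_i^{(m)})^2/16$ for interior windows, where $|\boxB_i^{(m)}|=\Delta_i^{(m)}/2$, and $(\Delta_i^{(m)})^2/4$ for endpoint windows, where $|\boxB_i^{(m)}|=\Delta_i^{(m)}$. This bound holds on all of $A_i^{(m)}$.

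For the complement, I would evaluate $q_i^{(m)}$ at the endpoints of $A_i^{(m)}$ and use monotonicity of the concave quadratic away from its vertex. In the interior case, the point $\alpha_i^{(m)}$ lies at distance $\Delta_i^{(m)}/4$ outside a root, giving $q_i^{(m)}(\alpha_i^{(m)})=-\frac{\Delta}{4}\cdot\frac{3\Delta}{4}=-3\Delta^2/16$, and symmetrically at $\beta_i^{(m)}$. Points of $[-1,1]$ outside $A_i^{(m)}$ are farther from the vertex, so $q_i^{(m)}\le -3\Delta^2/16$ there.

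The endpoint case needs more care. Say $\beta_i^{(m)}=1$. The complement within $[-1,1]$ is then only $[-1,\alpha_i^{(m)})$. At $\alpha_i^{(m)}$, which lies $\Delta/4$ below the root $\widetilde\alpha_i^{(m)}$ of an interval of length $\Delta$, we get $q=-\frac{\Delta}{4}\cdot\frac{5\Delta}{4}=-5\Delta^2/16$, and $q$ decreases further to the left. The case $\alpha_i^{(m)}=-1$ is symmetric. This is the step where one must check that no window reaches both $\pm1$ (true since $\ell^{(m)}\le\pi/2$). It also requires that the wrapping window $A_{N^{(m)}-1}^{(m)}=[-\sin h^{(m)},\sin h^{(m)}]$ is treated as interior, which holds since $h^{(m)}<\pi/2$.

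Integrating gives
\[
\int_{t_k}^{t_{k+1}} q_i^{(m)}(x_2)\,dt\le \frac{\Delta^2}{16}\bigl(cT_{k,i}^{(m)}-c'Q_{k,i}^{(m)}\bigr),
\]
with $(c,c')=(1,3)$ or $(4,5)$. \Cref{lem:x2_level_dwell_time} with $k\ge2$, $m\ge5$ gives $T_{k,i}^{(m)}\le 2^{4-m}Q_{k,i}^{(m)}\le \tfrac12 Q_{k,i}^{(m)}$. Then $cT-c'Q\le (c/2-c')Q<0$, provided $Q_{k,i}^{(m)}>0$ and $\Delta_i^{(m)}>0$.

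The main obstacle is this strictness. I must confirm that $Q_{k,i}^{(m)}>0$, i.e., the flow spends positive time with $x_2\notin A_i^{(m)}$. This holds because a window of length $\pi/2^m$ maps under $\sin$ to a proper subinterval of $[-1,1]$. Since $\phi$ sweeps all of $[0,2\pi]$ continuously and at a finite, positive rate ($\dot u=a(u)>0$ is bounded), $x_2$ leaves $A_i^{(m)}$ on a phase set of positive length, hence for a positive duration.
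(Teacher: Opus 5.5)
Your proposal is correct and follows the paper's argument: you split the $k$th oscillation into the time $T_{k,i}^{(m)}$ with $x_2\in A_i^{(m)}$ and the complementary time $Q_{k,i}^{(m)}$, bound $q_i^{(m)}$ above by its maximum ($(\Delta_i^{(m)})^2/16$ or $(\Delta_i^{(m)})^2/4$) on the first and by $-3(\Delta_i^{(m)})^2/16$ or $-5(\Delta_i^{(m)})^2/16$ on the second, and conclude with the dwell-time bound $T_{k,i}^{(m)}\le 16\cdot 2^{-m}Q_{k,i}^{(m)}$. Your explicit check that $Q_{k,i}^{(m)}>0$ supplies a point the paper leaves implicit; it also follows from the lower bound $Q_{k,i}^{(m)}\ge e^{U_k}\pi(\pi-2\ell^{(m)})$ in the proof of the dwell-time lemma.
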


\paragraph{Initialization} To complete the construction, we need to handle the initialization. To do so, we incorporate an additional term in~\eqref{eq:interm-pot}, namely
\begin{equation}
    \label{eq:newpot}
\widetilde{\Phi}' (x_1,x_2,x_3,\ldots,x_{N^{(m)}+2}) \defeq \widetilde{\Phi} (x_1,x_2,x_3,\ldots,x_{N^{(m)}+2}) - C \sum_{i=0}^{N^{(m)} - 1} x_{3 + i} \psi(x_1),
\end{equation}
where $C > 0$ is a sufficiently large constant and $\psi(x_1)=e^{-1/(1-x_1)}$ for $x_1<1$, while $\psi(x_1)=0$ for $x_1\geq1$; this extension is $C^\infty$. The role of the last term in~\eqref{eq:newpot} is this: when $C$ is large enough, each coordinate $x_{3+i}$ will observe a large negative gradient in the beginning (so it will remain at $0$), but as $x_1$ approaches $1$ that term will quickly vanish and become negligible. As before, this additional term will not affect the gradient observed by $x_1$ or $x_2$ as long as all auxiliary coordinates remain zero.

Choosing $C$ large enough makes the cumulative gradients nonpositive during the first two oscillations. For later oscillations, the negative contribution accumulated during the preceding complete oscillation outweighs the positive contribution during any part of the current one, provided that $m\geq8$. This leads to the following lemma, whose proof is given in~\Cref{appendix:proofs}.

\begin{restatable}{lemma}{auxcumgradient}
\label{lem:auxiliary-cumulative-gradients}
    Suppose that $(x_1(t), x_2(t))$ evolves as the (unconstrained) gradient flow executed on~\eqref{eq:pot-flow}. If $m\geq8$ and $C > 0$ is large enough, $\int_{0}^t \nabla_{x_i} \widetilde{\Phi}'(\vx(\tau)) d \tau \leq 0$ for any $ i\in\{3,\ldots,N^{(m)}+2\}$ and $t \geq 0$.
\end{restatable}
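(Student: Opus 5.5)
Since $(x_1,x_2)$ follows the gradient flow on $\mathcal M$, the gradient seen by each auxiliary coordinate $x_{3+i}$ is
\[
g_i(\tau)\defeq q_i^{(m)}(x_2(\tau)) - C\psi(x_1(\tau)).
\]
I want to show $G_i(t)\defeq\int_0^t g_i\,d\tau\le 0$ for all $t\ge0$. By \Cref{cor:initialization_on_invariant_curve}, $u$ increases strictly, and $\psi(x_1)=e^{-u}>0$. Hence the $\psi$-term only ever subtracts, and I may drop it wherever convenient.

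I split time at $t_3$, the end of the second oscillation.

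\textbf{Early phase, $t\in[0,t_3]$.} On this compact interval $u\le 5$, so $\psi(x_1(\tau))\ge e^{-5}$. Also $|q_i^{(m)}|\le M$ on $[-1,1]$ for a constant $M$ depending only on $m$ (say $M\le 9$). Choosing $C\ge Me^{5}$ makes $g_i(\tau)\le 0$ pointwise on $[0,t_3]$. Hence $G_i$ is nonincreasing there and $G_i(t)\le 0$. This step is routine.

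\textbf{Later phase, $t\in[t_k,t_{k+1}]$ with $k\ge3$.} I write
\[
G_i(t)\le \sum_{j=2}^{k-1}\int_{t_j}^{t_{j+1}}q_i^{(m)}\,d\tau+\int_{t_k}^{t}q_i^{(m)}\,d\tau ,
\]
using $G_i(t_2)\le0$ from the early phase and discarding the $\psi$-term.

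Each full-oscillation integral for $j\ge2$ is negative by \Cref{lemma:cum-grad}, and more quantitatively at most $-c\,(\Delta_i^{(m)})^2 Q_{j,i}^{(m)}$ for some $c>0$. Concretely, $\frac{(\Delta_i^{(m)})^2}{16}\bigl(T-3Q\bigr)\le -2\,\frac{(\Delta_i^{(m)})^2}{16}\,Q$ once $16\cdot 2^{-m}\le 1$. So it suffices to keep only the $j=k-1$ term and prove
\[
\int_{t_k}^{t}q_i^{(m)}\,d\tau\le \max q_i^{(m)}\cdot T_{k,i}^{(m)}\le c\,(\Delta_i^{(m)})^2 Q_{k-1,i}^{(m)} .
\]
The first inequality holds because $q_i^{(m)}$ is positive only inside $A_i^{(m)}$. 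By \Cref{lem:x2_level_dwell_time}, $T_{k,i}^{(m)}\le 16\cdot 2^{-m}Q_{k,i}^{(m)}$. The remaining ingredient is to compare the $k$-th oscillation's duration with the $(k-1)$-st, namely $Q_{k,i}^{(m)}\lesssim Q_{k-1,i}^{(m)}$ up to a bounded factor.

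\textbf{Main obstacle: comparing consecutive oscillations.} Oscillations slow down as $u$ grows, so a positive excursion in oscillation $k$ lasts longer than anything in oscillation $k-1$. I must show the growth factor is bounded by a constant independent of $k$. From $\dot u=a(u)$, the time spent on a phase set $S$ in oscillation $k$ is
\[
\int \frac{du}{a(u)}=\int e^{u}u^{-4}\bigl(1+\pi^2u^4\cos^2 v\bigr)\,du .
\]
Shifting $u$ by $2$ (one oscillation) multiplies $e^{u}$ by $e^{2}$, and the polynomial factors change by a ratio $\bigl(\tfrac{2k+1}{2k-3}\bigr)^{O(1)}$, which is bounded for $k\ge3$. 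So I expect
\[
T_{k,i}^{(m)}\le K\cdot 16\cdot 2^{-m}\,Q_{k-1,i}^{(m)}
\]
with an absolute constant $K$ of order $e^{2}$ times a small polynomial ratio, since the phase-to-time density at the same local phase differs by at most that factor. Comparing $T_{k,i}$ to $Q_{k-1,i}$ also requires that $Q_{k,i}$ and $Q_{k-1,i}$ are the same phase set shifted by $2\pi$, which they are.

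Combining, I need
\[
\frac{(\Delta_i^{(m)})^2}{4}\,K\,16\cdot 2^{-m}\le 2\,\frac{(\Delta_i^{(m)})^2}{16}
\]
in the worst (endpoint) case, i.e.\ $2^{m}\ge 128K$ roughly. I would make the estimate on $K$ explicit, carefully handling the wrapping window and the endpoint intervals, and verify that $m\ge8$ suffices. This matches the paper's threshold if $K\lesssim 2$. If the crude factor $e^{2}$ is too large for that, I would instead sum the negative contributions of several preceding oscillations, or shrink $K$ by comparing at half-oscillation granularity.
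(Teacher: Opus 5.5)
Your proposal is correct and is essentially the paper's proof. You choose $C$ large so that the integrand is negative up to $t_3$. For $k\ge3$, the positive mass $\frac{(\Delta_i^{(m)})^2}{4}T_{k,i}^{(m)}$ of a partial oscillation is then absorbed by the negative mass of the preceding full oscillation, because the time density $dt/d\phi$ at a fixed local phase grows by at most a factor $e^2$ per oscillation, giving $T_{k,i}^{(m)}\le e^2T_{k-1,i}^{(m)}\le 16e^2\,2^{-m}Q_{k-1,i}^{(m)}$.

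Your closing hedge is unnecessary. The $u^{-4}$ term only decreases under $u\mapsto u+2$, so $K=e^2$ works. Your final inequality actually reads $2^m\ge 32K$ rather than $128K$, i.e.\ $2^m\ge 32e^2\approx 236$, which $m=8$ satisfies (the paper uses the sharper negative constant $3-4\cdot16\cdot2^{-m}$ and needs $2^m>\tfrac{64}{3}(1+e^2)$).
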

For the remainder of the construction, we fix $m=8$. Combining this lemma with~\Cref{claim:negative-cum} shows that $x_i(t) = 0$ for any $t \geq 0$ and $ i \in \{3,\ldots,N^{(m)}+2\}$. What remains to argue is that on the $(x_1, x_2)$ plane, $\ftrl$ executed on~\eqref{eq:newpot} adheres to the motion described in~\Cref{lem:trapping}. This follows from the equivalence of $\ftrl$ and gradient flow when the latter remains in the interior of the constraint set, as we point out below; the proof follows by simply integrating.

\begin{claim}[Equivalence of gradient flow to $\ftrl$ in the interior]
    \label{lemma:GF-FTRL}
    Consider gradient flow on a function $\Phi$ and constraint set $\cX = \cX_1 \times \dots \times \cX_n$ such that for each $i \in S \subseteq [n]$, $x_i(t)$ is always in the interior of $\cX_i$. For each coordinate $i \in S$, $x_i(t) = \int_{0}^t \nabla_{x_i} \Phi(\vx(\tau)) d \tau = \Pi_{\cX_i} ( \int_{0}^t \nabla_{x_i} \Phi(\vx(\tau)) d \tau ) $.
\end{claim}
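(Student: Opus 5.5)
The claim follows from the fundamental theorem of calculus once we use the product structure of $\cX$ and the meaning of interiority for gradient flow. We take the gradient flow to start at the origin, matching the zero initialization of $\ftrl$. In the general case, with initial point $\vy_0$, $\vy_0$ is added to both integrals and the argument is identical.

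\textbf{Step 1: the flow is unconstrained in the interior coordinates.} Fix $i \in S$. Because $\cX$ is a Cartesian product, the tangent cone factorizes as $T_\cX(\vx) = T_{\cX_1}(\vx_1)\times\dots\times T_{\cX_n}(\vx_n)$. The projection onto a product cone also acts componentwise, so the $i$th component of the projected gradient is $\Pi_{T_{\cX_i}(x_i)}(\nabla_{x_i}\Phi(\vx))$. By hypothesis $x_i(t)\in\operatorname{int}\cX_i$, so $T_{\cX_i}(x_i(t))$ is the whole ambient space of that factor and the projection is the identity. Hence $\dot x_i(t) = \nabla_{x_i}\Phi(\vx(t))$ for all $t$.

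\textbf{Step 2: integrate.} Since $\Phi\in C^1$, the trajectory $\vx(\cdot)$ is continuous and the right-hand side $\tau\mapsto\nabla_{x_i}\Phi(\vx(\tau))$ is continuous. So $x_i$ is $C^1$, and the fundamental theorem of calculus with $x_i(0)=0$ gives $x_i(t)=\int_0^t \nabla_{x_i}\Phi(\vx(\tau))\,d\tau$.

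\textbf{Step 3: the projection is the identity.} By Step 2, the integral $\int_0^t \nabla_{x_i}\Phi(\vx(\tau))\,d\tau$ equals $x_i(t)$, which lies in $\cX_i$ (indeed in its interior). The Euclidean projection of a point of $\cX_i$ onto $\cX_i$ is that point itself. This gives the second equality, $x_i(t) = \Pi_{\cX_i}\bigl(\int_0^t \nabla_{x_i}\Phi(\vx(\tau))\,d\tau\bigr)$.

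This is exactly the coordinatewise $\ftrl$ formula of \Cref{cor:decomposition}. The only subtle point is that the integrand in $\ftrl$ depends on the whole trajectory, including any coordinates outside $S$. The claim is about a fixed gradient-flow trajectory $\vx(\cdot)$, so this is harmless here.

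When the claim is applied later, the auxiliary coordinates are held at $0$ by \Cref{lem:auxiliary-cumulative-gradients} and \Cref{claim:negative-cum}. On that slice, $\widetilde{\Phi}'$ has the same $(x_1,x_2)$-gradient as $\Phi$, and by \Cref{lem:trapping} the flow stays in the interior of $\cX_1\times\cX_2$. So the gradient-flow trajectory also satisfies the $\ftrl$ fixed-point equations in every coordinate. I expect no real obstacle: Step 1 is the only part that needs care, namely checking that the tangent-cone projection splits across the factors.
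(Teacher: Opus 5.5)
Your proposal is correct and takes the same approach as the paper, which says only that ``the proof follows by simply integrating.'' Your Step 1 (in the interior, the tangent-cone projection is the identity in coordinate $i$) and Step 3 (projecting a point of $\cX_i$ onto $\cX_i$ returns that point) spell out what the paper leaves implicit, and you are right to flag that the first equality tacitly requires $x_i(0)=0$, which holds in the paper's application because gradient flow starts at $(0,0)$, an interior point of $\cX_1\times\cX_2$.
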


To conclude, we start from the oscillatory gradient-flow trajectory from~\Cref{lem:trapping}, augmented with auxiliary coordinates fixed at zero. This higher-dimensional trajectory is precisely $\ftrl$: \Cref{lem:auxiliary-cumulative-gradients} together with~\Cref{claim:negative-cum} shows this for the auxiliary coordinates, while \Cref{lemma:GF-FTRL} shows it for the first two coordinates. As a result, there is a constant $\gamma_m > 0$ such that $\max_i q_i^{(m)}(x_2)\geq\gamma_m$ for every $x_2\in[-1,1]$ (\Cref{lem:uniform_kkt_gap}). We choose $T$ such that $C\psi(x_1(t))\leq\gamma_m/2$ for all $t \geq T$. For the final objective $\widetilde{\Phi}'$, an auxiliary coordinate then gives a KKT gap of at least $\gamma_m/2$. On $[0,T]$, the first two coordinates have a nonzero gradient and remain in the interior, again implying a strictly positive lower bound for the KKT gap. The smaller of these two constants is a uniform lower bound for all $t \geq 0$. Together, these observations establish~\Cref{thm:main-informal}.

\section{Convergence for analytic objectives}
\label{sec:analytic}

Our counterexample relies on a smooth objective that is \emph{not} analytic. In this section, we establish that this distinction is fundamental: $\ftrl$ run on an analytic objective defined on a polyhedral set is guaranteed to converge to a KKT point.

\begin{restatable}{theorem}{analyticconvergence}
    \label{thm:analytic-convergence}
    Let $\cX \subset \R^d$ be a nonempty compact polytope, and let $\Phi$ be real analytic on an open neighborhood of $\cX$. Continuous-time $\ftrl$ with Euclidean regularization has finite trajectory length, $\int_0^\infty \|\dot{\vx}(t)\|_2\,dt < \infty$, and thereby converges to a point $\vx^\star \in \cX$ with $\KKTgap(\vx^\star)=0$.
\end{restatable}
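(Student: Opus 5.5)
The plan is to write $\vy(t)=\vy_0+\int_0^t\nabla\Phi(\vx(\tau))\,d\tau$, so that $\vx(t)=\Pi_\cX(\vy(t))$ and $\dot\vy=\nabla\Phi(\vx)$. The argument rests on the polyhedral structure of $\cX$.

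\textbf{Step 1: local description of the projection.} The Euclidean projection onto a polytope is piecewise affine. On each cell of the normal fan it equals the affine projection onto the affine hull of a face $F$. So for almost every $t$, $\vx(t)$ lies in the relative interior of some face $F$, and $\dot\vx=P_F\dot\vy=P_F\nabla\Phi(\vx)$, where $P_F$ is the orthogonal projection onto the linear span of $F-F$. This gives
\[
\tfrac{d}{dt}\Phi(\vx)=\langle\nabla\Phi,P_F\nabla\Phi\rangle=\|P_F\nabla\Phi(\vx)\|_2^2=\|\dot\vx\|_2^2 ,
\]
which refines \Cref{lem:pot-prop}. Since $\Phi$ is bounded on $\cX$, we get $\int_0^\infty\|\dot\vx\|_2^2\,dt<\infty$.

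\textbf{Step 2: a tailored \L{}ojasiewicz inequality.} For each of the finitely many faces $F$, the restriction $\Phi|_{\mathrm{aff}(F)}$ is analytic. Its Riemannian gradient there is $P_F\nabla\Phi$. The classical \L{}ojasiewicz inequality applied to $\Phi|_{\mathrm{aff}(F)}$ around any point $\bar\vx$ gives
\[
|\Phi(\vx)-\Phi(\bar\vx)|^{1-\theta}\le c\,\|P_F\nabla\Phi(\vx)\|_2
\]
for $\vx\in F$ near $\bar\vx$. Here we need $\bar\vx$ to be a critical point of the restriction, or else the inequality holds trivially near non-critical points. Taking the worst exponent and constant over the faces containing a given accumulation point $\bar\vx$ gives a single inequality valid along the trajectory near $\bar\vx$. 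There is a catch: faces with $\bar\vx\notin\bar F$ are never visited near $\bar\vx$, and for the faces that are visited, $\bar\vx$ may fail to be critical for the restriction.

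\textbf{Step 3: the main obstacle.} The $\omega$-limit set is nonempty, and $\Phi$ is constant on it with value $\Phi^\star$, by monotonicity. The difficulty is that an $\omega$-limit point $\bar\vx$ need not be a zero of $P_F\nabla\Phi$ for the faces $F$ visited near it. Unlike projected gradient flow, $\ftrl$ may stay on a face while $\vy$ slowly reverses its accumulated normal component.

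I would handle this by contradiction. Suppose $\|P_F\nabla\Phi(\bar\vx)\|\ge\epsilon$ for some face $F\ni\bar\vx$ visited near $\bar\vx$. Each visit to a neighborhood of $\bar\vx$ on that face then increases $\Phi$ by at least $\epsilon\cdot$(distance travelled). Steps 1 and 2 keep the total dwell time in bad regions of finite measure, and only finitely many face switches carry mass, so this contradicts $\Phi\to\Phi^\star$ with infinitely many returns.

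In the remaining case, every visited face has $\bar\vx$ critical for its restriction, and Step 2 applies uniformly. Then the standard Kurdyka argument, with $\frac{d}{dt}(\Phi^\star-\Phi)^\theta\le-\theta c^{-1}\|\dot\vx\|_2$, yields a finite-length estimate once the trajectory enters a small ball around $\bar\vx$. Piecewise absolute continuity of $\vx$ handles the face switches, since $\Phi(\vx(t))$ is absolutely continuous. Finite length forces $\vx(t)\to\vx^\star=\bar\vx$.

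\textbf{Step 4: KKT at the limit.} Since $\vx\to\vx^\star$, the increment $\dot\vy=\nabla\Phi(\vx)\to\nabla\Phi(\vx^\star)$, so $\vy(t)\approx t\nabla\Phi(\vx^\star)$ for large $t$. If $\nabla\Phi(\vx^\star)$ were not in the normal cone $N_\cX(\vx^\star)$, then $\Pi_\cX(\vy(t))$ would eventually move away from $\vx^\star$ along a feasible direction $\vd$ with $\langle\vd,\nabla\Phi(\vx^\star)\rangle>0$. This is a contradiction, so $\nabla\Phi(\vx^\star)\in N_\cX(\vx^\star)$, which is exactly $\KKTgap(\vx^\star)=0$.
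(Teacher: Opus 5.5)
Your skeleton is sound: the velocity identity of Step~1, a face-wise \L{}ojasiewicz inequality, a Kurdyka-type length bound, and stationarity via $\vy(t)/t\to\nabla\Phi(\vx^\star)$. Step~3 as written, however, does not go through.

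\textbf{The gap in Step~3.} Consider the case of a face $F\ni\bar{\vx}$ with $\|P_F\nabla\Phi(\bar{\vx})\|_2\geq\epsilon$. Your contradiction argument for it is not valid. You know that each return raises $\Phi$ by roughly $\epsilon$ times the distance travelled on $F$, and that the total time spent there is finite. This gives no contradiction if the distances covered during successive returns are summable. Nothing in Steps~1--2 supports the claim that only finitely many face switches ``carry mass.'' So the case you single out is left unproved.

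\textbf{The repair.} This case needs no separate treatment, as your own remark in Step~2 already shows.
\begin{itemize}
\item If $\bar{\vx}$ is not critical for $\Phi|_{\mathrm{aff}(F)}$, continuity gives $\|P_F\nabla\Phi(\vx)\|_2\geq\epsilon/2$ and $|\Phi(\vx)-\Phi^\star|\leq 1$ on a neighborhood of $\bar{\vx}$. Hence $|\Phi(\vx)-\Phi^\star|^{1-\theta}\leq(2/\epsilon)\|P_F\nabla\Phi(\vx)\|_2$ there, for any $\theta$.
\item Intersect these neighborhoods with the \L{}ojasiewicz neighborhoods of the faces for which $\bar{\vx}$ is critical, and with a ball meeting only faces that contain $\bar{\vx}$.
\item This gives a single inequality $E(t)^{1-\theta}\leq c\|P_{F(\vx(t))}\nabla\Phi(\vx(t))\|_2=c\|\dot{\vx}(t)\|_2$ whenever $\vx(t)$ lies in that ball.
\end{itemize}
Your trapping argument from the ``remaining case'' then applies with no case split. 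You should also dispose of the trivial case $E(t_1)=0$: afterwards $\|\dot{\vx}\|_2^2=\frac{d}{dt}\Phi(\vx)=0$ almost everywhere, so $\vx$ is constant.

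\textbf{Comparison with the paper.} With that repair, your proof differs from the paper's in two respects.
\begin{itemize}
\item The paper never localizes at an $\omega$-limit point. It proves the \L{}ojasiewicz bound uniformly on each compact level set $\{\vx\in F:\Phi(\vx)=\ell\}$. It covers critical and noncritical points alike by finitely many neighborhoods, then uses compactness to get a threshold $\eta$. So $\|\dot{\vx}(t)\|_2\geq c\,E(t)^{\theta}$ (in its convention $\theta\in[1/2,1)$) holds for all large $t$, and integration gives finite length directly, with no trapping step. This is also why criticality never arises as an issue there.
\item The paper only derives $\|\dot{\vx}\|_2\geq\|\nabla_{F(\vx)}\Phi(\vx)\|_2$, from the directional derivative of the projection onto the critical cone. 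Your piecewise-affine argument gives the stronger identity $\dot{\vx}=P_{F(\vx)}\nabla\Phi(\vx)$ almost everywhere. That identity is true, but you should justify it by a density-point argument, since $\vy(t)$ can run along cell boundaries of the normal fan for a set of times of positive measure.
\end{itemize}

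Finally, Step~4 is cleanest as in the paper: divide $\langle\vy(t)-\vx(t),\vx'-\vx(t)\rangle\leq 0$ by $t$ and let $t\to\infty$.
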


In particular, any finite potential game admits a \emph{polynomial} (in fact, multilinear) potential defined on products of simplices, giving the following consequence.

\begin{restatable}{corollary}{analyticgames}
    \label{cor:analytic-games}
    In every finite potential game, continuous-time $\ftrl$ with Euclidean regularization and arbitrary initialization converges to a Nash equilibrium.
\end{restatable}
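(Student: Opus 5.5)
The plan is to reduce \Cref{cor:analytic-games} to \Cref{thm:analytic-convergence} and then identify KKT points with Nash equilibria. In a finite potential game with $n$ players, the joint mixed strategy space is $\cX = \Delta(A_1)\times\dots\times\Delta(A_n)$. This is a nonempty compact polytope: a product of simplices, cut out by finitely many linear equalities and inequalities in $\R^{\sum_i |A_i|}$. The potential extends to $\Phi(\vx) = \sum_{\vec a} P(\vec a)\prod_i x_{i,a_i}$, which is a multilinear polynomial on all of $\R^{\sum_i |A_i|}$ and hence real analytic on an open neighborhood of $\cX$. One should also check that continuous-time $\ftrl$ in the game is exactly~\eqref{eq:ftrl-proj} with this $\Phi$. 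Each player $i$ runs $\ftrl$ on its own utility gradient, and the potential property gives $\nabla_{\vx_i} u_i(\vx) = \nabla_{\vx_i}\Phi(\vx)$, since the difference $u_i - \Phi$ does not depend on $\vx_i$. Because $\cX$ is a product set, the joint Euclidean projection decomposes into per-player projections, as in \Cref{cor:decomposition}. So the joint dynamics are precisely $\vx(t)=\Pi_\cX(\vy_0+\int_0^t\nabla\Phi(\vx(\tau))d\tau)$.

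Next we check that \Cref{thm:analytic-convergence} covers arbitrary initialization $\vy_0$. The statement is phrased for continuous-time $\ftrl$ without fixing the initial point, and the preliminaries already allow a general $\vy_0$. If the proof of the theorem were written only for $\vy_0=\vec 0$, the fix is the substitution $\vx' \mapsto \vx' - \vy_0$. However, this substitution changes $\cX$ only by a translation if we shift the problem rather than the regularizer, so the cleaner route is simply to observe that the Łojasiewicz argument never uses the specific initial dual point. I expect this step to be the only point needing care: confirming that the theorem's proof depends on $\vy_0$ only through the starting state.

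Applying the theorem, $\vx(t)\to\vx^\star$ with $\KKTgap(\vx^\star)=0$. Using the product decomposition of the KKT gap from \Cref{sec:constrainedopt}, this means
\[
\sum_i \max_{\vx_i'\in\Delta(A_i)}\langle \vx_i'-\vx_i^\star,\nabla_{\vx_i}\Phi(\vx^\star)\rangle = 0.
\]
Each summand is nonnegative, because $\vx_i' = \vx_i^\star$ is feasible, so every summand vanishes. By multilinearity, $\langle \vx_i', \nabla_{\vx_i}\Phi(\vx^\star)\rangle = \Phi(\vx_i',\vx_{-i}^\star)$. Replacing $\Phi$ by $u_i$, which differs by a term independent of $\vx_i$, the vanishing summand says $u_i(\vx_i',\vx^\star_{-i}) \le u_i(\vx^\star)$ for all $\vx_i'$. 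Hence $\vx^\star$ is a Nash equilibrium.
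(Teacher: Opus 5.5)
Your proposal is correct and follows the paper's route. A multilinear potential on a product of simplices is analytic on a compact polytope, so \Cref{thm:analytic-convergence} applies, and its proof already carries an arbitrary $\vy_0$; a vanishing KKT gap is then exactly the Nash condition. You fill in more detail than the paper does. One small imprecision: with the standard multilinear extensions, $u_i-\Phi=\langle\vec{1},\vx_i\rangle\,G_i(\vx_{-i})$ for some function $G_i$, so off the simplex this difference does depend on $\vx_i$, and $\nabla_{\vx_i}u_i=\nabla_{\vx_i}\Phi+G_i(\vx_{-i})\vec{1}$. This is harmless, because Euclidean projection onto $\Delta(A_i)$ is invariant under adding multiples of $\vec{1}$, so the game dynamics still coincide with $\ftrl$ on $\Phi$.
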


The detailed proof of~\Cref{thm:analytic-convergence} is deferred to~\Cref{appendix:analytic}, but here we provide an overview. The first key claim, established through some basic projection geometry, is that the velocity satisfies almost everywhere
\[
    \|\dot{\vx}(t)\|_2\geq\|\nabla_{F(\vx(t))}\Phi(\vx(t))\|_2,
\]
where $F(\vx)$ is the \emph{smallest} face of $\cX$ containing $\vx$ and $\nabla_F\Phi$ is the gradient restricted to its affine hull (\Cref{lem:analytic-speed}). Second, the \L{}ojasiewicz gradient inequality can be leveraged to establish a lower bound uniformly across all the faces of $\cX$ (\Cref{lem:analytic-uniform}). Namely, writing $E(t)=\ell-\Phi(\vx(t))$, where $\ell$ is the objective value in the limit, we obtain $\|\dot{\vx}(t)\|_2\geq cE(t)^\theta$ almost everywhere for some $c>0$ and $\theta\in[1/2,1)$, assuming $t$ is large enough and $E(t)>0$; crucially, this holds even when the active face keeps changing. Combining it with the fact that $\dot E=-\|\dot{\vx}\|_2^2$ almost everywhere and integrating implies finite length and thereby convergence to a KKT point $\vx^\star$.
\section{Conclusions}

In conclusion, we showed that follow the regularized leader ($\ftrl$) can fail to converge even asymptotically to first-order stationary points in constrained optimization. This establishes a strong separation between the convergence of $\ftrl$ and mirror descent ($\md$), and shows that lack of convergence of no-regret dynamics can manifest itself even in potential systems. On the positive side, we showed that the analytic property restores convergence for $\ftrl$.

Although we focused on Euclidean regularization, we expect the construction to readily extend to broader classes of regularizers that maintain coordinates fixed at the boundary when their cumulative gradients are negative enough. For steep regularizers such as negative entropy, however, the strategy will always be an interior point for any finite time, so extending our construction would require additional ideas.


\section*{Acknowledgments}

Ioannis Panageas was supported by NSF grant CCF-2454115. Tuomas Sandholm is supported by NIH award A240108S001, the Vannevar Bush Faculty Fellowship ONR N00014-23-1-2876, and National Science Foundation grant RI-2312342. Ioannis Anagnostides is grateful to Emanuel Tewolde and Brian Hu Zhang for helpful discussions. We also thank anonymous NeurIPS reviewers for many helpful comments.

\paragraph{AI disclosure} The first version of this paper was prepared without AI assistance. In subsequent revisions, we used various versions of GPT to simplify the arguments, mainly in the proof of our positive convergence result, and identify relatively minor technical issues.

\bibliographystyle{plainnat}
\bibliography{dairefs}

\clearpage
\appendix

\section{Omitted proofs}
\label{appendix:proofs}

This section contains the omitted proofs. We begin by showing that continuous-time $\ftrl$ has the potential property.

\potpropftrl*

\begin{proof}
Let $\vy(t)=\int_0^t\nabla\Phi(\vx(\tau))\,d\tau$, so that $\dot{\vy}(t)=\nabla\Phi(\vx(t))$ and $\vx(t)=\Pi_{\cX}(\vy(t))$. Both trajectories are Lipschitz, and hence differentiable almost everywhere. Firm nonexpansiveness of the Euclidean projection implies
\[
    \langle\vy(t+h)-\vy(t),\vx(t+h)-\vx(t)\rangle
    \geq\|\vx(t+h)-\vx(t)\|_2^2.
\]
Dividing by $h^2$ and taking the limit $h\to0$ yields
\[
    \frac{d}{dt}\Phi(\vx(t))
    =\langle\dot{\vy}(t),\dot{\vx}(t)\rangle
    \geq\|\dot{\vx}(t)\|_2^2\geq0.
\]
This concludes the proof.
\end{proof}

We next point out a simple decomposition property of $\ftrl$ when the constraint set is a Cartesian product; the proof is immediate. 

\begin{corollary}
    \label{cor:decomposition}
    If $\cX = \cX_1 \times \cX_2 \times \dots \times \cX_n$, then $\ftrl$ can be equivalently expressed as
    \[
    \vx_i(t) = \argmax_{\vx_i' \in \cX_i} \left\{ \left\langle \vx_i', \int_0^t \nabla_{\vx_i} \Phi(\vx(\tau)) d\tau \right\rangle - \frac{1}{2} \|\vx_i' \|_2^2 \right\} \quad \forall i \in [n].
    \]
\end{corollary}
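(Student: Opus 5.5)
The plan is to exploit that both the Euclidean objective in~\eqref{eq:ftrl} and the constraint set split across the blocks $\vx_1,\dots,\vx_n$, so the maximization separates. Write $\vy(t) = \int_0^t \nabla_{\vx}\Phi(\vx(\tau))\,d\tau$, and partition it into blocks $\vy_i(t) = \int_0^t \nabla_{\vx_i}\Phi(\vx(\tau))\,d\tau$. Since the inner product and the squared norm are both sums over blocks, we have
\[
\langle \vx', \vy(t)\rangle - \tfrac12\|\vx'\|_2^2 = \sum_{i=1}^n \left( \langle \vx_i', \vy_i(t)\rangle - \tfrac12 \|\vx_i'\|_2^2 \right).
\]

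Next, because $\cX = \cX_1 \times \dots \times \cX_n$, the feasible choices of $\vx_i'$ are independent across $i$. The maximum of a sum of functions of separate variables over a product set is therefore attained exactly at the tuple of blockwise maximizers. Each block problem maximizes a strongly concave function over the convex compact set $\cX_i$, so its maximizer is unique and equals $\Pi_{\cX_i}(\vy_i(t))$. Uniqueness of the joint maximizer in~\eqref{eq:ftrl} then identifies $\vx(t)$ with the tuple of blockwise maximizers, which is the claimed formula. Equivalently, in the projection form~\eqref{eq:ftrl-proj}, we use $\|\vx'-\vy\|_2^2 = \sum_i \|\vx_i'-\vy_i\|_2^2$, which gives $\Pi_{\cX}(\vy) = (\Pi_{\cX_1}(\vy_1),\dots,\Pi_{\cX_n}(\vy_n))$.

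There is no real obstacle here. The one point to state explicitly is that the blocks are coupled only through the cumulative gradients $\vy_i(t)$, each of which depends on the full trajectory $\vx(\tau)$. The claimed identity holds pointwise in $t$ for any fixed vector $\vy(t)$, so this coupling does not affect the argument.
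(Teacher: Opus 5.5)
Your proposal is correct and is the argument the paper has in mind; the paper omits the proof as immediate. Both the Euclidean $\ftrl$ objective and the product constraint set separate across blocks, so the unique joint maximizer is the tuple of blockwise projections $\Pi_{\cX_i}(\vy_i(t))$. Your remark that the coupling through the cumulative gradients is harmless, because the identity holds pointwise in $t$ for the fixed vector $\vy(t)$, is the right point to make explicit.
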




\begin{lemma}[Smoothness of \(\Phi\)]
\label{lem:phi_smooth}
\(\Phi\) defined in~\eqref{eq:pot-flow} is \(C^\infty\) up to the boundary \(x_1=1\).
\end{lemma}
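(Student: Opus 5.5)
The function $\Phi$ is built from three pieces: $e^{-u}$, the coefficient $a(u)$ (through $L(u)$), and powers of $\delta$. The plan is to prove a single ``flatness'' statement and apply it to each piece. The statement is that a function of the form $e^{-u}P$ has every partial derivative tending to $0$ as $x_1\to1^-$, uniformly in $x_2\in[-1,1]$. Here $P$ is a polynomial in $u$, in $\sin v$ and $\cos v$, in $x_2$, and in $(1+\pi^2u^4\cos^2v)^{-1}$, possibly raised to powers. Granting this, the standard one-variable fact finishes the job: a $C^\infty$ function on $\{x_1<1\}$ all of whose derivatives extend continuously (with value $0$) to $x_1=1$ is $C^\infty$ up to the boundary. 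The same fact gives the extension by $0$ for $x_1\ge1$. This is the classical argument showing $e^{-1/(1-x_1)}$ is $C^\infty$, applied with the mean value theorem to each derivative in turn.

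The key steps, in order, are as follows.

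First, record the chain-rule derivatives $\partial_{x_1}u=u^2$, $\partial_{x_1}v=\pi u^2$, $\partial_{x_1}\delta=-\pi u^2\cos v$ and $\partial_{x_2}\delta=1$. So $\partial_{x_2}$ acts only on $\delta$.

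Second, define the class $\mathcal{A}$ of finite sums
\[
e^{-u}\,u^{j}\,\sin^{p}v\,\cos^{r}v\,\delta^{s}\,D^{-n},\qquad D\defeq 1+\pi^2u^4\cos^2v .
\]
Then show $\mathcal{A}$ is closed under $\partial_{x_1}$ and $\partial_{x_2}$. Differentiating $e^{-u}$, $u^j$, $\sin v$, $\cos v$ or $\delta$ produces extra factors of $u^2$ and bounded trigonometric terms. Differentiating $D^{-n}$ produces $-nD^{-n-1}\partial_{x_1}D$, where $\partial_{x_1}D=4\pi^2u^5\cos^2v-2\pi^3u^6\cos v\sin v$, which again lands in the class. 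Each of the three terms of $\Phi$ lies in $\mathcal{A}$, with $a(u)=e^{-u}u^4D^{-1}$.

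Third, bound each monomial. Using $D\ge1$, $|\sin v|,|\cos v|\le1$ and $|\delta|\le2$ for $x_2\in[-1,1]$ (or locally bounded $x_2$ on the enlarged domain), each monomial is at most $C\,u^{j}e^{-u}$, which tends to $0$ as $u\to\infty$. Hence every derivative of every order tends to $0$ uniformly in $x_2$ as $x_1\to1^-$.

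Fourth, conclude by induction on the derivative order. Every derivative of $\Phi$ extends continuously by $0$ to $x_1=1$. Then one-sided difference quotients in $x_1$ at the boundary, controlled by the mean value theorem, show that the derivatives there exist and equal $0$.

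The main obstacle is the closure step, specifically the factor $D^{-1}$. Naively, each differentiation of $D^{-n}$ could cost a large power of $u$ because $D$ can be as large as $u^4$. The key observation is that we never need $D$ to be large. Only the upper bound $D^{-1}\le1$ is used, and all $u$-growth, of whatever polynomial degree, is absorbed by $e^{-u}$. I would therefore track only the polynomial degree in $u$, which grows by at most $6$ per derivative, and not try to exploit any cancellation.
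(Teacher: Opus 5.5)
Your proof is correct and follows essentially the same route as the paper. Both arguments track terms of the form $e^{-u}$ times a polynomial in $u,\sin v,\cos v$, divided by a power of $1+\pi^2u^4\cos^2 v$, show that this form is stable under differentiation, and use the denominator bound $\geq 1$ together with $u^M e^{-u}\to 0$ to conclude that every derivative vanishes at $x_1=1$. The differences are cosmetic: you fold the powers of $\delta$ into your class $\mathcal{A}$, whereas the paper handles the $\delta$-terms and the $x_2$-derivatives separately using that $\Phi$ is quadratic in $x_2$, and you spell out the mean-value-theorem step at the boundary that the paper leaves implicit.
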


\begin{proof}
For \(x_1<1\), the function \(\Phi\) is smooth. Indeed, it is obtained by composition, addition, multiplication, and division of smooth functions, and the denominator in \(L(u)\), $1+\pi^2u^4\cos^2 v$, is strictly positive. Thus, the only issue is smoothness at the boundary
\(x_1=1\), which corresponds to \(u\to+\infty\).

To argue about this, we will use the standard fact that exponential decay dominates polynomial
growth: for every \(M>0\), $u^M e^{-u}\to 0$ as $u \to +\infty$. This implies that any expression of the form $e^{-u} R(u,\sin v,\cos v)$, where \(R\) grows at most polynomially in \(u\), tends to zero as \(u\to+\infty\), since \(\sin v\) and \(\cos v\) are bounded. For the expressions appearing below, this decay persists under differentiation. Each expression is a finite sum of terms $e^{-u}P(u,\sin v,\cos v)/(1+\pi^2u^4\cos^2v)^j$, with $P$ a polynomial and $j \geq0 $ an integer. The operator $\partial_{x_1}=u^2\partial_u$ preserves this form, so every derivative with respect to $x_1$ is still bounded by an exponentially decaying factor times a polynomial in $u$, and therefore tends to zero.

We now apply this observation to the three terms in \eqref{eq:pot-flow}, which we recall here for completeness:
\[
\Phi=-e^{-u}-L(u)\delta-\frac12 e^{-u}\delta^2.
\]
First, \(-e^{-u}\) is flat at \(x_1=1\). That is, all of its \(x_1\)-derivatives
are of the form $e^{-u}P(u)$, where \(P\) is a polynomial. Thus, all of them tend to zero as
\(u\to+\infty\). Next, we consider the linear term \(-L(u)\delta\). Since
\[
    L(u)
    =
    -
    \frac{\pi u^4e^{-u}\cos v}
    {1+\pi^2u^4\cos^2 v},
\]
and the denominator is bounded below by \(1\), the factor \(L(u)\) contains the exponentially decaying factor \(e^{-u}\). Therefore \(L(u)\), together with all
of its derivatives with respect to $x_1$, tends to zero as \(u\to+\infty\). Moreover, $\delta=x_2-\sin v$. Differentiating \(\delta\) with respect to \(x_1\) only produces polynomial
powers of \(u\) and bounded trigonometric factors. So, by the product rule, every \(x_1\)-derivative of \(L(u)\delta\) is a finite sum of terms of the form $e^{-u}R(u,\sin v,\cos v,x_2)$, where \(R\) grows at most polynomially in \(u\). Thus, all such derivatives tend to zero as \(u\to+\infty\).  Finally, we consider the quadratic term $-\frac12 e^{-u}\delta^2$. The factor \(e^{-u}\) is flat at the boundary. Differentiating \(\delta^2\) with respect to \(x_1\) produces only polynomial powers of \(u\), powers of \(\delta\), and bounded trigonometric factors. Since smoothness is a local property in the \(x_2\)-variable, we may restrict \(x_2\) to a compact set; on such a set, \(\delta=x_2-\sin v\) is bounded. Therefore, every \(x_1\)-derivative of \(e^{-u}\delta^2\) also tends to zero as \(u\to+\infty\).

It remains to consider differentiation with respect to \(x_2\). Since \(\Phi\) depends at most quadratically on \(x_2\), these derivatives can be computed immediately and read as follows.
\[
\partial_{x_2}\Phi
=
-L(u)-e^{-u}\delta,
\quad
\partial_{x_2}^2\Phi
=
-e^{-u},
\quad
\partial_{x_2}^k\Phi
=
0
\quad\text{for all }
k\geq 3.
\]
Now, each of these expressions, and each of their derivatives with respect to $x_1$, tends to zero
as \(u\to+\infty\) by the preceding argument. Therefore, every mixed partial derivative $\partial_{x_1}^\alpha\partial_{x_2}^\beta \Phi$ has a continuous extension to \(x_1=1\), and its boundary value is zero. Thus, the extension defined by $\Phi(1,x_2)=0$ is \(C^\infty\) up to the boundary \(x_1=1\).
\end{proof}

\forinv*

\begin{proof}
We work with the gradient-ascent dynamics \(\dot{\vx}=\nabla\Phi(\vx)\), where
\[
    \Phi
    =
    -e^{-u}
    -
    L(u)\delta
    -
    \frac12 e^{-u}\delta^2,
    \quad
    \delta=x_2-\sin v,
    \quad
    v=\pi(u-1).
\]
To prove that \(\mathcal M=\{\delta=0\}\) is forward invariant, it is enough to show that \(\dot\delta=0\) whenever \(\delta=0\). We first compute the two velocity components on \(\mathcal M\). Since
\(\partial_{x_2}\delta=1\),
\begin{equation}
    \label{eq:trapping_dot_x2}
    \dot x_2
    =
    \partial_{x_2}\Phi
    =
    -L(u)-e^{-u}\delta
    =
    -L(u)
\end{equation}
on $\mathcal{M}$. For the \(x_1\)-component, we use the fact that
\(\partial u/\partial x_1=u^2\), \(\partial v/\partial x_1=\pi u^2\), and thus
\[
    \frac{\partial\delta}{\partial x_1}
    =
    -\pi u^2\cos v.
\]
On \(\mathcal M\), all terms in \(\partial_{x_1}\Phi\) containing an explicit
factor of \(\delta\) vanish, so
\[
    \partial_{x_1}\Phi
    =
    u^2e^{-u}
    -
    L(u)\frac{\partial\delta}{\partial x_1}
    =
    u^2e^{-u}
    +
    \pi u^2L(u)\cos v .
\]
Since \(\dot x_1=\partial_{x_1}\Phi\) and \(\dot u=u^2\dot x_1\), we get
\begin{equation}
    \label{eq:trapping_dot_u_before_L}
    \dot u
    =
    u^4e^{-u}
    +
    \pi u^4L(u)\cos v .
\end{equation}
Substituting \(L(u)=-\pi a(u)\cos v\),
\[
    \dot u
    =
    u^4e^{-u}
    -
    \pi^2u^4a(u)\cos^2 v .
\]
By the definition of
\[
    a(u)
    =
    \frac{u^4e^{-u}}
    {1+\pi^2u^4\cos^2 v},
\]
we have \(u^4e^{-u}=a(u)(1+\pi^2u^4\cos^2 v)\). Therefore, we conclude that $\dot u=a(u)>0$ on $\mathcal{M}$. It remains to check that the tracking error is preserved. Differentiating
\(\delta=x_2-\sin v\) along the flow gives
\begin{equation}
    \label{eq:trapping_dot_delta}
    \dot\delta
    =
    \dot x_2-\cos v\,\dot v
    =
    \dot x_2-\pi\cos v\,\dot u .
\end{equation}
Combining with~\eqref{eq:trapping_dot_x2}, we obtain
\[
    \dot\delta
    =
    -L(u)-\pi a(u)\cos v
\]
on $\mathcal{M}$. This is zero because \(L(u)=-\pi a(u)\cos v\). Thus, \(\dot\delta=0\) whenever \(\delta=0\), so \(\mathcal M\) is forward invariant, and \(\dot u=a(u)>0\) along
\(\mathcal M\).
\end{proof}

\omegalimit*

\begin{proof}
By \cref{cor:initialization_on_invariant_curve}, if the trajectory is initialized on \(\mathcal M\), then it remains on \(\mathcal M\) for all future times, and \(\dot u>0\) holds. Moreover, $0<a(u)\leq u^4e^{-u}$ is bounded for $u\geq1$, so $u(t)$ cannot diverge in finite time. We now show that \(u(t)\) cannot converge to a finite value. Suppose, for the sake of contradiction, that $u(t)\to u_\ast<+\infty$. Since \(u(t)\) is increasing, this would mean that the trajectory approaches a finite point on the invariant curve. But \(\dot u\) is continuous and strictly positive at \(u_\ast\), by definition. Therefore, there exists \(\eta>0\) such that, for all sufficiently large \(t\), $\dot u(t)\ge \eta>0$. This is incompatible with the fact that \(u(t)\) converges to the finite value \(u_\ast\), thereby implying $u(t)\to+\infty$. Since \(x_1(t)=1-1/u(t)\), we immediately obtain $x_1(t) \to 1$. It remains to identify the accumulation set in the \(x_2\)-coordinate. Since \(u(t)\to+\infty\), the phase $v(t)=\pi(u(t)-1)$ also tends to \(+\infty\). Moreover, \(v(t)\) is continuous and strictly increasing because \(\dot v(t)=\pi\dot u(t)>0\), so the phase passes through infinitely many full periods. Along the invariant curve, $x_2(t)=\sin(v(t))$, so \(x_2(t)\) has every value in \([-1,1]\) as an accumulation point.

In particular, let us fix any \(y\in[-1,1]\), and choose
\(\alpha\in[-\pi/2,\pi/2]\) such that \(\sin\alpha=y\). Since \(v(t)\) is continuous, strictly increasing, and tends to \(+\infty\), for every sufficiently large integer \(n\) there is a time \(t_n\) such that $v(t_n)=\alpha+2\pi n$. Then, $x_2(t_n)=\sin(v(t_n))=y$ while \(x_1(t_n)\to1\). Therefore,
\[
    (x_1(t_n),x_2(t_n))\to(1,y).
\]
Since \(y\in[-1,1]\) was arbitrary, every point of
\(\{1\}\times[-1,1]\) is an accumulation point. Conversely, every accumulation point must lie in \(\{1\}\times[-1,1]\), because
\(x_1(t)\to1\) and \(x_2(t)\in[-1,1]\). Therefore, $\omega(\vx(0))=\{1\}\times[-1,1]$, as claimed.
\end{proof}

\dwellxtwo*

\begin{proof}
By~\Cref{cor:initialization_on_invariant_curve}, the phase satisfies
\[
    \dot\phi=\dot v=\pi\dot u=\pi a(u)
    =\frac{\pi u^4e^{-u}}{1+\pi^2u^4\cos^2 v}>0.
\]
Thus, we can compute dwell times by changing variables from time to phase.
During the \(k\)th oscillation, \(u=U_k+\phi/\pi\) and \(\cos v=\cos\phi\), so the time density is
\begin{equation}
    \label{eq:phase_time_density}
    g_k(\phi)\defeq\frac{dt}{d\phi}
    =\frac{1}{\dot v}
    =\frac{e^u}{\pi u^4}+\pi e^u\cos^2\phi.
\end{equation}
In particular, \(g_k(\phi)\,d\phi\) is the time spent while the local phase moves through the infinitesimal interval \(d\phi\).
On the invariant curve, \(x_2(t)=\sin\phi(t)\) during each oscillation. For each $i$, we let
\[
    S_i\defeq\{\phi\in[0,2\pi]:\sin\phi\in A_i^{(m)}\}.
\]
Since $\sin\phi=\sin(\pi-\phi)$, this set is the union of $I_i^{(m)}$ and its reflection $\pi-I_i^{(m)}$, interpreted modulo $2\pi$. The union has measure at most $2\ell^{(m)}$, including for wrapping windows and windows containing extrema. Thus,
\[
    T_{k,i}^{(m)}=\int_{S_i}g_k(\phi)\,d\phi
    \quad\text{and}\quad
    Q_{k,i}^{(m)}=\int_{[0,2\pi]\setminus S_i}g_k(\phi)\,d\phi.
\]
Since $k\geq2$, we have $U_k\geq3$, so the formula for the density gives
\[
    g_k(\phi)\leq e^{U_k}e^2\left(\frac{1}{81\pi}+\pi\right)
    \quad\text{and}\quad
    g_k(\phi)\geq e^{U_k}\pi\cos^2\phi.
\]
As a result,
\[
    T_{k,i}^{(m)}
    \leq e^{U_k}e^2\left(\frac{1}{81\pi}+\pi\right)2\ell^{(m)},
\]
while
\[
    \begin{aligned}
        Q_{k,i}^{(m)}
        &\geq e^{U_k}\pi\left(\int_0^{2\pi}\cos^2\phi\,d\phi
            -\int_{S_i}\cos^2\phi\,d\phi\right)\\
        &\geq e^{U_k}\pi(\pi-2\ell^{(m)}).
    \end{aligned}
\]
Using $\ell^{(m)}=\pi2^{-m}$ and $m\geq5$, we obtain
\[
    \frac{T_{k,i}^{(m)}}{Q_{k,i}^{(m)}}
    \leq\frac{2e^2(1+1/(81\pi^2))}{1-2^{1-m}}\,2^{-m}
    <16\,2^{-m}.
\]
This proves the claimed bound for every $i \in \{0,\ldots,N^{(m)}-1\}$, with both dwell times measured during the same oscillation.
\end{proof}

\auxcumgradient*
\begin{proof}
Fix $m\geq8$ and $i\in\{0,\ldots,N^{(m)}-1\}$. We set $r\defeq16\,2^{-m}$ and let
\[
    R_i(t)\defeq\int_0^t
    \bigl(q_i^{(m)}(x_2(\tau))-C\psi(x_1(\tau))\bigr)\,d\tau
\]
denote the cumulative gradient of coordinate $x_{3+i}$. Since $\psi(x_1(t))=e^{-u(t)}$ along the trajectory, choosing
\[
    M\defeq\max_{\substack{0\leq j<N^{(m)}\\z\in[-1,1]}}
        \max\{q_j^{(m)}(z),0\}
    \quad\text{and}\quad
    C\geq e^5(M+1)
\]
ensures that, on the first two oscillations,
\[
    q_i^{(m)}(x_2(t))-C\psi(x_1(t))
    \leq M-Ce^{-5}\leq-1,
    \quad 0\leq t\leq t_3,
\]
because $1\leq u(t)\leq5$ on this interval. In particular, $R_i(t)\leq0$ for every $t\in[0,t_3]$. The bounds preceding \Cref{lemma:cum-grad}, together with \cref{lem:x2_level_dwell_time}, imply that, for every $k\geq2$,
\[
    \int_{t_k}^{t_{k+1}}q_i^{(m)}(x_2(t))\,dt
    \leq\frac{(\Delta_i^{(m)})^2}{16}
        \bigl(4T_{k,i}^{(m)}-3Q_{k,i}^{(m)}\bigr)
    \leq-\frac{(\Delta_i^{(m)})^2}{16}(3-4r)Q_{k,i}^{(m)}.
\]
Moreover, the density formula~\eqref{eq:phase_time_density} and $U_{k+1}=U_k+2$ give $g_{k+1}(\phi)\leq e^2g_k(\phi)$. Since the phase set corresponding to $A_i^{(m)}$ is the same in each oscillation, it follows that
\[
    T_{k+1,i}^{(m)}\leq e^2T_{k,i}^{(m)}.
\]
Along the trajectory, $q_i^{(m)}(x_2)$ can be positive only when $x_2\in A_i^{(m)}$, and is bounded above by $(\Delta_i^{(m)})^2/4$. Thus, for any part of the $k$th oscillation,
\[
    \int_{t_k}^t q_i^{(m)}(x_2(\tau))\,d\tau
    \leq\frac{(\Delta_i^{(m)})^2}{4}T_{k,i}^{(m)},
    \quad t\in[t_k,t_{k+1}].
\]
We now argue by induction. Suppose that $k\geq3$ and $R_i(t)\leq0$ for all $t\leq t_k$. Since the correction term is nonpositive and $R_i(t_{k-1})\leq0$, the bound for the preceding complete oscillation gives
\[
    R_i(t_k)\leq-\frac{(\Delta_i^{(m)})^2}{16}(3-4r)Q_{k-1,i}^{(m)}.
\]
As a result, for $t\in[t_k,t_{k+1}]$,
\[
    \begin{aligned}
        R_i(t)
        &\leq-\frac{(\Delta_i^{(m)})^2}{16}(3-4r)Q_{k-1,i}^{(m)}
            +\frac{(\Delta_i^{(m)})^2}{4}T_{k,i}^{(m)}\\
        &\leq\frac{(\Delta_i^{(m)})^2}{16}
            \bigl(-(3-4r)+4e^2r\bigr)Q_{k-1,i}^{(m)}
        \leq0,
    \end{aligned}
\]
where the second inequality uses $T_{k,i}^{(m)}\leq e^2T_{k-1,i}^{(m)}\leq e^2rQ_{k-1,i}^{(m)}$. The last inequality holds because $m\geq8$ implies $2^m>\frac{64}{3}(1+e^2)$. The base case holds on $[0,t_3]$, so induction proves the claim for all $t\geq0$.
\end{proof}

\begin{restatable}[Coverage and overlap of the trimmed intervals]{lemma}{CovXTwo}
\label{lem:trimmed_boxes_cover}
For every \(m\ge1\), the trimmed intervals
\(\boxB_i^{(m)}\) cover \([-1,1]\), and consecutive
intervals overlap by at least
\[
    \omega^{(m)}
    \defeq
    2\sin^3\left(\frac{h^{(m)}}{2}\right)
    \sin\left(\frac{3h^{(m)}}{2}\right)>0.
\]
That is, with indices interpreted modulo \(N^{(m)}\),
\[
    \left|
        \boxB_i^{(m)}\cap \boxB_{i+1}^{(m)}
    \right|
    \ge
    \omega^{(m)}
    \quad
    \text{for every } i \in \{0,\ldots,N^{(m)}-1\}.
\]
\end{restatable}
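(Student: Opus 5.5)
The plan is to reduce everything to explicit trigonometric formulas for the endpoints of the sets $A_i^{(m)}=\sin(I_i^{(m)})$, and then to compare consecutive trimmed intervals directly. Write $h=h^{(m)}$, so $I_i^{(m)}=[ih,(i+2)h]$ and $N^{(m)}h=2\pi$. Because $\ell^{(m)}=2h\le\pi/2$, each window contains at most one extremum of $\sin$. So $A_i^{(m)}$ has one of two forms.
\begin{itemize}
\item If $\sin$ is monotone on the window, $A_i^{(m)}$ is the interval between $\sin(ih)$ and $\sin((i+2)h)$.
\item If the window contains $\pi/2$ or $3\pi/2$, $A_i^{(m)}$ is an endpoint interval reaching $\pm1$.
\end{itemize}
The map $\phi\mapsto\sin\phi$ restricted to the windows, traversed in order $i=0,1,\dots$, sweeps $[-1,1]$ up and down. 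Coverage therefore reduces to a single claim: consecutive trimmed intervals overlap, and the endpoint intervals contain $\pm1$. The second fact is immediate from the construction, since the extended endpoint interval reaches beyond $\pm1$ by $\Delta_i/4$. For the first fact, note that the union of consecutive overlapping intervals along a monotone sweep is an interval. It runs from the endpoint interval at $-1$ to the one at $1$, and hence contains $[-1,1]$.

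The main step is the overlap estimate. I would work with the three consecutive sample values $s_0=\sin(ih)$, $s_1=\sin((i+1)h)$, $s_2=\sin((i+2)h)$, $s_3=\sin((i+3)h)$.
\begin{itemize}
\item Interior, increasing case (phase in $[-\pi/2,\pi/2]$). Here $\widetilde A_i=[s_0+\Delta_i/4,\,s_2-\Delta_i/4]$ and $\widetilde A_{i+1}=[s_1+\Delta_{i+1}/4,\,s_3-\Delta_{i+1}/4]$. The overlap is
\[
\min\{s_2-\Delta_i/4,\;s_3-\Delta_{i+1}/4\}-\max\{s_0+\Delta_i/4,\;s_1+\Delta_{i+1}/4\}.
\]
\item Reduction to one comparison. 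Since $|\widetilde A_{j}|=\Delta_j/2$ for interior intervals, it suffices to bound $(s_2-\Delta_i/4)-(s_1+\Delta_{i+1}/4)$ from below. Indeed, the other candidates for the min and max give at least the half-lengths, which are themselves of order $h$.
\item Evaluating the key quantity. With $\Delta_i=s_2-s_0$ and $\Delta_{i+1}=s_3-s_1$, it becomes
\[
\tfrac14\bigl(3s_2-3s_1+s_0-s_3\bigr),
\]
a third finite difference of $\sin$ with step $h$. Using sum-to-product identities, $s_2-s_1=2\cos((i+\tfrac32)h)\sin(h/2)$ and $s_3-s_0=2\cos((i+\tfrac32)h)\sin(3h/2)$. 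The expression then equals
\[
\tfrac12\cos\bigl((i+\tfrac32)h\bigr)\bigl(3\sin(h/2)-\sin(3h/2)\bigr)=2\cos\bigl((i+\tfrac32)h\bigr)\sin^3(h/2).
\]
\item The cosine factor. The factor $\cos((i+\tfrac32)h)$ can be small near the extrema. But in the interior monotone regime the midpoint phase stays at least $3h/2$ away from $\pm\pi/2$ (modulo $2\pi$), so $\cos((i+\tfrac32)h)\ge\sin(3h/2)$. This yields exactly $\omega^{(m)}=2\sin^3(h/2)\sin(3h/2)$.
\end{itemize}
The decreasing case follows by the symmetry $\phi\mapsto\pi-\phi$, or equivalently $x_2\mapsto -x_2$. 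The wrapping window is handled by periodicity, with indices taken modulo $N^{(m)}$.

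The case I expect to be the main obstacle is a pair involving an endpoint interval, where the trimming is asymmetric. Consider the case where $\widetilde A_i$ reaches $1$: it is shifted up by $\Delta_i/4$, while its neighbour $\widetilde A_{i\pm1}$ is an ordinary or also-endpoint interval. Near $\pi/2$ two windows may both reach $1$ (when $\pi/2$ is a shared point), and the sweep reverses direction there. My plan for this case is to compute explicitly, using that the peak $\pi/2$ is a multiple of $h$ (because $N^{(m)}=2^{m+2}$). This leaves only finitely many configurations: windows $[\pi/2-2h,\pi/2]$, $[\pi/2-h,\pi/2+h]$, and $[\pi/2,\pi/2+2h]$. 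Their lower trimmed endpoints are of the form $1-c(1-\cos h)$ or $1-c(1-\cos 2h)$ for explicit constants $c$. I would check that the pairwise overlaps are of order $h^2$. That is much larger than $\omega^{(m)}\sim h^4$, so I would verify the inequality $\ge\omega^{(m)}$ by elementary bounds such as $1-\cos 2h\ge\ldots$ and $\sin x\le x$, rather than by an exact identity.
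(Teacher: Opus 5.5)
\textbf{Where the proposal breaks: the endpoint case.} Your monotone-branch computation is the same as the paper's: the third finite difference equals $2\sin^3(h/2)\cos((i+\tfrac32)h)$. Your chain-of-overlaps coverage argument also matches. The gap is your claim that every consecutive overlap involving an endpoint interval is of order $h^2$. You plan to get $\geq\omega^{(m)}$ from crude bounds using the gap between $h^2$ and $\omega^{(m)}\sim h^4$.

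With $r=2^m$ and $\widetilde{A}_j=[L_j,U_j]$, the claim is true for endpoint--endpoint pairs: $\widetilde{A}_{r-2}\cap\widetilde{A}_{r-1}$ has length exactly $1-\cos h$. It is false for the pair formed by the last ordinary interval and the first endpoint interval, i.e.\ windows $[\pi/2-3h,\pi/2-h]$ and $[\pi/2-2h,\pi/2]$. Your list of configurations omits this pair. Your interior case does not cover it either, because there you trim both intervals symmetrically. Here $U_{r-3}=(\cos 3h+3\cos h)/4$ and $L_{r-2}=(3\cos 2h+1)/4$, so the overlap is
\[
\tfrac14\bigl(\cos 3h+3\cos h-3\cos 2h-1\bigr)=\tfrac12(1-\cos h)^2(2\cos h+1)=2\sin^3(h/2)\sin(3h/2)=\omega^{(m)}
\]
exactly. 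This is the pair where the lemma is tight, so elementary estimates have no slack. For $m=1$ ($h=\pi/4$), every consecutive pair involves an endpoint interval, asymptotic orders say nothing, and the ordinary--endpoint overlaps equal $(\sqrt{2}-1)/4=\omega^{(1)}$.

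\textbf{The repair.} It uses your own formula, and it is what the paper does. The endpoint rule only moves the end at $\pm1$, so $L_{r-2}$ is still $s_1+\Delta_{r-2}/4$. Meanwhile $U_{r-2}>1>U_{r-3}$. The overlap is therefore exactly your key quantity $U_i-L_{i+1}$ at midpoint phase $(i+\tfrac32)h=\pi/2-3h/2$, which gives $\omega^{(m)}$. The mirror pair $(r,r+1)$ and the pairs near $-1$ follow by symmetry.

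Relatedly, for genuinely interior--interior pairs the midpoint phase is at least $5h/2$ from $\pm\pi/2$. Your bound $\cos((i+\tfrac32)h)\geq\sin(3h/2)$ is therefore slack there; the tight case is exactly the one you deferred.

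\textbf{A minor point.} On a monotone branch, $L_i<L_{i+1}$ and $U_i<U_{i+1}$, so the overlap equals $U_i-L_{i+1}$ and no comparison with half-lengths is needed. Next to the extrema those half-lengths are of order $h^2$, not $h$.
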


\begin{proof}
Since the phase windows \(I_i^{(m)}\) cover one full period and
\(x_2=\sin\phi\) along the invariant curve, the intervals
\(A_i^{(m)}=\sin(I_i^{(m)})\) cover \([-1,1]\). It remains to check that passing from \(A_i^{(m)}\) to the trimmed intervals \(\boxB_i^{(m)}\) does not create gaps, and in fact leaves a uniformly positive overlap between consecutive trimmed intervals.

We first consider two consecutive non-wrapping phase windows whose images avoid $\pm1$ and whose union lies on
a monotone branch of \(\sin\):
\[
    I_i^{(m)}=[ih^{(m)},(i+2)h^{(m)}],
    \quad
    I_{i+1}^{(m)}=[(i+1)h^{(m)},(i+3)h^{(m)}].
\]
Suppose first that \(\sin\) is increasing on
\([ih^{(m)},(i+3)h^{(m)}]\). Then
\[
    A_i^{(m)}
    =
    [\sin(ih^{(m)}),\sin((i+2)h^{(m)})],
    \quad
    A_{i+1}^{(m)}
    =
    [\sin((i+1)h^{(m)}),\sin((i+3)h^{(m)})].
\]
Using
\[
    \boxB_j^{(m)}
    =
    \left[\alpha_j^{(m)}+\frac{\Delta_j^{(m)}}{4},
        \beta_j^{(m)}-\frac{\Delta_j^{(m)}}{4}
    \right]
    =
    \left[
    \frac{3\alpha_j^{(m)}+\beta_j^{(m)}}{4},
    \frac{\alpha_j^{(m)}+3\beta_j^{(m)}}{4}
    \right]
\]
for $j\in\{i,i+1\}$, the overlap length is
\[
\begin{aligned}
    \left|
        \boxB_i^{(m)}
        \cap
        \boxB_{i+1}^{(m)}
    \right|
    &=
    \frac{
        \sin(ih^{(m)})
        +
        3\sin((i+2)h^{(m)})
        -
        3\sin((i+1)h^{(m)})
        -
        \sin((i+3)h^{(m)})
    }{4}                                                   \\
    &=
    2\sin^3\left(\frac{h^{(m)}}{2}\right)
    \cos\left(\left(i+\frac32\right)h^{(m)}\right).
\end{aligned}
\]
On a monotone increasing branch, the cosine factor is nonnegative. Moreover, for
the monotone pairs under consideration, its smallest possible value occurs for the pair closest to a critical point of \(\sin\), and is at least \(\sin(3h^{(m)}/2)\). Therefore,
\[
    \left|
        \boxB_i^{(m)}
        \cap
        \boxB_{i+1}^{(m)}
    \right|
    \ge
    2\sin^3\left(\frac{h^{(m)}}{2}\right)
    \sin\left(\frac{3h^{(m)}}{2}\right)
    =
    \omega^{(m)}.
\]
The case where \(\sin\) is decreasing on \([ih^{(m)},(i+3)h^{(m)}]\) is identical, with the order of the endpoints reversed. It remains to check the pairs near the extrema of \(\sin\). We treat the maximum \(1\); the minimum \(-1\) is symmetric. Let $r \defeq 2^m$. The phase windows whose \(A\)-intervals contain \(1\) are those with indices \(r-2\), \(r-1\), and \(r\). By the endpoint rule, these intervals are trimmed on
the side away from \(1\) and extended on the side beyond \(1\). Thus, \(1\) is strictly inside each of the corresponding \(\boxB\)-intervals, so these endpoint intervals overlap with one another. Their pairwise overlap has length at least $1-\cos h^{(m)}\geq\omega^{(m)}$, since $\omega^{(m)}=(1-\cos h^{(m)})\sin(h^{(m)}/2)\sin(3h^{(m)}/2)$.

We only need to check the overlap with the adjacent ordinary intervals. Consider
the left adjacent pair \(\boxB_{r-3}^{(m)}\) and \(\boxB_{r-2}^{(m)}\). Since
\(rh^{(m)}=\pi/2\), we have
\[
    A_{r-3}^{(m)}
    =
    [\cos(3h^{(m)}),\cos h^{(m)}],
    \quad
    A_{r-2}^{(m)}
    =
    [\cos(2h^{(m)}),1].
\]
Thus, the overlap length is at least
\[
    \frac{
        \cos(3h^{(m)})
        +
        3\cos h^{(m)}
        -
        3\cos(2h^{(m)})
        -
        1
    }{4}.
\]
Using the identity
\[
    \cos(3h^{(m)})
    +
    3\cos h^{(m)}
    -
    3\cos(2h^{(m)})
    -
    1
    =
    2(1-\cos h^{(m)})^2(2\cos h^{(m)}+1),
\]
this becomes
\[
    \frac12(1-\cos h^{(m)})^2(2\cos h^{(m)}+1)
    =
    2\sin^3\left(\frac{h^{(m)}}{2}\right)
    \sin\left(\frac{3h^{(m)}}{2}\right)
    =
    \omega^{(m)}.
\]
The right adjacent pair around the maximum is handled by symmetry around \(\pi/2\). The minimum \(-1\) is handled in the same way by symmetry around \(3\pi/2\). Finally, the wrapping pair is covered by the monotone-branch argument after identifying
\[
    I_{N^{(m)}-1}^{(m)}
    =
    [2\pi-h^{(m)},2\pi]\cup[0,h^{(m)}]
\]
with the interval \([-h^{(m)},h^{(m)}]\). Its union with
\(I_0^{(m)}=[0,2h^{(m)}]\) lies on a monotone branch of \(\sin\) for \(m\ge2\). The other pair involving the wrapping window follows by reflection about phase zero. For $m=1$, both overlaps involving the wrapping window have length $(\sqrt{2}-1)/4=\omega^{(1)}$.

Thus, every consecutive pair of trimmed intervals overlaps by at least \(\omega^{(m)}>0\). In particular, the intervals form an overlapping chain whose union contains both endpoints \(-1\) and \(1\). Therefore, they cover the trajectory's $x_2$-range: $[-1,1]\subseteq \bigcup_{i=0}^{N^{(m)}-1}\boxB_i^{(m)}$.
\end{proof}

The preceding lemma shows that the trimmed intervals
cover the trajectory's \(x_2\)-range with positive overlaps. Each interval
\(\boxB_i^{(m)}=[\boxalpha_i,\boxbeta_i]\) is paired with the auxiliary coordinate \(x_{3+i}\), for \(i\in\{0,\ldots,N^{(m)}-1\}\). In the augmented potential~\eqref{eq:interm-pot}, this coordinate is coupled to the quadratic function
\[
    q_i^{(m)}(x_2)
    \defeq
    (x_2-\boxalpha_i)(\boxbeta_i-x_2).
\]
Thus, whenever \(x_2\) lies in the interior of the interval \(\boxB_i^{(m)}\), the coordinate \(x_{3+i}\) receives a positive gradient: increasing \(x_{3+i}\) from the boundary value \(0\) is a feasible improving deviation. Outside \(\boxB_i^{(m)}\), the quadratic is always nonpositive; in fact, for $x_2\in[-1,1]\setminus A_i^{(m)}$, it is strictly negative. This structure is what allows the cumulative negative contribution to dominate the cumulative positive contribution in the dwell-time argument, as shown in \cref{sec:ftrl}.

It remains to show that, throughout the execution of \(\ftrl\), there are always variables \(x_{3+i}\) with a constant \(\KKTgap\).  The point of \cref{lem:trimmed_boxes_cover} is that such an index \(i\) always exists. Indeed, the trimmed intervals cover the trajectory's $x_2$-range, and consecutive intervals overlap by at least \(\omega^{(m)}>0\). Thus, \(x_2\)
cannot pass from one interval to the next only through points where all adjacent quadratics vanish. Moreover, at the endpoints \(\pm1\), our convention ensures that the corresponding endpoint term remains strictly positive. We split the proof into two parts. Both lemmas below concern the intermediate objective $\widetilde{\Phi}$ from~\eqref{eq:interm-pot}.

\begin{lemma}
\label{lem:kkt_gap_inside_box}
Fix \(m\ge1\) and \(i\in\{0,\ldots,N^{(m)}-1\}\). Suppose that the auxiliary coordinate \(x_{3+i}\) is kept at the boundary value \(0\). If
\(
    q_i^{(m)}(x_2(t))>0,
\)
then the coordinate-wise first-order gap of \(x_{3+i}\) is strictly positive:
\[
    \max_{x'_{3+i}\in\cX_{3+i}}
    \left\langle
        x'_{3+i}-x_{3+i},
        \nabla_{x_{3+i}}{\widetilde{\Phi}}(\vx(t))
    \right\rangle
    >0.
\]
\end{lemma}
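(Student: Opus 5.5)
The proof is a direct computation of the coordinate-wise gap, using that $x_{3+i}$ enters $\widetilde{\Phi}$ linearly. First, I will compute the partial derivative from \eqref{eq:interm-pot}. The only term of $\widetilde{\Phi}$ involving $x_{3+i}$ is $x_{3+i}\,q_i^{(m)}(x_2)$, because $\Phi(x_1,x_2)$ does not depend on the auxiliary coordinates and the other summands involve different variables. Hence
\[
    \nabla_{x_{3+i}}\widetilde{\Phi}(\vx(t)) = q_i^{(m)}(x_2(t)),
\]
which is independent of the value of $x_{3+i}$ and is strictly positive by hypothesis.

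Second, I will evaluate the maximum over $\cX_{3+i}=[0,1]$ with $x_{3+i}(t)=0$. The objective $x'_{3+i}\mapsto x'_{3+i}\,q_i^{(m)}(x_2(t))$ is linear with positive slope, so it is maximized at $x'_{3+i}=1$. This gives the gap
\[
    \max_{x'_{3+i}\in[0,1]} \left\langle x'_{3+i}-0,\; q_i^{(m)}(x_2(t))\right\rangle = q_i^{(m)}(x_2(t)) > 0.
\]
More generally, the gap equals $(1-x_{3+i})\max\{q_i^{(m)}(x_2(t)),0\}$, and we only need it at $x_{3+i}=0$.

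There is no real obstacle. The only point to note is that by the product structure of $\cX$ (the decomposition remarked in \Cref{sec:constrainedopt}), this coordinate-wise term is one nonnegative summand of $\KKTgap(\vx(t))$. Consequently, it also lower-bounds the full KKT gap, which is how the lemma will be used together with \Cref{lem:trimmed_boxes_cover}.
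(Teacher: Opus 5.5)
Your proof is correct and matches the paper's: both compute $\nabla_{x_{3+i}}\widetilde{\Phi}(\vx(t)) = q_i^{(m)}(x_2(t))$ and exhibit a positive feasible deviation from $x_{3+i}=0$. The paper uses any $x'_{3+i}>0$, while you take the maximizer $x'_{3+i}=1$. One unused aside is wrong: the gap is not $(1-x_{3+i})\max\{q_i^{(m)},0\}$ in general, since for $q_i^{(m)}<0$ and $x_{3+i}>0$ it equals $x_{3+i}\,|q_i^{(m)}|$.
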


\begin{proof}
By construction,
\[
    \nabla_{x_{3+i}} {\widetilde{\Phi}}(\vx(t))
    =
    q_i^{(m)}(x_2(t)).
\]
Since \(x_{3+i}=0\), choosing any feasible \(x'_{3+i}>0\) gives
\[
    \left\langle
        x'_{3+i}-x_{3+i},
        \nabla_{x_{3+i}} {\widetilde{\Phi}}(\vx(t))
    \right\rangle
    =
    x'_{3+i}q_i^{(m)}(x_2(t))
    >
    0,
\]
and the claim follows.
\end{proof}

\begin{lemma}[Uniformly positive \(\KKTgap\)]
\label{lem:uniform_kkt_gap}
Fix \(m\ge1\), and suppose that the auxiliary coordinates are kept at \(x_{3+i}=0\) for all \( i \in \{0,\ldots,N^{(m)}-1\}\). There is a constant $\gamma_m>0$ such that, at every time $t$,
\[
    \KKTgap(\vx(t)) \geq\gamma_m>0.
\]
\end{lemma}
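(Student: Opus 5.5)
The plan is to reduce the statement to a purely one-dimensional fact about the functions $q_i^{(m)}$, namely that
\[
    \gamma_m \defeq \min_{x_2\in[-1,1]}\ \max_{0\le i<N^{(m)}} q_i^{(m)}(x_2) > 0,
\]
and then convert it into a KKT-gap bound via the product structure of $\cX$. For the conversion, recall from \Cref{sec:constrainedopt} that on $\cX=\cX_1\times\cX_2\times[0,1]^{N^{(m)}}$ the KKT gap is the sum of coordinatewise gaps. Each of these is nonnegative, since $x'_j=x_j$ is feasible. Hence $\KKTgap(\vx(t))$ is at least the gap of any single auxiliary coordinate. For a coordinate with $x_{3+i}=0$ and gradient $q_i^{(m)}(x_2(t))$ (as in \Cref{lem:kkt_gap_inside_box}), taking $x'_{3+i}=1$ gives gap $\max\{q_i^{(m)}(x_2(t)),0\}$. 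It therefore suffices to show that, whenever $x_2(t)\in[-1,1]$, some $i$ has $q_i^{(m)}(x_2(t))\ge\gamma_m$.

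Along the invariant curve $x_2(t)=\sin v(t)\in[-1,1]$, so the target reduces to the display above. The function $F(z)\defeq\max_i q_i^{(m)}(z)$ is a finite maximum of continuous functions, hence continuous, and $[-1,1]$ is compact. So $F$ attains its minimum, and it is enough to show $F(z)>0$ pointwise on $[-1,1]$.

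To check pointwise positivity, fix $z\in[-1,1]$ and use \Cref{lem:trimmed_boxes_cover}: the trimmed intervals $\boxB_i^{(m)}$ cover $[-1,1]$, and consecutive ones overlap by at least $\omega^{(m)}>0$. Since $q_i^{(m)}$ is strictly positive exactly on the open interval $(\boxalpha_i,\boxbeta_i)$, we need $z$ to lie in the \emph{interior} of some $\boxB_i^{(m)}$; mere coverage is not enough. There are two cases.

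First, if $z=\pm1$, the endpoint rule extends the relevant interval by $\Delta_i^{(m)}/4$ beyond $\pm1$, so $\pm1$ is interior and $q_i^{(m)}(\pm1)>0$.

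Second, if $z\in(-1,1)$ lies in some $\boxB_i^{(m)}$ but at one of its endpoints, I would argue as follows. Take the maximal overlapping chain of trimmed intervals containing $z$. By \Cref{lem:trimmed_boxes_cover}, the neighbour $\boxB_{i\pm1}^{(m)}$ on the appropriate side overlaps $\boxB_i^{(m)}$ in an interval of length $\ge\omega^{(m)}$ that contains that endpoint. Hence $z$ is interior to the neighbour, unless the endpoint is also an endpoint of the neighbour. That exceptional case is excluded because the overlap has positive length and both are closed intervals sharing the endpoint side.

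This endpoint-interior step is the main obstacle. To make it clean, I would order the trimmed intervals along the monotone branches of $\sin$, as in the proof of \Cref{lem:trimmed_boxes_cover}. Then the union of the open interiors $\bigcup_i(\boxalpha_i,\boxbeta_i)$ is a chain of open intervals with pairwise positive-length overlaps, together with the extended endpoint intervals, and so it is connected and contains $[-1,1]$. Once $F>0$ on $[-1,1]$, compactness gives $\gamma_m>0$, and the coordinatewise reduction yields $\KKTgap(\vx(t))\ge\gamma_m$ for every $t$.
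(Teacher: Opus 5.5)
Your proposal is correct and follows essentially the paper's argument: the coverage-with-overlap lemma and the endpoint extension give $\max_i q_i^{(m)}>0$ pointwise on $[-1,1]$, compactness then yields $\gamma_m$, and the deviation $x'_{3+i}=1$ together with the product structure of $\cX$ turns this into the KKT-gap bound. One blemish: your first reason for excluding a shared endpoint does not follow, since, e.g., $[0,2]$ and $[1,2]$ overlap with positive length yet share an endpoint; however, the chain-of-open-interiors argument you give next is correct and makes rigorous the paper's brief appeal to consecutive overlaps.
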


\begin{proof}
Since the trajectory always satisfies \(x_2(t)\in[-1,1]\), it is enough to find one auxiliary coordinate whose quadratic gate is strictly positive. By \cref{lem:trimmed_boxes_cover}, the intervals \(\boxB_i^{(m)}\) cover \([-1,1]\), and consecutive intervals overlap by at least \(\omega^{(m)}>0\). Therefore, for every \(x_2(t)\in[-1,1]\), there exists an index
\(i\in\{0,\ldots,N^{(m)}-1\}\) such that
\(
    q_i^{(m)}(x_2(t))>0.
\)
For points in the interior of \([-1,1]\), this follows from the nontrivial overlap of consecutive trimmed intervals. At the endpoint values \(x_2(t)=\pm1\), it follows from our convention for the endpoint intervals.

We conclude that the continuous function $z\mapsto\max_i q_i^{(m)}(z)$ is positive on the compact interval $[-1,1]$, so $\gamma_m \defeq \min_{z\in[-1,1]} \max_i q_i^{(m)}(z)>0$. Choosing $x'_{3+i}=1$ for a maximizing index $i$ gives $\KKTgap(\vx(t))\geq\gamma_m$.
\end{proof}

\section{Convergence for analytic objectives}
\label{appendix:analytic}

Our counterexample relies on an infinitely differentiable objective that is not analytic. In this section, we show that, in contrast, analyticity \emph{does} guarantee convergence to a single KKT point. In particular, our goal is to establish the following.

\ifdefined\analyticconvergence
\analyticconvergence*
\else
\begin{theorem}
    \label{thm:analytic-convergence}
    Let $\cX \subseteq \R^d$ be a nonempty compact polytope, and let $\Phi$ be real analytic on an open neighborhood of $\cX$. Continuous-time $\ftrl$ with Euclidean regularization has finite trajectory length,
    \[
        \int_0^\infty \|\dot{\vx}(t)\|_2\,dt < \infty,
    \]
    and thereby converges to a point $\vx^\star \in \cX$ with $\KKTgap(\vx^\star)=0$.
\end{theorem}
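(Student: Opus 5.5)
The plan follows the overview in \Cref{sec:analytic}. Write $\vy(t)=\vy_0+\int_0^t\nabla\Phi(\vx(\tau))\,d\tau$ and $\vx(t)=\Pi_{\cX}(\vy(t))$. Both trajectories are Lipschitz, hence differentiable almost everywhere. The proof of \Cref{lem:pot-prop} already gives $\frac{d}{dt}\Phi(\vx(t))=\langle\nabla\Phi(\vx(t)),\dot{\vx}(t)\rangle\geq\|\dot{\vx}(t)\|_2^2$. The first refinement is that this is an equality almost everywhere. Since $\cX$ is a polytope, the projection is piecewise affine. At a.e.\ $t$ it is therefore locally an orthogonal projection onto the affine hull of some face $F$ with $\vx(t)\in F$, so $\dot{\vx}=P_F\dot{\vy}=P_F\nabla\Phi$, where $P_F$ is the orthogonal projector onto the direction space of $F$. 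Because $P_F$ is an orthogonal projector, $\langle\nabla\Phi,\dot{\vx}\rangle=\|P_F\nabla\Phi\|_2^2=\|\dot{\vx}\|_2^2$. Setting $\ell=\lim_t\Phi(\vx(t))$ (which exists by monotonicity and boundedness) and $E=\ell-\Phi(\vx)$, we get $\dot E=-\|\dot{\vx}\|_2^2$ a.e.

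The second step is the speed bound $\|\dot{\vx}(t)\|_2\geq\|\nabla_{F(\vx(t))}\Phi(\vx(t))\|_2$ (\Cref{lem:analytic-speed}), where $F(\vx)$ is the smallest face containing $\vx$. At a.e.\ $t$, $\vx(s)\in F(\vx(t))$ on a one-sided neighbourhood, or more precisely $\vx$ stays in a face whose affine hull contains that of $F(\vx(t))$. Hence the face $F$ from the first step contains $F(\vx(t))$, and projecting further onto the smaller direction space can only shrink the norm. The delicate case is when the minimal face changes. I would handle it by noting that for a.e.\ $t$ the face sequence is locally constant in the sense of density points: the set of times at which $\vx$ lies in a given face, minus its density points, is null, and there are finitely many faces. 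At density points, the derivative of $\vx$ lies in the direction space of that face, and this identification gives the inequality.

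The third step, which I expect to be the main obstacle, is a uniform Łojasiewicz inequality across faces (\Cref{lem:analytic-uniform}). For each face $G$, $\Phi|_{\operatorname{aff}G}$ is analytic, so every point $\vz$ of the $\omega$-limit set (compact, with $\Phi\equiv\ell$ there) lying in $G$ has a neighbourhood where $\|\nabla_G\Phi(\vx)\|_2\geq c\,|\ell-\Phi(\vx)|^{\theta}$ for $\vx\in\operatorname{aff}G$. Covering the compact limit set by finitely many such neighbourhoods over the finitely many faces, and taking the worst $c$ and the largest $\theta$ (raising $\theta$ toward $1$ is allowed when $E\leq1$), gives constants valid for all $t$ large enough. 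The subtlety is that $\vx(t)$ lies in $F(\vx(t))$ but may be near a limit point $\vz$ lying in a smaller face. I would apply Łojasiewicz on $\operatorname{aff}F(\vx(t))$ at the base point $\vz$, which lies in the closure of that face and hence in its affine hull. Doing this for every pair (face, limit point in its closure) keeps the cover finite.

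The final step is integration. For large $t$ with $E>0$, we have $-\frac{d}{dt}E^{1-\theta}=(1-\theta)E^{-\theta}\|\dot{\vx}\|_2^2\geq(1-\theta)c\|\dot{\vx}\|_2$, so the length of the tail is bounded by $E^{1-\theta}/((1-\theta)c)$. If $E$ hits $0$, then $\dot{\vx}=0$ a.e.\ afterwards and the trajectory stops. Either way the length is finite and $\vx(t)\to\vx^\star$.

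For stationarity, suppose $\KKTgap(\vx^\star)>0$. If $P_{F(\vx^\star)}\nabla\Phi(\vx^\star)\neq0$, the speed bound and continuity keep $\|\dot{\vx}\|_2$ bounded below near $\vx^\star$ on a nearby face containing $F(\vx^\star)$, contradicting finite length. Otherwise the gradient points strictly outward along some feasible direction $\vd$ with $\langle\vd,\nabla\Phi\rangle>0$. In that case $\vy(t)$ grows linearly in a direction whose projection moves $\vx$ away from $\vx^\star$, since $\langle\vx(t)-\vx^\star,\vd\rangle$ eventually increases by the variational characterization of $\Pi_{\cX}$, again a contradiction. Hence $\KKTgap(\vx^\star)=0$.
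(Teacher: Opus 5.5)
Your proposal is correct and has the same overall structure as the paper's proof: the identity $\frac{d}{dt}\Phi(\vx(t))=\|\dot{\vx}(t)\|_2^2$, the speed bound via the smallest face, a \L{}ojasiewicz inequality made uniform over the finitely many faces, and integration of $E^{1-\theta}$. The main difference is how you differentiate the projection. The paper proves a directional-derivative lemma, $\dot{\vx}=\Pi_{\cK(t)}(\nabla\Phi)$ onto the critical cone, and then uses $\cL_{F(\vx(t))}\subseteq\cK(t)$. You instead decompose $\R^d$ into the closed pieces $F+N_F$, with $N_F$ the normal cone at relative-interior points of $F$; on each piece $\Pi_{\cX}$ coincides with the projection onto $\operatorname{aff}F$. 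This route is valid and more elementary, with two caveats.

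\emph{Locality.} ``Locally an orthogonal projection'' is not literally true when $\vy(t)$ sits on a boundary between pieces. You do not need density points, though. At any time where $\vx$ and $\vy$ are differentiable, some sequence $s_k\downarrow t$ lies in one of the finitely many closed sets $\{s:\vy(s)\in F+N_F\}$. This gives $\vx(t)\in F$, hence $F\supseteq F(\vx(t))$, and $\dot{\vx}(t)=\Pi_{\cL_F}(\nabla\Phi(\vx(t)))$. Both the identity and the speed bound follow at once.

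\emph{Uniform \L{}ojasiewicz step.} You cover the $\omega$-limit set face by face, whereas the paper covers the level sets $\{\vp\in F:\Phi(\vp)=\ell\}$. With your version, you should say why $\vx(t)$ eventually lies in a neighbourhood attached to its own face $F(\vx(t))$. The reason is that the part of each face not covered by its chosen neighbourhoods is compact and disjoint from the $\omega$-limit set, so it is eventually avoided.

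Your stationarity step does not close as written. In your second case, ``$\langle\vx(t)-\vx^\star,\vd\rangle$ eventually increases'' is fully compatible with $\vx(t)\to\vx^\star$, so no contradiction follows. The paper needs no case split. Since $\vy(t)-\vx(t)\in N_{\cX}(\vx(t))$, we have $\langle\vy(t)-\vx(t),\vx'-\vx(t)\rangle\leq0$ for every $\vx'\in\cX$. Dividing by $t$ and using the Ces\`aro limit $\vy(t)/t\to\nabla\Phi(\vx^\star)$ gives $\langle\nabla\Phi(\vx^\star),\vx'-\vx^\star\rangle\leq0$ directly. This is the argument your ``linear growth of $\vy$ plus variational characterization'' is reaching for. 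It also makes your first case unnecessary, although that case is itself fine because $F(\vx)\supseteq F(\vx^\star)$ for $\vx$ near $\vx^\star$.
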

\fi

The proof proceeds by relating the trajectory speed to the gradient along a certain face of the set, using basic projection geometry. Then, we use analyticity to show that the trajectory length is finite, as claimed in~\Cref{thm:analytic-convergence}.

In what follows, we write
\[
    N_{\cX}(\vx) \defeq \{\vn : \langle \vn,\vx'-\vx\rangle \leq 0 \text{ for all }\vx'\in\cX\}
\]
for the normal cone and
\[
    T_{\cX}(\vx) \defeq \overline{\{a(\vx'-\vx):a\geq0,\ \vx'\in\cX\}}
\]
for the tangent cone. For $\vx=\Pi_{\cX}(\vy)$, a standard property of the Euclidean projection implies that the residual $\vn=\vy-\vx$ belongs to $N_{\cX}(\vx)$. The associated \emph{critical cone} is $\cK=T_{\cX}(\vx)\cap\vn^\perp$, which retains the feasible directions orthogonal to the residual. The lemma below can be obtained as a special case of~\citet[Theorem 3.1]{Shapiro16:Differentiability}, specialized to polyhedral sets; we provide the simpler proof for completeness.

\begin{lemma}
    \label{lem:analytic-projection}
        For a nonempty closed polyhedron $\cX$, let $\vx=\Pi_{\cX}(\vy)$ and $\cK=T_{\cX}(\vx)\cap(\vy-\vx)^\perp$. Then, for every $\vv\in\R^d$,
    \[
        \lim_{h \downarrow 0}\frac{\Pi_{\cX}(\vy+h\vv)-\vx}{h}=\Pi_{\cK}(\vv).
    \]
\end{lemma}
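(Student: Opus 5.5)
The plan is to prove the directional derivative formula for polyhedral projections by localizing around $\vx$, where $\cX$ coincides with $\vx+T_{\cX}(\vx)$. Since $\cX$ is a polyhedron $\{\vz:\langle \vec a_j,\vz\rangle\le b_j\}$, let $J$ be the set of constraints active at $\vx$. Then $T_{\cX}(\vx)=\{\vec d:\langle \vec a_j,\vec d\rangle\le 0,\ j\in J\}$, and there is a radius $\rho>0$ such that $\cX\cap B(\vx,\rho)=(\vx+T_{\cX}(\vx))\cap B(\vx,\rho)$. Projection is $1$-Lipschitz, so $\|\Pi_{\cX}(\vy+h\vv)-\vx\|\le h\|\vv\|$. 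For $h$ small, the projection therefore lands in this ball. Because the ball is convex and the projection onto the cone-translate also stays within distance $h\|\vv\|$, we get the exact identity $\Pi_{\cX}(\vy+h\vv)=\vx+\Pi_{T}(\vy+h\vv-\vx)$ for $h<\rho/\|\vv\|$, where $T\defeq T_{\cX}(\vx)$. This reduces the problem to projections onto the polyhedral cone $T$.

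Next, write $\vn=\vy-\vx\in N_{\cX}(\vx)=T^\circ$ (the polar of $T$). The claim becomes
\[
\Pi_T(\vn+h\vv)=h\,\Pi_{\cK}(\vv)\quad\text{for small } h>0.
\]
This is stronger than the limit, and positive homogeneity would make the limit immediate. I would verify it through the variational characterization: $\vw=\Pi_T(\vz)$ iff $\vw\in T$, $\vz-\vw\in T^\circ$, and $\langle \vz-\vw,\vw\rangle=0$. Set $\vw=h\vec k$ with $\vec k=\Pi_{\cK}(\vv)$. Then $\vw\in\cK\subseteq T$. Moreover, $\langle \vn+h\vv-h\vec k,h\vec k\rangle=h^2\langle \vv-\vec k,\vec k\rangle=0$, using $\langle\vn,\vec k\rangle=0$ and the cone-projection property of $\vec k$. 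It remains to show $\vn+h(\vv-\vec k)\in T^\circ$ for small $h$, i.e., $\langle \vn+h(\vv-\vec k),\vec d\rangle\le 0$ for all $\vec d\in T$.

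This last step is the main obstacle, and it is where polyhedrality is essential. We know $\vv-\vec k\in\cK^\circ$. Since $\cK$ is the face $T\cap\vn^\perp$ of the polyhedral cone $T$, its polar satisfies $\cK^\circ=T^\circ+\R\vn$; more usefully, it equals the polar of $T$ with the constraints indexed by $J_0=\{j\in J:\langle \vec a_j,\vec k'\rangle=0\ \forall \vec k'\in\cK\}$ relaxed to equalities. Concretely, I would use Farkas' lemma. Since $\vn\in T^\circ$, write $\vn=\sum_{j\in J}\lambda_j\vec a_j$ with $\lambda_j\ge0$. We then have $\cK=\{\vec d\in T:\langle \vec a_j,\vec d\rangle=0\ \text{for } j \text{ with } \lambda_j>0\}$, so $\cK^\circ$ is generated by $\{\vec a_j\}_{j\in J}$ with free sign on those $j$ with $\lambda_j>0$. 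Writing $\vv-\vec k=\sum_j\mu_j\vec a_j$ with $\mu_j\ge0$ whenever $\lambda_j=0$, we get
\[
\vn+h(\vv-\vec k)=\sum_j(\lambda_j+h\mu_j)\vec a_j,
\]
and all coefficients are nonnegative once $h\le\min_{\lambda_j>0}\lambda_j/|\mu_j|$. This vector therefore lies in $T^\circ$.

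Combining the three steps gives $\Pi_{\cX}(\vy+h\vv)=\vx+h\Pi_{\cK}(\vv)$ for all sufficiently small $h>0$, which yields the stated limit. The only delicate point is the representation of $\cK^\circ$. In the nonpolyhedral case this representation requires closures, and an exact linear expansion fails.
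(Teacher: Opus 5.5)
Your proof is correct, but it takes a genuinely different route from the paper. The paper never computes $\Pi_{\cX}(\vy+h\vv)$ exactly. It bounds the difference quotients $\vw_h$ by nonexpansiveness. Comparing the projection against the feasible point $\vx$ gives $0\le -2\langle \vn,\vw_h\rangle\le h\|\vv\|_2^2$, so every limit point lies in $\cK$. Comparing against $\vx+h\vq$ for each $\vq\in\cK$ then shows that every limit point minimizes $\|\vv-\cdot\|_2$ over $\cK$, and uniqueness of $\Pi_{\cK}(\vv)$ yields the limit. Polyhedrality is used only once, to guarantee $\vx+h\vq\in\cX$ for small $h$.

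You instead localize exactly to $\vx+T$ with $T=T_{\cX}(\vx)$. The real justification there is not convexity of the ball: both projections lie in the ball where the two sets coincide, so each is feasible for the other problem and uniqueness forces equality. You then guess the answer $h\Pi_{\cK}(\vv)$ and verify Moreau's conditions. Dual feasibility comes from the multipliers $\lambda_j$ of $\vn$ being strictly positive on exactly the constraints that $\cK$ turns into equalities.

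Your route buys a strictly stronger conclusion: $\Pi_{\cX}(\vy+h\vv)=\vx+h\Pi_{\cK}(\vv)$ for all sufficiently small $h>0$, so the projection is eventually affine along each ray. The price is an explicit inequality description and Farkas' lemma. The paper's compactness argument is shorter and delivers only the limit, which is all it needs.

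Incidentally, your Farkas computation can be bypassed. Since $T^\circ+\R\vn$ is finitely generated and hence closed, $\cK^\circ=T^\circ+\R\vn$. Writing $\vv-\vec k=\vec t-s\vn$ with $\vec t\in T^\circ$ gives $\vn+h(\vv-\vec k)=(1-hs)\vn+h\vec t\in T^\circ$ whenever $hs\le 1$. Your parenthetical description of $\cK^\circ$ with free signs on $J_0$ is true as a set, but it is not the representation the sign argument needs. The one you actually use, with free signs only where $\lambda_j>0$, is the right one.
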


\begin{proof}
    Let $\vn=\vy-\vx$ and $\vw_h=(\Pi_{\cX}(\vy+h\vv)-\vx)/h$. Nonexpansiveness of the Euclidean projection implies $\|\vw_h\|_2\leq\|\vv\|_2$, so every sequence $h_k \downarrow 0$ admits a subsequence along which $\vw_{h_k}$ converges. Moreover, $\langle\vn,\vw_h\rangle\leq0$ since $\vn \in N_{\cX}(\vx)$. Given that $\vx+h\vw_h = \Pi_{\cX}(\vy+h\vv)$ and $\vx\in\cX$, we have
    \[
        \|(\vy+h\vv)-(\vx+h\vw_h)\|_2^2
        \leq\|(\vy+h\vv)-\vx\|_2^2.
    \]
    Thus,
    \[
        0\leq-2\langle\vn,\vw_h\rangle
        \leq h\bigl(\|\vv\|_2^2-\|\vv-\vw_h\|_2^2\bigr) \leq h \|\vv\|_2^2,
    \]
    so every limit point of $\vw_h$ as $h\downarrow 0$ belongs to $\cK$. Now, for any $\vq\in \cK$, polyhedrality ensures that $\vx+h\vq\in\cX$ for all sufficiently small $h>0$: active linear constraints remain satisfied because $\vq$ belongs to the tangent cone, and inactive constraints have positive slack. Since $\vx+h\vw_h = \Pi_{\cX}(\vy+h\vv)$,
    \[
    \|(\vy+h\vv)-(\vx+h\vw_h)\|_2^2
    \leq
    \|(\vy+h\vv)-(\vx+h\vq)\|_2^2.
    \]
    Expanding,  
    \[
        -\frac{2}{h}\langle\vn,\vw_h\rangle+ \|\vv-\vw_h\|_2^2
        \leq\|\vv-\vq\|_2^2,
    \]
    so $\|\vv-\vw_h\|_2^2 \leq\|\vv-\vq\|_2^2$ since $\langle \vn, \vw_h \rangle \leq 0$. Thus, every limit point minimizes distance to $\vv$ over $\cK$, and the claim follows.
\end{proof}

Throughout this section, the nonempty faces of $\cX$ include $\cX$ itself, as well as all its proper nonempty faces. For each such face $F$, let $\cL_F\defeq\operatorname{span}(F-F)$, where $F-F=\{\vu-\vv:\vu,\vv\in F\}$. $\cL_F$ is the linear subspace of directions parallel to $F$. We define $\nabla_F\Phi(\vx)\defeq\Pi_{\cL_F}(\nabla\Phi(\vx))$, the gradient of $\Phi$ restricted to the affine hull of $F$. The relative interiors of the nonempty faces partition $\cX$, so each $\vx\in\cX$ lies in the relative interior of exactly one face, denoted by $F(\vx)$. Equivalently, $F(\vx)$ is the smallest face of $\cX$ containing $\vx$, in the sense that $F(\vx) \subseteq F$ for every face of $\cX$ containing $\vx$. In particular, $F(\vx)=\cX$ whenever $\vx$ lies in the relative interior of $\cX$.

\begin{lemma}
    \label{lem:analytic-speed}
    Under the assumptions of \Cref{thm:analytic-convergence}, for almost every $t \geq 0$,
    \[
        \|\dot{\vx}(t)\|_2\geq\|\nabla_{F(\vx(t))}\Phi(\vx(t))\|_2.
    \]
\end{lemma}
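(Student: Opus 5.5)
We prove the inequality at every $t$ where both $\vy(t)=\vy_0+\int_0^t\nabla\Phi(\vx(\tau))\,d\tau$ and $\vx(t)=\Pi_{\cX}(\vy(t))$ are differentiable; since both are Lipschitz, this holds for almost every $t$. Fix such a $t$ and write $\vx=\vx(t)$, $\vy=\vy(t)$, $\vn=\vy-\vx\in N_{\cX}(\vx)$ and $\vv=\dot{\vy}(t)=\nabla\Phi(\vx)$. Since $\vy$ is differentiable at $t$, we have $\vy(t+h)=\vy+h\vv+o(h)$. Nonexpansiveness of $\Pi_{\cX}$ then gives $\Pi_{\cX}(\vy(t+h))=\Pi_{\cX}(\vy+h\vv)+o(h)$. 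By \Cref{lem:analytic-projection}, the right derivative is therefore $\dot{\vx}(t)=\Pi_{\cK}(\nabla\Phi(\vx))$, where $\cK=T_{\cX}(\vx)\cap\vn^\perp$.

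It remains to show $\|\Pi_{\cK}(\vv)\|_2\geq\|\Pi_{\cL_F}(\vv)\|_2$ for $F=F(\vx)$. We would first establish that $\cL_F\subseteq\cK$. Because $\vx$ lies in the relative interior of $F$, for every direction $\vq\in\cL_F$ both $\vx+h\vq$ and $\vx-h\vq$ lie in $F\subseteq\cX$ for small $h>0$, so $\pm\vq\in T_{\cX}(\vx)$. The normal-cone inequality $\langle\vn,\vx'-\vx\rangle\leq0$ applied to $\vx'=\vx\pm h\vq$ then forces $\langle\vn,\vq\rangle=0$. Hence $\cL_F\subseteq T_{\cX}(\vx)\cap\vn^\perp=\cK$, and $\cL_F$ is a linear subspace contained in the closed convex cone $\cK$.

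The final step compares the two projections. Let $\vp=\Pi_{\cK}(\vv)$. Since $\cK$ is a closed convex cone, the Moreau characterization gives $\langle\vv-\vp,\vp\rangle=0$ and $\langle\vv-\vp,\vq\rangle\leq0$ for all $\vq\in\cK$. Because $\pm\vq\in\cK$ for every $\vq\in\cL_F$, the second condition becomes $\langle\vv-\vp,\vq\rangle=0$ on $\cL_F$. It follows that $\Pi_{\cL_F}(\vv)=\Pi_{\cL_F}(\vp)$, and thus $\|\Pi_{\cL_F}(\vv)\|_2=\|\Pi_{\cL_F}(\vp)\|_2\leq\|\vp\|_2=\|\dot{\vx}(t)\|_2$, which is the claimed inequality.

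The main obstacle is the first step: turning the directional derivative of \Cref{lem:analytic-projection}, which is stated for a fixed direction $\vv$, into the actual derivative of the trajectory. This requires the $o(h)$ perturbation argument together with the fact that a.e. differentiability lets us identify the two-sided derivative with the right derivative. Everything after that is elementary cone geometry.
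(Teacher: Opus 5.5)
Your proposal is correct and matches the paper's proof step by step. You identify $\dot{\vx}(t)=\Pi_{\cK}(\nabla\Phi(\vx(t)))$ via \Cref{lem:analytic-projection} plus nonexpansiveness, spelling out the $o(h)$ argument the paper leaves implicit. You show $\cL_{F(\vx(t))}\subseteq\cK$, and you use the projection's variational characterization with $\pm\vq$ to get $\nabla\Phi(\vx(t))-\dot{\vx}(t)\perp\cL_{F(\vx(t))}$. The paper additionally records $\frac{d}{dt}\Phi(\vx(t))=\|\dot{\vx}(t)\|_2^2$ along the way, which is used later in the theorem but is not needed for this lemma.
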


\begin{proof}
    The vector field $\vy \mapsto\nabla \Phi(\Pi_{\cX}(\vy))$ is Lipschitz and bounded, so there is a unique solution $\vy(t)$, and the projected trajectory $\vx(t)=\Pi_{\cX}(\vy(t))$ is Lipschitz. By \Cref{lem:analytic-projection} and nonexpansiveness, for almost every $t$,
    \[
        \dot{\vx}(t)=\Pi_{\cK(t)}(\nabla\Phi(\vx(t))),
    \]
    where $\cK(t)=T_{\cX}(\vx(t))\cap(\vy(t)-\vx(t))^\perp$. The first-order optimality condition yields
    \[
        \left\langle\nabla\Phi(\vx(t))-\dot{\vx}(t),\,
        \vv-\dot{\vx}(t)\right\rangle\leq0
        \quad\text{for every }\vv\in\cK(t).
    \]
    Since $\cK(t)$ is a cone, both $\vec{0}$ and $2\dot{\vx}(t)$ belong to it. Substituting these two choices for $\vv$ implies
    \[
        \left\langle\nabla\Phi(\vx(t))-\dot{\vx}(t),\,
        \dot{\vx}(t)\right\rangle=0.
    \]
    By the chain rule, this yields $\frac{d}{dt}\Phi(\vx(t))=\|\dot{\vx}(t)\|_2^2$ almost everywhere.

    We next argue that $\cL_{F(\vx(t))}\subseteq\cK(t)$. Since $\vx(t)$ lies in the relative interior of $F(\vx(t))$, for every $\vu \in \cL_{F(\vx(t))}$, the points $\vx(t) \pm \epsilon\vu$ belong to this face for sufficiently small $\epsilon>0$. Thus, $\pm\vu\in T_{\cX}(\vx(t))$. Moreover, since $\vy(t)-\vx(t) \in N_\cX(\vx(t))$, it follows that $\langle\vy(t)-\vx(t),\vu\rangle = 0$, so both $\vu$ and $-\vu$ belong to $\cK(t)$. Substituting $\vv=\vu$ and $\vv=-\vu$ into the first-order optimality condition,
    \[
        \left\langle\nabla\Phi(\vx(t))-\dot{\vx}(t),\,\vu\right\rangle\leq0,
        \quad
        -\left\langle\nabla\Phi(\vx(t))-\dot{\vx}(t),\,\vu\right\rangle\leq0.
    \]
    Therefore, $\nabla\Phi(\vx(t))-\dot{\vx}(t)$ is orthogonal to every vector in $\cL_{F(\vx(t))}$. Thus,
    \[
        \nabla_{F(\vx(t))}\Phi(\vx(t))
        =\Pi_{\cL_{F(\vx(t))}}\bigl(\nabla\Phi(\vx(t))\bigr)
        = \Pi_{\cL_{F(\vx(t))}}\bigl(\dot{\vx}(t)\bigr).
    \]
    We conclude that $\| \dot{\vx}(t) \|_2 \geq \| \Pi_{\cL_{F(\vx(t))}}\bigl(\dot{\vx}(t)\bigr) \|_2 = \| \nabla_{F(\vx(t))}\Phi(\vx(t)) \|_2 $.
\end{proof}

Analyticity enters through the \L{}ojasiewicz \emph{gradient inequality}~\citep{Lojasiewicz84:Trajectoires,Absil05:Convergence}. Its local form states that, near a critical point $\vp$ of a real analytic function $f$, there are parameters $c>0$ and $\theta\in[1/2,1)$ such that $\|\nabla f(\vx)\|_2\geq c|f(\vx)-f(\vp)|^\theta$. Using this crucial property, we obtain the following.

\begin{lemma}
    \label{lem:analytic-uniform}
    For every $\ell\in\R$, there exist $c>0$, $\eta\in(0,1]$, and $\theta\in[1/2,1)$ such that, for every nonempty face $F$ and every $\vx\in F$ with $0<|\Phi(\vx)-\ell|<\eta$,
    \[
        \|\nabla_F\Phi(\vx)\|_2\geq c|\Phi(\vx)-\ell|^\theta.
    \]
\end{lemma}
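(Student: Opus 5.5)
We fix $\ell$ and treat each nonempty face $F$ separately; since a polytope has finitely many faces, it suffices to produce constants $(c_F,\eta_F,\theta_F)$ for each face and then take $c=\min_F c_F$, $\eta=\min_F\eta_F$, $\theta=\max_F\theta_F$. The last step needs a little care, because the exponent changes. When $|\Phi(\vx)-\ell|<\eta\le 1$ and $\theta_F\le\theta$, we have $|\Phi(\vx)-\ell|^{\theta_F}\geq|\Phi(\vx)-\ell|^{\theta}$. So a face-wise inequality with exponent $\theta_F$ implies the same inequality with the larger common exponent $\theta$. This is why the statement asks for $\eta\le1$.

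For a fixed face $F$, let $\vp_F\in F$ and $\mathcal L_F$ as defined above. We parametrize the affine hull by $g_F(\vz)\defeq\Phi(\vp_F+B\vz)$, where $B$ is a matrix whose columns form an orthonormal basis of $\cL_F$. Since $\Phi$ is analytic on a neighborhood of $\cX$, $g_F$ is real analytic on a neighborhood of the compact set $K_F\defeq\{\vz:\vp_F+B\vz\in F\}$. Moreover, $\|\nabla g_F(\vz)\|_2=\|B^\top\nabla\Phi\|_2=\|\nabla_F\Phi(\vx)\|_2$. The claim on $F$ then reduces to a \emph{uniform} Łojasiewicz inequality for $g_F$ at level $\ell$ on the compact set $K_F$:
\[
\|\nabla g_F(\vz)\|_2\ge c_F|g_F(\vz)-\ell|^{\theta_F}\quad\text{whenever } \vz\in K_F,\ 0<|g_F(\vz)-\ell|<\eta_F .
\]

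To prove this, we argue by compactness. For each $\vz_0\in K_F$ there are three cases. If $g_F(\vz_0)\neq\ell$, then a small ball around $\vz_0$ has $|g_F-\ell|$ bounded below, so it is excluded once $\eta_F$ is small, or is handled by the next case. If $g_F(\vz_0)=\ell$ and $\nabla g_F(\vz_0)\neq0$, the gradient is bounded below near $\vz_0$ and $|g_F-\ell|\le1$, so any $\theta$ works. If $\vz_0$ is a critical point with value $\ell$, the local Łojasiewicz inequality quoted above gives constants on a ball $B(\vz_0,r_{\vz_0})$.

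The main obstacle is assembling these local statements. Points with $g_F(\vz_0)\ne\ell$ are not covered by a neighborhood where the inequality is asserted. The fix is a contradiction argument. Cover the compact set $Z\defeq\{\vz\in K_F: g_F(\vz)=\ell\}$ by finitely many of the balls from the second and third cases, and let $U$ be their union, which is open. The set $K_F\setminus U$ is compact and disjoint from $Z$, so $|g_F-\ell|$ attains a positive minimum $\eta'$ on it. Taking $\eta_F\le\min(\eta',1)$, every $\vz$ with $0<|g_F(\vz)-\ell|<\eta_F$ lies in $U$, where a local inequality holds. We then take the minimum constant and maximum exponent over the finite subcover, again using $|g_F-\ell|<1$ to unify exponents. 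If $Z$ is empty, a small $\eta_F$ makes the condition vacuous.

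Finally, we translate back via $\vx=\vp_F+B\vz$ and take minima and maxima over the finitely many faces, as explained in the first paragraph.
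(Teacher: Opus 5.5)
Your proof is correct and follows essentially the same route as the paper. Work face by face: cover the compact level set $\{\Phi=\ell\}\cap F$ by finitely many neighborhoods (\L{}ojasiewicz at critical points, a gradient lower bound at noncritical ones), use compactness of the uncovered remainder to choose $\eta_F$, and unify exponents via $|\Phi-\ell|<1$ before taking the minimum and maximum over the finitely many faces. Your orthonormal parametrization of the affine hull simply makes explicit what the paper means by the restriction of $\Phi$ to the affine hull of $F$.
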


\begin{proof}
    For a nonempty face $F$, we consider the compact level set $S_F\defeq\{\vp\in F:\Phi(\vp)=\ell\}$. If $S_F$ is empty, we can set $c_F=1$, $\theta_F=1/2$, and $\eta_F=\min\{1,\tfrac12\min_{\vx\in F}|\Phi(\vx)-\ell|\}>0$; the assertion is then vacuous. If $F$ is a vertex and $S_F$ is nonempty, we can take $(c_F,\eta_F,\theta_F)=(1,1,1/2)$, again giving a vacuous assertion. Otherwise, we first establish gradient bounds near each point of $S_F$, and then show that these bounds cover all points of $F$ whose potential value is sufficiently close to $\ell$.

    At each $\vp\in S_F$ that is a critical point of the
    restriction of $\Phi$ to the affine hull of $F$, the
    \L{}ojasiewicz gradient inequality gives $\|\nabla_F\Phi(\vx)\|_2
        \geq c_\vp|\Phi(\vx)-\Phi(\vp)|^{\theta_\vp}
        =c_\vp|\Phi(\vx)-\ell|^{\theta_\vp}$ on a relatively open neighborhood of $\vp$, with
    $c_\vp>0$ and $\theta_\vp\in[1/2,1)$. At each noncritical point $\vp\in S_F$, continuity gives
    a relatively open neighborhood on which
    $\|\nabla_F\Phi\|_2 \geq c_\vp>0$. Shrinking this neighborhood so that $|\Phi-\ell|<1$ yields the same inequality with $\theta_\vp=1/2$. By compactness, finitely many of these neighborhoods cover $S_F$. Let $c_F$ be the minimum among $\{ c_{\vp} \}$ and $\theta_F$ the maximum among $\{ \theta_\vp \}$. We have $\|\nabla_F\Phi(\vx)\|_2 \geq c_F|\Phi(\vx)-\ell|^{\theta_F}$ throughout their union whenever $0<|\Phi(\vx)-\ell|<1$. By compactness, we can choose $\eta_F\in(0,1]$ so that every $\vx\in F$ with $|\Phi(\vx)-\ell|<\eta_F$ belongs to one of the chosen neighborhoods. This proves the desired bound on $F$. Finally, since $\cX$ has finitely many nonempty faces, we may take $c\defeq \min_F c_F>0$, $\eta \defeq\min_F\eta_F>0$, and $\theta\defeq\max_F\theta_F\in[1/2,1)$, which give the asserted
    inequality simultaneously on every nonempty face.
\end{proof}

\begin{proof}[Proof of \Cref{thm:analytic-convergence}]
    By \Cref{lem:analytic-speed} and compactness, $\Phi(\vx(t))\uparrow\ell$ for some $\ell\in\R$. We set $E(t)\defeq\ell-\Phi(\vx(t))$, so $\dot E(t)=-\|\dot{\vx}(t)\|_2^2$ almost everywhere. For all sufficiently large $t$, \Cref{lem:analytic-uniform,lem:analytic-speed} give
    \[
        \|\dot{\vx}(t)\|_2\geq cE(t)^\theta
    \]
    almost everywhere whenever $E(t)>0$. Thus,
    \[
        -\frac{d}{dt}E(t)^{1-\theta}
        =(1-\theta)E(t)^{-\theta}\|\dot{\vx}(t)\|_2^2
        \geq c(1-\theta)\|\dot{\vx}(t)\|_2.
    \]
    Integrating bounds the remaining path length of $\vx(t)$ by $E(t)^{1-\theta}/(c(1-\theta))$. If $E(t)$ reaches zero, it follows that $\vx(t)$ will be constant thereafter, and the same bound follows by integration up to that time. Since $\vx(t)$ is Lipschitz, we conclude that its total length is finite. It therefore converges to some $\vx^\star\in\cX$, with
    \[
        \|\vx(t)-\vx^\star\|_2
        \leq\int_t^\infty\|\dot{\vx}(\tau)\|_2\,d \tau
        \leq\frac{E(t)^{1-\theta}}{c(1-\theta)}
    \]
    for all sufficiently large $t$. Finally, to establish stationarity, we use continuity of $\nabla\Phi$ to obtain
    \[
        \frac{\vy(t)}{t}
        =\frac{\vy_0}{t}+\frac{1}{t}\int_0^t \nabla\Phi(\vx(\tau))\, d\tau
        \longrightarrow\nabla\Phi(\vx^\star).
    \]
    Taking the limit yields $\langle\nabla\Phi(\vx^\star),\vx'-\vx^\star\rangle\leq0$ for every $\vx'\in\cX$. Thus, $\KKTgap(\vx^\star)=0$.
\end{proof}
    
In particular, \Cref{thm:analytic-convergence} applies to compact boxes, so our nonconvergent construction cannot be made analytic while retaining Euclidean regularization. It also establishes convergence for finite potential games.

\ifdefined\analyticgames
\analyticgames*
\else
\begin{corollary}
    \label{cor:analytic-games}
    In every finite potential game, continuous-time $\ftrl$ with Euclidean regularization and arbitrary initialization has finite trajectory length and converges to a Nash equilibrium.
\end{corollary}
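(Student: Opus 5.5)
The plan is to reduce \Cref{cor:analytic-games} directly to \Cref{thm:analytic-convergence}, with initialization handled as in the paper. In a finite $n$-player potential game, the joint mixed strategy space is $\cX=\Delta(A_1)\times\dots\times\Delta(A_n)$. This is a product of simplices, hence a nonempty compact polytope. The potential $\Phi$ is the multilinear extension of the finite potential function, a polynomial on all of $\R^d$, so it is real analytic on any open neighborhood of $\cX$.

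The first step is to check that the proof of \Cref{thm:analytic-convergence} does not use zero initialization. With $\vy(t)=\vy_0+\int_0^t\nabla\Phi(\vx(\tau))\,d\tau$, \Cref{lem:analytic-projection} and \Cref{lem:analytic-speed} only use $\vx(t)=\Pi_\cX(\vy(t))$, and \Cref{lem:analytic-uniform} is independent of the trajectory. The final stationarity step already carries the term $\vy_0/t\to\vec 0$. Hence the theorem applies verbatim: the trajectory has finite length and converges to some $\vx^\star$ with $\KKTgap(\vx^\star)=0$.

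The second step is to translate $\KKTgap(\vx^\star)=0$ into the Nash condition. By the product decomposition noted in \Cref{sec:constrainedopt}, the gap equals $\sum_i\max_{\vx_i'\in\Delta(A_i)}\langle\vx_i'-\vx_i^\star,\nabla_{\vx_i}\Phi(\vx^\star)\rangle$. Each summand is nonnegative, so each one vanishes. Multilinearity gives $\nabla_{\vx_i}\Phi(\vx^\star)$ as the vector of potential values $\Phi(a_i,\vx^\star_{-i})$ over $a_i\in A_i$. The potential property says $u_i(a_i,\vx_{-i})-u_i(a_i',\vx_{-i})=\Phi(a_i,\vx_{-i})-\Phi(a_i',\vx_{-i})$, with utilities extended multilinearly. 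Therefore the gap for player $i$ equals $\max_{\vx_i'}u_i(\vx_i',\vx^\star_{-i})-u_i(\vx^\star)$, and its vanishing for every $i$ is exactly the Nash condition.

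There is no real obstacle here. The only points needing care are confirming that arbitrary initialization is harmless in the proof of the theorem, and identifying the unilateral-deviation gains of the game with the potential's linearized gains. The latter holds because linearity in $\vx_i$ makes the first-order gain exact.
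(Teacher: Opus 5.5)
Your proposal is correct and takes the same route as the paper. The corollary is a direct instance of \Cref{thm:analytic-convergence}: the multilinear potential is a polynomial on a product of simplices, and the $\vy_0/t$ term in the final stationarity step already handles arbitrary initialization. Your identification of $\KKTgap(\vx^\star)=0$ with the Nash condition, via multilinearity and the exact-potential identity, spells out what the paper leaves implicit.

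One optional addition: if $\ftrl$ in the game means each player runs it on their own utility, note that $\nabla_{\vx_i}u_i(\vx)-\nabla_{\vx_i}\Phi(\vx)$ is a multiple of the all-ones vector. Euclidean projection onto $\Delta(A_i)$ ignores such shifts, so by uniqueness of solutions the two dynamics trace the same trajectory.
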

\fi


\end{document}